\newcommand{\footremember}[2]{%
   \footnote{#2}
    \newcounter{#1}
    \setcounter{#1}{\value{footnote}}%
}
\newcommand{\footrecall}[1]{%
    \footnotemark[\value{#1}]%
} 
\newtheorem{theorem}{Theorem}[section]
\newtheorem{lemma}[theorem]{Lemma}
\newtheorem{corollary}[theorem]{Corollary}
\newtheorem{definition}[theorem]{Definition}
\newtheorem{remark}[theorem]{Remark}
\newtheorem{notation}[theorem]{Notation}
\newcommand{\comments}[1]{}
\newcommand{\defeq}{\mathrel{\mathop:}=}
\newcommand{\hide}[1]{}
 \gdef\xxxmark{%
   \expandafter\ifx\csname @mpargs\endcsname\relax 
     \expandafter\ifx\csname @captype\endcsname\relax 
       \marginpar{xxx}
     \else
       xxx 
     \fi
   \else
     xxx 
   \fi}
 \gdef\xxx{\@ifnextchar[\xxx@lab\xxx@nolab}
 \long\gdef\xxx@lab[#1]#2{{\bf [\xxxmark #2 ---{\sc #1}]}}
 \long\gdef\xxx@nolab#1{{\bf [\xxxmark #1]}}
\newfont{\aufnt}{phvr8t at 12pt}
\newfont{\affaddr}{phvr8t at 10pt}
\newtheorem*{rep@theorem}{\rep@title}
\newcommand{\newreptheorem}[2]{%
\newenvironment{rep#1}[1]{%
 \def\rep@title{#2 \ref{##1}}%
 \begin{rep@theorem}}%
 {\end{rep@theorem}}}
\begin{document}


\title{Linear Programming Decoding of Spatially Coupled Codes}
\author{%
    Louay Bazzi\footremember{alley}{Department of Electrical and Computer Engineering, American University of Beirut, Beirut, Lebanon.}%
    \and Badih Ghazi\footrecall{alley}%
    \and R\"udiger Urbanke\footremember{trailer}{ECE Department, EPFL, Lausanne, Switzerland.}%
}
\date{{\tt lb13@aub.edu.lb, bbg01@mail.aub.edu, ruediger.urbanke@epfl.ch}}
\maketitle
\begin{abstract}
For a given family of spatially coupled codes, we prove that the LP threshold on the BSC of the graph cover ensemble is the same as the LP threshold on the BSC of the derived spatially coupled ensemble. This result is in contrast with the fact that the BP threshold of the derived spatially coupled ensemble is believed to be larger than the BP threshold of the graph cover ensemble \cite{kudekar2011threshold}, \cite{kudekar2012spatially}. To prove this, we establish some properties related to the dual witness for LP decoding which was introduced by \cite{feldman2007lp} and simplified by \cite{daskalakis2008probabilistic}. More precisely, we prove that the existence of a dual witness which was previously known to be sufficient for LP decoding success is also necessary and is equivalent to the existence of certain acyclic hyperflows. We also derive a sublinear (in the block length) upper bound on the weight of any edge in such hyperflows, both for regular LPDC codes and for spatially coupled codes and we prove that the bound is asymptotically tight for regular LDPC codes. Moreover, we show how to trade crossover probability for ``LP excess'' on all the variable nodes, for any binary linear code.
\end{abstract}





\tableofcontents

\section[Introduction]{\Large{\bf Introduction}}

\subsection{Binary linear codes}\label{le:binarylinearcodes}
A binary linear code $\zeta$ of block length $n$ is a subspace of the $\mathbb{F}_{2}$-vector space $\mathbb{F}_{2}^n$. The $\epsilon$-BSC (Binary Symmetric Channel) with input $X \in \mathbb{F}_{2}^n$ and output $Y \in \mathbb{F}_{2}^n$ flips each input bit independently with probability $\epsilon$. Let $\gamma$ be the log-likelihood ratio vector which is given by $\gamma_{i}=\log{\big(\frac{p_{Y_{i}|X_{i}}(y_{i}|0)} {p_{Y_{i}|X_{i}}(y_{i}|1)}\big)} = (-1)^{y_i}\log{\frac{1-\epsilon}{\epsilon}}$ for any $i \in \{1,\dots,n\}$. The optimal decoder is the Maximum Likelihood (ML) decoder which is given by
\begin{align*}
\hat{x}_{ML} &= \underset{x \in \zeta}{\operatorname{argmax}}{~ p_{Y|X}(y|x)} = \underset{x \in \zeta}{\operatorname{argmax}}{~ \prod_{i=1}^{n} p_{Y_{i}|X_{i}}(y_{i}|x_{i})} = \underset{x \in \zeta}{\operatorname{argmax}}{~ \frac{ \prod_{i=1}^{n} p_{Y_{i}|X_{i}}(y_{i}|x_{i})}  { \prod_{i=1}^{n} p_{Y_{i}|X_{i}}(y_{i}|0)} }\\ 
&= \underset{x \in \zeta}{\operatorname{argmax}}{~ \log{\bigg(\prod_{i=1}^{n} \frac{p_{Y_{i}|X_{i}}(y_{i}|x_{i})} {p_{Y_{i}|X_{i}}(y_{i}|0)}\bigg)}} = \underset{x \in \zeta}{\operatorname{argmax}}{~ \displaystyle\sum\limits_{i=1}^{n} \log{\bigg(\frac{p_{Y_{i}|X_{i}}(y_{i}|x_{i})} {p_{Y_{i}|X_{i}}(y_{i}|0)}\bigg)}} = \underset{x \in \zeta}{\operatorname{argmin}}{~ \displaystyle\sum\limits_{i=1}^{n} \gamma_{i}x_{i}}
\end{align*}
where the second equality follows from the fact that the channel is memoryless. Since the objective function is linear in $x$, replacing $\zeta$ by the convex span $conv(\zeta)$ of $\zeta$\comments{\footnote{The convex hull of a subset $S$ of $\mathbb{R}^{n}$ is defined to be $conv(S) = \{ \alpha x + (1-\alpha)y ~ | ~ x,y \in S \text{ and } \alpha \in [0,1] \}$.}} does not change the value of the minimal solution. Hence, we get
\begin{equation}\label{le:ML_dec_LP}
\hat{x}_{ML} = \underset{x \in conv(\zeta)}{\operatorname{argmin}}{~ \displaystyle\sum\limits_{i=1}^{n} \gamma_{i}x_{i}}
\end{equation}
ML decoding is known to be NP-hard for general binary linear codes \cite{berlekamp1978inherent}. This motivates the study of suboptimal decoding algorithms that have small running times.

\subsection{Linear programming decoding}
LP (Linear Programming) decoding was introduced by \cite{feldman2005using} and is based on the idea of replacing $conv(\zeta)$ in (\ref{le:ML_dec_LP}) with a larger subset of $\mathbb{R}^n$, with the goal of reducing the running time while maintaining a good error correction performance. First, note that $conv(\zeta) = conv(\underset{j \in C} \bigcap \zeta_{j})$ where $\zeta_{j} = \{ z \in \{0,1\}^{n}: \text{ $w(z|_{N(j)})$ is even}\}$\footnote{For $x \in \{0,1\}^n$ and $S \subseteq \{1,\dots,n\}$, $x|_S \in \{0,1\}^{n}$ denotes the restriction of $x$ to $S$ i.e. $(x|_S)_i = x_i$ if $i \in S$ and $(x|_S)_i = 0$ otherwise, and $w(x)$ denotes the Hamming weight of $x$.} for all $j$ in the set $C$ of check nodes corresponding to a fixed Tanner graph of $\zeta$ and where $N(j)$ is the set of all neighbors of check node $j$. Then, LP decoding is given by relaxing $conv(\underset{j \in C} \bigcap \zeta_{j})$ to $\underset{j \in C} \bigcap conv(\zeta_{j})$:
\begin{equation}\label{le:relaxed_LP}
\hat{x}_{LP} = \underset{x \in P}{\operatorname{argmin}}{~ \displaystyle\sum\limits_{i=1}^{n} \gamma_{i}x_{i}}
\end{equation}
where $P = \underset{j \in C} \bigcap conv(\zeta_{j})$ is the so-called ``fundamental polytope'' that will be carefully considered in the proof of Theorem \ref{le:existencedualwitness}. A central property of $P$ is that it can be described by a linear number of inequalities, which means that the linear program (\ref{le:relaxed_LP}) can be solved in time polynomial in $n$ using the ellipsoid algorithm or interior point methods.\\ 
When analyzing the operation of LP decoding, one can assume that the all-zeros codeword was transmitted \cite{feldman2005using}. Then, by normalizing the expression for the log-likelihood ratio $\gamma$ given in Section \ref{le:binarylinearcodes} by the positive constant $\log(\frac{1-\epsilon}{\epsilon})$, we can assume that the log-likelihood ratio is given by $\gamma_{i}=1$ if $y_{i} = 0$ and $\gamma_{i}=-1$ if $y_{i} = 1$ for all $i \in \{1,\dots,n\}$. As in previous work, we make the conservative assumption that LP decoding fails whenever there are multiple optimal solutions to the linear program (\ref{le:relaxed_LP}). In other words, under the all zeros assumption, LP decoding succeeds if and only if the zero codeword is the unique optimal solution to the linear program (\ref{le:relaxed_LP}). In order to show that LP decoding corrects a constant fraction of errors when the Tanner graph has sufficient expansion, \cite{feldman2007lp} introduced the concept of a dual witness, which is a dual feasible solution with zero cost and with a given set of constraints having a positive slack. By complementary slackness, it follows that the existence of a dual witness implies LP decoding success \cite{feldman2007lp}. A simplified (but equivalent) version of this dual witness, called a hyperflow, was introduced in \cite{daskalakis2008probabilistic} (and later generalized in \cite{halabi2012linear}) and used to prove that LP decoding can correct a larger fraction of errors in a probabilistic setting. This hyperflow will be described in Section \ref{le:LPandgraphs}. However, it was unkown whether the existence of a hyperflow (or equivalently that of a dual witness) is necessary for LP decoding success. We will show, by careful consideration of the fundamental polytope $P$, that this is indeed the case.

\subsection{Spatially coupled codes}

The idea of spatial coupling has been recently used in coding theory, compressive sensing and other fields. Spatially coupled codes (or convolutional LDPC codes) were introduced in \cite{jimenez1999time}. Recently, \cite{kudekar2011threshold} showed that the BP threshold of spatially coupled codes is the same as the MAP (Maximum Aposteriori Probability) threshold of the base LDPC code in the case of the Binary Erasure Channel (BEC). Moreover, \cite{kudekar2012spatially} showed that spatially coupled codes achieve capacity under belief propagation. In compressive sensing, \cite{krzakala2012statistical} and \cite{donoho2012information} showed that spatial coupling can be used to design dense sensing matrices that achieve the same peformance as the optimal $l_{0}$-norm minimizing compressive sensing decoder. In coding theory, the intuition behind the improvement in performance due to spatial coupling is that the check nodes located at the boundaries have low degrees which enables the BP algorithm to initially recover the transmitted bits at the boundaries. Then, the other transmitted bits are progressively recovered from the boundaries to the center of the code. A similar intuition is behind the good performance of spatial coupling in compressive sensing \cite{donoho2012information}.

\subsection{The conjecture}
It was reported by \cite{Davbur} that, based on numerical simulations, spatial coupling does not seem to improve the performance of LP decoding. This lead to the conjecture that the LP threshold of a spatially coupled ensemble on the BSC is the same as that of the base ensemble. A natural approach to prove this claim is twofold:
\begin{enumerate}
\item Show that the LP threshold of the spatially coupled ensemble on the BSC is the same as that of the graph cover ensemble.
\item Show that the LP threshold of the graph cover ensemble on the BSC is the same as that of the base ensemble.
\end{enumerate}

\subsection{Contributions}
We prove the first part of the conjecture. To do so, we prove some general results about LP decoding of LDPC codes that may be of independent interest.
\begin{enumerate}
\item We prove that the existence of a dual witness which was previously known to be sufficient for LP decoding success is also necessary and is equivalent to the existence of certain acyclic hyperflows (Theorem \ref{le:existencedualwitness}).
\item We derive a sublinear (in the block length) upper bound on the weight of any edge in the hyperflow, for regular LDPC codes (Theorem \ref{le:maxweight}) and spatially coupled codes (Theorem \ref{le:maxweightsc}). In the regular case, we show that our bound is asymptotically tight (Theorem \ref{le:asymptotictightness}).
\item We show how to trade crossover probability for ``LP excess'' on all the variable nodes, for any binary linear code (Theorem \ref{le:interplaytheorem}).
\end{enumerate}
We leave the second part of the conjecture open.

\subsection{Outline}
The paper is organized as follows. In Section \ref{le:main_result_section}, we formally state the main result of the paper. In Section \ref{le:LPandgraphs}, we prove that the existence of a dual witness which was previously known to be sufficient for LP decoding success is also necessary and is equivalent to the existence of certain weighted directed acyclic graphs. In Section \ref{le:transformWDAG}, we show how to transform those weighted directed acyclic graphs into weighted directed forests while preserving their central properties. In Section \ref{le:maxweightregular}, we prove, using the result of Section \ref{le:transformWDAG}, a sublinear (in the block length) upper bound on the weight of any edge in such graphs, for regular codes. An analogous upper bound is proved in Section \ref{le:maxweightsc_section} for spatially coupled codes. In Section \ref{le:relation_section}, we relate LP decoding on a graph cover code and on a spatially coupled code. In Section \ref{le:interplay_section}, we show how to trade crossover probability for ``LP excess'' on all the variable nodes, for any binary linear code. The results of Sections \ref{le:maxweightsc_section}, \ref{le:relation_section} and \ref{le:interplay_section} are finally used in Section \ref{le:proof_main_result_section} where we prove the main result of the paper.

\subsection{Notation and terminology}
We denote the set of all non-negative integers by $\mathbb{N}$. For any integers $n,a,b$ with $n \geq 1$, we denote by $[n]$ the set $\{1, \dots ,n\}$ and by $[a:b]$ the set $\{a, \dots ,b\}$. For any event $A$, let $\overline{A}$ be the complement of $A$. For any vertex $v$ of a graph $G$, we let $N(v)$ denote the set of all neighbors of $v$ in $G$. For any $x \in \{0,1\}^n$ and any $S \subseteq [n]$, let $x|_S \in \{0,1\}^{n}$ s.t. $(x|_S)_i = x_i$ if $i \in S$ and $(x|_S)_i = 0$ otherwise. A binary linear code $\zeta$ can be fully described as the nullspace of a matrix $H \in \mathbb{F}_{2}^{(n-k) \times n}$, called the parity check matrix of $\zeta$. For a fixed $H$, $\zeta$ can be graphically represented by a Tanner graph $(V,C,E)$ which is a bipartite graph where $V=\{v_{1},\dots,v_{n}\}$ is the set of variable nodes, $C=\{c_{1},\dots,c_{n-k}\}$ is the set of check nodes and for any $i \in [n]$ and any $j \in [n-k], ~ (v_{i},c_{j}) \in E$ if and only if $H_{j,i}=1$. If $H$ is sparse, then $\zeta$ is called a Low Density Parity Check (LDPC) code. LDPC codes were introduced and first analyzed by Gallager \cite{gallager1962low}. If the number of ones in each column of $H$ is $d_{v}$ and the number of ones in each row of $H$ is $d_{c}$, $\zeta$ is called a $(d_{v},d_{c})$-regular code. We let $\hat{d_{v}}=(d_{v}-1)/2$. Throughout the paper, we assume that $n,d_{c},d_{v} > 2$.

\section[Main result]{\Large{\bf Main result}}\label{le:main_result_section}
First, we define the spatially coupled codes under consideration.
\begin{definition}\label{le:spatiallycoupledensemble} (Spatially coupled code)\\ 
A $(d_{v},d_{c}=kd_{v},L,M)$ spatially coupled code, with $d_{v}$ an odd integer and $M$ divisible by $k$, is constructed by considering the index set $[-L-\hat{d_{v}}:L+\hat{d_{v}}]$ and satisfying the following conditions:\footnote{Informally, $2L+1$ is the number of ``layers'' and $M$ is the number of variable nodes per ``layer''.}
\begin{enumerate}
\item $M$ variable nodes are placed at each position in $[-L:L]$ and $M \frac{d_{v}}{d_{c}}$ check nodes are placed at each position in $[-L-\hat{d_{v}}:L+\hat{d_{v}}]$.
\item For any $j \in [-L+\hat{d_{v}}:L-\hat{d_{v}}]$, a check node at position $j$ is connected to $k$ variable nodes at position $j+i$ for all $i \in [-\hat{d_{v}}:\hat{d_{v}}]$.
\item For any $j \in [-L-\hat{d_{v}}:-L+\hat{d_{v}}-1]$, a check node at position $j$ is connected to $k$ variable nodes at position $i$ for all $i \in [-L:j+\hat{d_{v}}]$.
\item For any $j \in [L-\hat{d_{v}}+1:L+\hat{d_{v}}]$, a check node at position $j$ is connected to $k$ variable nodes at position $i$ for all $i \in [j-\hat{d_{v}}:L]$.
\item\label{le:additional_constraint} No two check nodes at the same position are connected to the same variable node.
\end{enumerate}
\end{definition}

With the exception of the non-degeneracy condition \ref{le:additional_constraint}, Definition \ref{le:spatiallycoupledensemble} above is the same as that given in Section II-A of \cite{kudekar2011threshold}. We next define the graph cover codes under consideration which are similar to the tail-biting LDPC convolutional codes introduced by \cite{tavares2007tail}.
\begin{definition}\label{le:graphcoverensemble} (Graph cover code)\\ 
A $(d_{v},d_{c}=kd_{v},L,M)$ graph cover code, with $d_{v}$ an odd integer and $M$ divisible by $k$, is constructed by considering the index set $[-L:L]$ and satisfying the following conditions:
\begin{enumerate}
\item $M$ variable nodes and $M \frac{d_{v}}{d_{c}}$ check nodes are placed at each position in $[-L:L]$.
\item For any $j \in [-L:L]$, a check node at position $j$ is connected to $k$ variable nodes at position $(j+i) \mod [-L:L]$ for all $i \in [-\hat{d_{v}}:\hat{d_{v}}]$.
\item No two check nodes at the same position are connected to the same variable node.
\end{enumerate}
\end{definition}

Note that ``cutting'' a  graph cover code at any position $i \in [-L:L]$ yields a spatially coupled code. This motivates the following definition.
\begin{definition}\label{le:derivedpsatiallycoupledcodes} (Derived spatially coupled codes)\\ 
Let $\zeta$ be a $(d_{v},d_{c}=kd_{v},L,M)$ graph cover code. For each $i \in [-L:L]$, the $(d_{v},d_{c}=kd_{v},L-\hat{d_{v}},M)$ spatially coupled code $\zeta'_{i}$ is obtained from $\zeta$ by removing all $M$ variable nodes and their adjacent edges at each position $i+j \mod [-L:L]$ for every $j \in [0:2\hat{d_{v}}-1]$. Then, $\mathcal{D}(\zeta) = \{\zeta'_{-L}, \dots ,\zeta'_{L}\}$ is the set of all $2L+1$ derived spatially coupled codes of $\zeta$.
\end{definition}

\begin{definition}(Ensembles and Thresholds)\\ 
Let $\Gamma$ be an ensemble i.e a probability distribution over codes. The LP threshold $\xi$ of $\Gamma$ on the BSC is defined as $\xi = \sup \{\epsilon > 0 ~ | ~  Pr_{\zeta \sim \Gamma \atop \epsilon\text{-}BSC}[\text{LP error on }\zeta] = o(1)\}$.
\end{definition}

We are now ready to state the main result of this paper.
\begin{theorem}\label{le:equalitytheorem} (Main result: $\xi_{GC} = \xi_{SC}$)\\ 
Let $\Gamma_{GC}$ be a $(d_{v},d_{c}=kd_{v},L,M)$ graph cover ensemble with $d_{v}$ an odd integer and $M$ divisible by $k$. Let $\Gamma_{SC}$ be the $(d_{v},d_{c}=kd_{v},L-\hat{d_{v}},M)$ spatially coupled ensemble which is sampled by choosing a graph cover code $\zeta \sim \Gamma_{GC}$ and returning a element of $\mathcal{D}(\zeta)$ chosen uniformly at random\footnote{Here, $\mathcal{D}(\zeta)$ refers to Definition \ref{le:derivedpsatiallycoupledcodes}.}. Denote by $\xi_{GC}$ and $\xi_{SC}$ the respective LP threholds of $\Gamma_{GC}$ and $\Gamma_{SC}$ on the BSC. There exists $\nu>0$ depending only on $d_{v}$ and $d_{c}$ s.t. if $M = o(L^{\nu})$ and $\Gamma_{SC}$ satisfies the property that for any constant $\Delta > 0$, 
\begin{equation}\label{le:needed_condition}
Pr_{\zeta' \sim \Gamma_{SC} \atop (\xi_{SC} - \Delta)\text{-}BSC}[\text{LP error on }\zeta'] = o(\frac{1}{L^2})
\end{equation}
Then, $\xi_{GC} = \xi_{SC}$.
\end{theorem}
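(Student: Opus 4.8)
\emph{Overall plan.} I would prove the two inequalities $\xi_{SC}\ge\xi_{GC}$ and $\xi_{GC}\ge\xi_{SC}$ separately. The first is the easy one and uses neither \eqref{le:needed_condition} nor the bound on $M$: each derived code $\zeta'_i\in\mathcal D(\zeta)$ is gotten from $\zeta$ by deleting the variable nodes --- equivalently, the columns of the parity check matrix --- at the $d_v-1$ positions removed in Definition~\ref{le:derivedpsatiallycoupledcodes}, call this set $S_i$, so $\zeta'_i$ is the \emph{shortening} of $\zeta$ at $S_i$. Working from the description $P=\bigcap_{j}\mathrm{conv}(\zeta_j)$ of the fundamental polytope, one checks that the fundamental polytope of $\zeta'_i$ is exactly the image, on the surviving coordinates, of the face $\{x\in P : x|_{S_i}=0\}$ of $P$; since the LP objective on $\zeta'_i$ is the restriction of the objective on $\zeta$ and $x\ge 0$ on $P$, the all-zeros word being the unique LP optimum on $\zeta$ for an output $y$ forces it to be the unique LP optimum on $\zeta'_i$ for $y|_{[n]\setminus S_i}$, which has the law of an $\epsilon$-BSC output for $\zeta'_i$. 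Hence (all probabilities over the $\epsilon$-BSC)
\[
\Pr_{\zeta'\sim\Gamma_{SC}}[\text{LP error on }\zeta']
=\E_{\zeta\sim\Gamma_{GC}}\,\E_{i}\,\Pr_{y}[\text{LP error on }\zeta'_i]
\;\le\; \Pr_{\zeta\sim\Gamma_{GC}}[\text{LP error on }\zeta],
\]
so $\Pr_{\Gamma_{GC}}[\cdot]=o(1)$ implies $\Pr_{\Gamma_{SC}}[\cdot]=o(1)$, i.e.\ $\xi_{SC}\ge\xi_{GC}$.

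\emph{The hard inequality $\xi_{GC}\ge\xi_{SC}$.} Fix $\epsilon<\xi_{SC}$ and set $\Delta=(\xi_{SC}-\epsilon)/2>0$, so $\epsilon=\xi_{SC}-2\Delta<\xi_{SC}-\Delta$. By Theorem~\ref{le:existencedualwitness} it is enough to produce, with probability $1-o(1)$ over $\zeta\sim\Gamma_{GC}$ and the $\epsilon$-BSC output $y$, an acyclic hyperflow on $\zeta$ w.r.t.\ $y$. First I would note that LP success is downward closed in the set of flipped bits: if the all-zeros word is the unique optimum for $\gamma$ and $\gamma'\ge\gamma$ coordinatewise, then on $P$ one has $0\le\gamma\cdot x\le\gamma'\cdot x$, so $\gamma'\cdot x=0\Rightarrow\gamma\cdot x=0\Rightarrow x=0$; hence $\Pr_{\epsilon\text{-}BSC}[\text{LP error}]$ is nondecreasing in $\epsilon$ and \eqref{le:needed_condition} upgrades to $\Pr_{\Gamma_{SC},\,\epsilon\text{-}BSC}[\text{LP error}]=o(1/L^2)$. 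Feeding this into Theorem~\ref{le:interplaytheorem} (trade the gap $\Delta$ for LP excess) and then Theorem~\ref{le:maxweightsc} (thin the resulting flow) I would obtain a constant $\delta=\delta(\Delta,d_v,d_c)>0$ and a bound $B$ that is sublinear in the block length $\Theta(LM)$ such that, for $\zeta'\sim\Gamma_{SC}$ and an $\epsilon$-BSC output, with probability $1-o(1/L^2)$ there is an acyclic hyperflow on $\zeta'$ with LP excess $\ge\delta$ on \emph{every} variable node and all edge weights $\le B$. Unfolding the sampling of $\Gamma_{SC}$ gives $\E_{\zeta\sim\Gamma_{GC}}\big[\sum_{i}\Pr_{y}[\text{no such hyperflow on }\zeta'_i]\big]=o(1/L)$, so by Markov a $1-o(1)$ fraction of graph cover codes $\zeta$ satisfy $\sum_{i}\Pr_{y}[\cdot]=o(1)$, and for any such $\zeta$, with probability $1-o(1)$ over $y$ some cut $i^\ast$ has the property that $\zeta'_{i^\ast}$ carries a hyperflow $F$ with excess $\ge\delta$ everywhere and edge weights $\le B$ (w.r.t.\ $y|_{[n]\setminus S_{i^\ast}}$).

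\emph{Gluing, and the main obstacle.} The crux is then to turn $F$ into an acyclic hyperflow on the full graph cover code $\zeta$. The only difference between $\zeta$ and $\zeta'_{i^\ast}$ is that $\zeta$ also contains the $\Theta(Md_v)$ variable nodes at the positions $S_{i^\ast}$ and the edges re-attaching the ``boundary'' checks of $\zeta'_{i^\ast}$ (those incident to $S_{i^\ast}$ in $\zeta$) to them. I would keep $F$ on the old edges and install flow on the new ones so that flow conservation is restored at every re-attached check and every new variable node is covered; this forces a redistribution of the boundary checks' flow back into the bulk of $\zeta'_{i^\ast}$, charged against the $\delta$-excess of bulk variable nodes. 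Since each edge of $F$ carries at most $B$ and there are only $\Theta(Md_v)$ new variable nodes, the excess consumed should be controlled by $O(MB)$ (up to factors depending only on $d_v,d_c$), whereas a window of $\Theta(L)$ bulk positions around the cut supplies excess $\Omega(\delta LM)$; because $B$ is sublinear in $\Theta(LM)$, the hypothesis $M=o(L^{\nu})$ --- with $\nu=\nu(d_v,d_c)$ the exponent coming out of Theorem~\ref{le:maxweightsc} --- would make the available excess beat the cost, so the extension goes through, and deleting any cycles it creates (which only lowers edge weights) gives an acyclic hyperflow on $\zeta$; by Theorem~\ref{le:existencedualwitness}, LP decoding of $\zeta$ then succeeds on $y$. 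This yields $\Pr_{\Gamma_{GC},\,\epsilon\text{-}BSC}[\text{LP error}]=o(1)$ for every $\epsilon<\xi_{SC}$, hence $\xi_{GC}\ge\xi_{SC}$ and, with the easy direction, $\xi_{GC}=\xi_{SC}$. I expect the gluing to be the hard part: one must write down the explicit augmenting flow that re-establishes conservation at the re-attached checks without violating any facet inequality of the fundamental polytope, bound precisely how far into the bulk a unit of boundary demand has to be routed and how much excess it eats en route (this is exactly what the sublinear edge-weight bound of Theorem~\ref{le:maxweightsc} and the passage to forests of Section~\ref{le:transformWDAG} are for), and keep everything acyclic. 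The two quantitative hypotheses are tailored to this: the $o(1/L^2)$ in \eqref{le:needed_condition} leaves room for the union bound over the $\Theta(L)$ cuts and the Markov step over codes, and $M=o(L^\nu)$ guarantees that the $\Theta(L)$-fold excess budget dominates the sublinear routing cost.
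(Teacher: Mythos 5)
Your easy direction is fine (you argue it on the primal polytope; the paper's Lemma \ref{le:lessthanorequal} does the same thing on the dual side by deleting the special variable nodes from a dual witness), and for the hard direction you have correctly assembled all the supporting results --- Theorem \ref{le:interplaytheorem} for the excess, Theorem \ref{le:maxweightsc} for the sublinear edge bound $B$, the Markov-plus-union-bound bookkeeping that explains the $o(1/L^2)$ in (\ref{le:needed_condition}), and even the right budget arithmetic $MB = o(\delta L M)$ giving $\nu = \epsilon/(1-\epsilon)$. But the step you defer --- ``gluing'' the hyperflow of a single cut $\zeta'_{i^\ast}$ back into $\zeta$ by routing the boundary demand into the bulk --- is the entire difficulty, and as stated it has a genuine gap. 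Reattaching the special variable nodes to a boundary check $c$ forces each of them, under the hyperflow structure (\ref{le:hyperflow_check_equation}), to supply $P_c \le B$ on each of its $d_v$ edges, i.e.\ an outflow of order $d_v B \gg 1$ against a budget of $\gamma(v)=\pm 1$; worse, a \emph{flipped} special variable node needs net \emph{inflow} exceeding $1$, so some boundary checks must be re-oriented toward it. Discharging this into the bulk excess is not a local operation: pushing $\theta$ units backward through a $d_c$-regular check multiplies the demand by $d_c-1$, and each bulk variable node can only absorb $\delta$, so you would have to design and verify a genuinely nontrivial augmenting flow deep into the code while keeping every pairwise check inequality (\ref{le:dw_check_equation}) and acyclicity. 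You acknowledge this is ``the hard part'' but do not carry it out, and it is not clear it can be made to work as a single-cut local repair.

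The paper avoids routing entirely with an averaging trick you are missing. For \emph{every} cut $l$ (hence the union bound over all $2L+1$ cuts, not just one $i^\ast$), Corollary \ref{le:relationspatiallycover} produces a \emph{relaxed} dual witness $\tau^l$ on the full edge set of $\zeta$: the special variable nodes of cut $l$ are allowed to violate (\ref{le:dw_var_equation}) by up to $d_v c n^{1-\epsilon}+1$, while all other nodes have excess $\delta/2$. Setting $\tau^{avg}=\frac{1}{2L+1}\sum_l \tau^l$, the check constraints survive because they are linear inequalities closed under convex combination, and at each variable node only $d_v-1$ of the $2L+1$ terms are ``bad,'' so the violation is diluted to $O\bigl(M^{1-\epsilon}L^{-\epsilon}\bigr)$ and absorbed by the $\delta/2$ excess of the remaining terms precisely when $M=o(L^{\epsilon/(1-\epsilon)})$. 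No flow ever has to be rerouted, and no single glued witness needs to be valid on its own. To complete your proof you would either need to execute the routing argument in full (and justify why it terminates within the excess budget), or replace it with this averaging step.
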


Note that for $M=\omega(\log{L})$, condition (\ref{le:needed_condition}) above is expected to hold for the spatially coupled ensemble $\Gamma_{SC}$ since under typical decoding algorithms, the error probability on the $(\xi_{SC} - \Delta)$-BSC is expected to decay to zero as $O(L e^{-c \times \Delta^{2} \times M})$ for some constant $c>0$. Moreover, note that in the regime $M=\Theta(L^{\delta})$ (for any positive constant $\delta$), spatial coupling provides empirical improvements under iterative decoding and in fact, the improvement is expected to take place as long as $L$ is subexponential in $M$ \cite{olmos2011scaling}. 

\section[LP decoding, dual witnesses, hyperflows and WDAGs]{\Large{\bf LP decoding, dual witnesses, hyperflows and WDAGs}}\label{le:LPandgraphs}

The following definition is based on Definition 1 of \cite{feldman2007lp}.

\begin{definition}\label{le:dualwitness} (Dual witness)\\ 
For a given Tanner graph $\mathcal{T}=(V,C,E)$ and a (possibly scaled) log-likelihood ratio function $\gamma: V \to \mathbb{R}$, a dual witness $w$ is a function $w: E \to \mathbb{R}$ that satisfies the following 2 properties:
\begin{equation}\label{le:dw_var_equation}
\forall v \in V, {\displaystyle\sum\limits_{c \in N(v):w(v,c)>0} w(v,c)} < {\displaystyle\sum\limits_{c \in N(v):w(v,c) \le 0} (-w(v,c))} + \gamma(v)
\end{equation}

\begin{equation}\label{le:dw_check_equation}
\forall c \in C, \forall v, v' \in N(c), ~ w(v,c) + w(v',c) \geq 0
\end{equation}
\end{definition}

The following theorem relates the existence of a dual witness to LP decoding success. The fact that the existence of a dual witness implies LP decoding success was shown in \cite{feldman2007lp}. We prove that the converse of this statement is also true. This converse will be used in the proof of Theorem \ref{le:interplaytheorem}.
\begin{theorem}\label{le:existencedualwitness} (Existence of a dual witness and LP decoding success) \\ 
Let $\mathcal{T}=(V,C,E)$ be a Tanner graph of a binary linear code with block length $n$ and let $\eta \in \{0,1\}^{n}$ be any error pattern. Then, there is LP decoding success for $\eta$ on $\mathcal{T}$ if and only if there is a dual witness for $\eta$ on $\mathcal{T}$.
\end{theorem}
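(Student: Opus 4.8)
The "if" direction (dual witness $\Rightarrow$ LP success) is classical and due to \cite{feldman2007lp}: given $w$ satisfying (\ref{le:dw_var_equation})–(\ref{le:dw_check_equation}), one checks that $w$ induces a feasible dual solution to the LP (\ref{le:relaxed_LP}) with zero cost, and that the strict inequalities (\ref{le:dw_var_equation}) force, via complementary slackness, the all-zeros codeword to be the unique optimum. So the real work is the converse: assuming the all-zeros codeword is the \emph{unique} optimal vertex of the LP over the fundamental polytope $P$, we must exhibit a function $w: E \to \mathbb{R}$ satisfying (\ref{le:dw_var_equation})–(\ref{le:dw_check_equation}). The plan is to do this by LP duality applied to the fundamental polytope: write the primal LP with the explicit inequality description of $P$ (the forbidden-set inequalities $\sum_{i \in S} x_i - \sum_{i \in N(c)\setminus S}x_i \le |S|-1$ for each check $c$ and each odd-size subset $S \subseteq N(c)$, together with the box constraints $0 \le x_i \le 1$), take its dual, and translate the dual optimal solution into the combinatorial quantity $w$.

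First I would set up the primal LP precisely: minimize $\sum_i \gamma(v_i) x_i$ subject to the box constraints and the forbidden-set constraints, where $x = 0$ is feasible with objective value $0$. Uniqueness of the zero-codeword optimum is equivalent to the statement that the objective vector $\gamma$ lies strictly in the interior of the normal cone to $P$ at the vertex $0$. By the theory of linear programming (or Farkas/the strong duality theorem with strict complementarity), this is in turn equivalent to the existence of a \emph{strictly} complementary dual certificate: nonnegative multipliers on the active-at-$0$ constraints whose weighted combination equals $-\gamma$, with the combination being strictly "interior" in the appropriate sense. The active constraints at $x=0$ are exactly the lower box constraints $x_i \ge 0$ for all $i$ and those forbidden-set constraints with $|S|=1$, i.e. $x_i \le$ (something) — actually it is cleaner to keep all forbidden-set constraints and the full box. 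I would then collect the dual variables attached to the forbidden-set constraints at each check $c$ and, following the reduction of \cite{daskalakis2008probabilistic}, contract them into a single edge-labelling: for each check $c$ and each incident variable $v$, define $w(v,c)$ as a signed sum over the dual weights of the odd sets $S$ containing (or not containing) $v$. A short computation shows that the per-check feasibility of the dual — which says the dual weights are nonnegative and the box-constraint multipliers are nonnegative — translates exactly into the pairwise inequality (\ref{le:dw_check_equation}), while dual feasibility at each variable node (the equation forcing the weighted combination to match $\gamma(v)$) together with strict complementarity translates into the strict inequality (\ref{le:dw_var_equation}).

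The main obstacle, and where I would spend the most care, is the "strictness" bookkeeping: ordinary LP duality gives a dual feasible $w$ with $\le$ in place of $<$ in (\ref{le:dw_var_equation}), i.e. a dual witness for LP \emph{success-or-tie}, but the theorem (under the paper's conservative convention that ties count as failures) demands the strict version. To get strictness I would use that $0$ is the \emph{unique} optimum: this means no nonzero point of $P$ has zero objective value, which by a separating-hyperplane / perturbation argument lets one push the dual solution slightly into the interior — concretely, one first finds any dual certificate, then argues that if (\ref{le:dw_var_equation}) held with equality at some $v$ one could construct a nonzero optimal primal point (a small move in a direction allowed by the tight constraints), contradicting uniqueness; iterating or taking a suitable convex combination removes all the equalities. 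A secondary technical point is handling the exponentially many forbidden-set constraints: one must confirm that only finitely many dual variables are nonzero (guaranteed since $P$ is a polytope and the dual optimum is attained at a vertex) and that the contraction into $w$ is well-defined and invertible enough to recover (\ref{le:dw_var_equation})–(\ref{le:dw_check_equation}); this is exactly the "careful consideration of the fundamental polytope $P$" advertised in the introduction. I would also double-check the degenerate-support cases (variables or checks of small degree, which the paper's standing assumption $n, d_c, d_v > 2$ is presumably there to tame).
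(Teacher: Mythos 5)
Your proposal is essentially the paper's own argument: the paper likewise reduces unique LP optimality at the zero codeword to membership of $\widetilde{\eta}$ in the \emph{interior} of the dual (normal) cone of the fundamental polytope at $0$, identifies that cone as $({\mathbb{R}_{> 0}})^{n} + \sum_{j \in C} cone\{v_{i_{0},j} : i_{0} \in N(j)\}$, and reads off $w(i,j)$ from the nonnegative multipliers exactly as you describe, so your LP-duality/strict-complementarity framing on the forbidden-set description is the same computation in different clothing (the paper instead starts from the $conv(C_{j})$ description and proves the required cone identities, Lemmas \ref{le:intermediatelemma} and \ref{le:equalitylemma}, from scratch via extreme rays). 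One caveat: your claim that a single dual certificate with equality at some $v$ would already yield a nonzero primal optimum is not correct as stated (only the nonexistence of any certificate strict at $v$ puts $\widetilde{\eta}$ on the boundary of the normal cone); the clean strictness step is that interiority lets you write $\widetilde{\eta} = (\widetilde{\eta}-u) + u$ with $u \in ({\mathbb{R}_{>0}})^{n}$ and $\widetilde{\eta}-u$ still in the cone, which is precisely the paper's Lemma \ref{le:topologyneededclaim}.
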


\begin{proof}[{\bf Proof of Theorem \ref{le:existencedualwitness}}]
See Appendix \ref{le:app_ex_dual_witness}.
\end{proof}

\noindent The following definition is based on Definition 1 of \cite{daskalakis2008probabilistic}.

\begin{definition}\label{le:hyperflow} (Hyperflow)\\ 
For a given Tanner graph $\mathcal{T}=(V,C,E)$ and a (possibly scaled) log-likelihood ratio function $\gamma: V \to \mathbb{R}$, a hyperflow $w$ is a function $w: E \to \mathbb{R}$ that satisfies property (\ref{le:dw_var_equation}) above as well as the following property:
\begin{equation}\label{le:hyperflow_check_equation}
\forall c \in C,  \exists P_{c} \geq 0, \exists v \in N(c) \text{ s.t. } w(v,c) = -P_{c} \text{ and } \forall v' \in N(c) \text{ s.t. } v' \neq v, w(v',c) = P_{c}
\end{equation}
\end{definition}

By Proposition $1$ of \cite{daskalakis2008probabilistic}, the existence of a hyperflow is equivalent to that of a dual witness. Hence, by Theorem \ref{le:existencedualwitness} above, we get:
 
\begin{corollary}\label{le:existencehyperflow} (Existence of a hyperflow and LP decoding success) \\ 
Let $\mathcal{T}=(V,C,E)$ be a Tanner graph of a binary linear code with block length $n$ and let $\eta \in \{0,1\}^{n}$ be any error pattern. Then, there is LP decoding success for $\eta$ on $\mathcal{T}$ if and only if there is a hyperflow for $\eta$ on $\mathcal{T}$.
\end{corollary}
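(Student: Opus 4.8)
The plan is to obtain Corollary~\ref{le:existencehyperflow} by combining Theorem~\ref{le:existencedualwitness} with the equivalence between dual witnesses and hyperflows (Proposition~1 of \cite{daskalakis2008probabilistic}), proving the two implications separately and deferring essentially all of the difficulty to Theorem~\ref{le:existencedualwitness}. For the ``if'' direction, I would first note that every hyperflow is a dual witness: property~(\ref{le:dw_var_equation}) is literally the same, and the check-node property~(\ref{le:hyperflow_check_equation}) implies~(\ref{le:dw_check_equation}), since for any check $c$ and any $v,v'\in N(c)$, one of $w(v,c),w(v',c)$ equals $P_c\ge 0$ and the other is at least $-P_c$, so their sum is nonnegative. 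Hence the existence of a hyperflow for $\eta$ gives a dual witness for $\eta$, and the ``if'' part of Theorem~\ref{le:existencedualwitness} yields LP decoding success for $\eta$ on $\mathcal{T}$.

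For the ``only if'' direction, suppose there is LP decoding success for $\eta$ on $\mathcal{T}$. By the (new) ``only if'' part of Theorem~\ref{le:existencedualwitness}, there is a dual witness $w\colon E\to\mathbb{R}$ for $\eta$ on $\mathcal{T}$. I would then convert $w$ into a hyperflow exactly as in \cite{daskalakis2008probabilistic}: for each check $c$, pick $v^{\ast}\in N(c)$ minimizing $w(v,c)$ over $v\in N(c)$, and set $P_c=\max\{0,-w(v^{\ast},c)\}\ge 0$. Applying~(\ref{le:dw_check_equation}) to the pair $(v^{\ast},v')$ gives $w(v',c)\ge -w(v^{\ast},c)$, and minimality gives $w(v',c)\ge w(v^{\ast},c)$, so in all cases $w(v',c)\ge P_c$ for $v'\neq v^{\ast}$, while $w(v^{\ast},c)\ge -P_c$. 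Now redefine $w'(v^{\ast},c)=-P_c$ and $w'(v',c)=P_c$ for $v'\neq v^{\ast}$, for every check $c$. Each such change only lowers the value carried by an edge that currently carries at least its new value, and the new value of every edge other than the chosen $v^{\ast}$-edges is nonnegative; consequently, at every variable node, the positive sum $\sum_{c:\,w(v,c)>0}w(v,c)$ on the left of~(\ref{le:dw_var_equation}) does not increase and the term $\sum_{c:\,w(v,c)\le 0}(-w(v,c))$ on the right does not decrease, so the strict inequality~(\ref{le:dw_var_equation}) is preserved. By construction $w'$ satisfies~(\ref{le:hyperflow_check_equation}), hence $w'$ is a hyperflow for $\eta$ on $\mathcal{T}$, completing the equivalence.

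I expect the main obstacle to lie not in this corollary at all — both directions above are routine once Theorem~\ref{le:existencedualwitness} is available — but in the ``only if'' half of Theorem~\ref{le:existencedualwitness}, i.e.\ in showing that LP decoding success \emph{forces} a dual witness to exist. The natural route there is a careful use of LP duality together with the explicit facet description of the fundamental polytope $P=\bigcap_{j\in C}conv(\zeta_j)$ from~(\ref{le:relaxed_LP}): one must argue that uniqueness of the all-zeros optimum produces a dual optimal solution of zero objective value with strict slack precisely on the constraints indexed by the error pattern $\eta$, and then translate that dual solution into an edge function $w$ satisfying Definition~\ref{le:dualwitness}. That translation, and the handling of degenerate vertices and equality constraints of $P$, is where the real work sits; the hyperflow reformulation given here is then purely mechanical.
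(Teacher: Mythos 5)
Your proposal is correct and follows exactly the paper's route: the paper obtains this corollary by combining Theorem \ref{le:existencedualwitness} with the dual-witness/hyperflow equivalence, which it simply cites as Proposition 1 of \cite{daskalakis2008probabilistic} rather than reproving. Your inline verification of that equivalence (both the trivial ``hyperflow $\Rightarrow$ dual witness'' direction and the per-check minimization argument for the converse) is sound, and you are also right that the only real content lies in the ``only if'' half of Theorem \ref{le:existencedualwitness}, which the paper proves in Appendix \ref{le:app_ex_dual_witness} via the fundamental cone and its dual.
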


\begin{definition}(WDG corresponding to a hyperflow or a dual witness)\\ 
Let $\mathcal{T}=(V,C,E)$ be a Tanner graph, $\gamma: V \to \mathbb{R}$ a (possibly scaled) log-likelihood ratio function and $w: E \to \mathbb{R}$ a dual witness or a hyperflow. The weighted directed graph (WDG) $(V,C,E,w,\gamma)$ associated with $\mathcal{T}$,$\gamma$ and $w$ has vertex set $V \cup C$ and for any $v \in V$ and any $c \in C$, an arrow is directed from $v$ to $c$ if $w(v,c) > 0$, an arrow is directed from $c$ to $v$ if $w(v,c) < 0$ and $v$ and $c$ are not connected by an arrow if $w(v,c) = 0$. Moreover, a directed edge between $v \in V$ and $c \in C$ has weight $|w(v,c)|$.
\end{definition}

The following theorem shows that whenever there exists a WDG corresponding to a hyperflow or a dual witness, there exists an acyclic WDG (denoted by WDAG) corresponding to a hyperflow.

\begin{theorem}\label{le:existenceAWDAG} (Existence of an acyclic WDG)\\ 
Let $\mathcal{T}=(V,C,E)$ be a Tanner graph of a binary linear code with block length $n$ and let $\eta \in \{0,1\}^{n}$ be any error pattern. If $G=(V,C,E,w,\gamma)$ is a WDG (Weighted Directed Graph) corresponding to a dual witness for $\eta$ on $\mathcal{T}$, then there is an acyclic WDG $G''=(V,C,E,w'',\gamma)$ corresponding to a hyperflow for $\eta$ on $\mathcal{T}$.
\end{theorem}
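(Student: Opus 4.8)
The plan is to modify the weight function $w$ of $G$ in two stages while keeping $\gamma$ fixed: first, cancel every directed cycle of the WDG while preserving the property that the weight function is a dual witness; then perform a purely local adjustment at each check node that turns the dual witness into a hyperflow without changing the orientation of any edge, so that acyclicity is not affected.

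\textbf{Stage 1 (cycle cancellation).} As long as the current WDG has a directed cycle, I would pick a simple directed cycle $Z$, let $\delta>0$ be the minimum weight of an edge of $Z$, subtract $\delta$ from the weight of every edge of $Z$, and delete any edge whose weight reaches $0$. This flips no orientation, since a positive-weight edge $v\to c$ and a negative-weight edge $c\to v$ both only move toward $0$; and it removes at least one edge (the one realizing $\delta$) while never creating an edge, so the process terminates. The work is in checking that Definition~\ref{le:dualwitness} is maintained. For the check condition~(\ref{le:dw_check_equation}), the key observation is that~(\ref{le:dw_check_equation}) already forces each check node to have at most one outgoing edge (two outgoing edges $c\to v, c\to v'$ would give $w(v,c)+w(v',c)<0$); hence a cycle $Z$ through a check $c$ enters along a positive edge $v_1\to c$ and must leave along $c$'s unique negative edge $c\to v_0$. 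The only weights changed at $c$ are then $w(v_1,c)$ (down by $\delta$) and $w(v_0,c)$ (up by $\delta$, toward $0$); since by~(\ref{le:dw_check_equation}) $w(v_1,c)\ge -w(v_0,c)=|w(v_0,c)|\ge\delta$, the decreased weight stays nonnegative, and one checks directly that every pairwise sum at $c$ remains $\ge 0$. For the variable condition~(\ref{le:dw_var_equation}), if $Z$ passes through a variable node $v$ it enters along a negative edge $c'\to v$ and leaves along a positive edge $v\to c$, so the left-hand side of~(\ref{le:dw_var_equation}) at $v$ and its right-hand side both drop by exactly $\delta$, and the strict inequality survives. This leaves an acyclic WDG $G'=(V,C,E,w',\gamma)$ with $w'$ a dual witness for $\eta$.

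\textbf{Stage 2 (normalizing checks into hyperflow form).} Now I would process each check $c$ of $G'$ independently; this is legitimate because the edges incident to distinct checks are disjoint and nothing below changes an orientation or adds an edge, so $G'$ stays acyclic. If $c$ has no negative (outgoing) edge, delete all edges at $c$; this only lowers left-hand sides of~(\ref{le:dw_var_equation}) at neighbouring variable nodes, and~(\ref{le:hyperflow_check_equation}) now holds at $c$ with $P_c=0$. If $c$ has a unique negative edge $c\to v_0$ of magnitude $P^-$, note that (since $w'$ is a dual witness) all other edges at $c$ are positive with weight $\ge P^-$; let $P_c$ be the smallest such positive weight, reduce every positive edge at $c$ to weight $P_c$, and reset the negative edge to $-P_c$. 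Reducing positive edges only lowers left-hand sides of~(\ref{le:dw_var_equation}); raising $|w(v_0,c)|$ from $P^-$ to $P_c\ge P^-$ only raises the right-hand side of~(\ref{le:dw_var_equation}) at $v_0$; so~(\ref{le:dw_var_equation}) is preserved everywhere, and~(\ref{le:hyperflow_check_equation}) now holds at $c$. (This local adjustment is essentially the dual-witness-to-hyperflow conversion underlying Proposition~1 of \cite{daskalakis2008probabilistic}; the point to record is that it never changes the sign of an edge, which is why acyclicity is not disturbed and why this must be done \emph{after} Stage~1 rather than before.) The resulting $w''$ is a hyperflow for $\eta$ and $G''=(V,C,E,w'',\gamma)$ is acyclic, which is the claim.

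\textbf{Main obstacle.} The delicate part is Stage 1 — verifying that cancelling flow around a cycle keeps the check inequality~(\ref{le:dw_check_equation}) intact and that one never reaches a ``negative edge with no positive partner'' configuration. The clean route is the observation that~(\ref{le:dw_check_equation}) forces at most one outgoing edge per check, so every directed cycle passes through each check via that check's unique negative edge, whose magnitude can only shrink under cancellation while the positive edge it is paired with stays nonnegative (using $|w(v_1,c)|\ge|w(v_0,c)|\ge\delta$). Termination and the Stage~2 normalization are then routine.
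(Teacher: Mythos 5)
Your proposal is correct and follows essentially the same route as the paper's Algorithm~\ref{le:cycleremoval}: repeatedly cancel the minimum weight around a directed cycle (the paper's while loop and Lemma~\ref{le:loopinvariant}), then locally normalize each check node into hyperflow form (the paper's for loop), with your Stage~1 verification being a more detailed version of the paper's terse invariant argument. The only difference is cosmetic: in Stage~2 the paper lowers all positive edges at a check to the magnitude of its unique negative edge, whereas you lower them to the smallest positive weight and raise the negative edge to match; both variants preserve (\ref{le:dw_var_equation}), edge orientations, and hence acyclicity.
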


Before proving Theorem \ref{le:existenceAWDAG}, we summarize the different characterizations of LP decoding success.
\begin{theorem}\label{le:equivalence_summary} Let $\mathcal{T}=(V,C,E)$ be a Tanner graph of a binary linear code with block length $n$ and let $\eta \in \{0,1\}^{n}$ be any error pattern. Then, the following are equivalent:
\begin{enumerate}
\item There is LP decoding success for $\eta$ on $\mathcal{T}$.
\item There is a dual witness for $\eta$ on $\mathcal{T}$.
\item There is a hyperflow for $\eta$ on $\mathcal{T}$.
\item There is a WDAG for $\eta$ on $\mathcal{T}$.
\end{enumerate}
\end{theorem}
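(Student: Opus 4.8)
The plan is to obtain the four-way equivalence by bolting together results already in hand; almost all of the content lies in Theorem \ref{le:existencedualwitness} and Theorem \ref{le:existenceAWDAG}, and what remains is to supply the two cheap directions that convert a hyperflow or a WDAG back into a dual witness.

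First I would note $(1)\Leftrightarrow(2)$: this is precisely Theorem \ref{le:existencedualwitness}. Next, $(2)\Leftrightarrow(3)$: every hyperflow satisfies (\ref{le:dw_var_equation}) by definition, and its check property (\ref{le:hyperflow_check_equation}) implies (\ref{le:dw_check_equation}), since for distinct $v,v'\in N(c)$ the sum $w(v,c)+w(v',c)$ equals either $2P_c\ge 0$ or $-P_c+P_c=0$; hence every hyperflow is a dual witness. The reverse implication --- turning a dual witness into a hyperflow on the same Tanner graph and with the same $\gamma$ --- is Proposition~1 of \cite{daskalakis2008probabilistic}, already recorded as Corollary \ref{le:existencehyperflow}. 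Finally I would establish $(2)\Leftrightarrow(4)$. For the forward direction, take a dual witness $w$ for $\eta$ on $\mathcal{T}$, form the associated WDG $G=(V,C,E,w,\gamma)$, and apply Theorem \ref{le:existenceAWDAG} to obtain an acyclic WDG $G''=(V,C,E,w'',\gamma)$ corresponding to a hyperflow, i.e.\ a WDAG for $\eta$ on $\mathcal{T}$. For the reverse direction, a WDAG is by definition a WDG corresponding to a hyperflow, so its edge function is a hyperflow for $\eta$ on $\mathcal{T}$, and by the previous step it is in particular a dual witness. Combining $(1)\Leftrightarrow(2)\Leftrightarrow(3)$ with $(2)\Leftrightarrow(4)$ yields the theorem.

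There is no real obstacle here --- the statement is essentially a corollary of Theorems \ref{le:existencedualwitness} and \ref{le:existenceAWDAG} --- but the one point that needs care is that the passage from a WDG to a WDAG in Theorem \ref{le:existenceAWDAG} must preserve the Tanner graph $\mathcal{T}$ and the log-likelihood function $\gamma$ (hence the error pattern $\eta$), and must output a genuine hyperflow rather than merely a dual witness, so that the acyclic certificate in $(4)$ is the same type of object referred to in $(2)$ and $(3)$. Provided Theorem \ref{le:existenceAWDAG} is stated and proved in that form, the chain of implications above closes cleanly.
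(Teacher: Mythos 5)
Your proposal is correct and takes essentially the same route as the paper, which states this theorem as a summary immediately assembled from Theorem \ref{le:existencedualwitness}, Proposition~1 of \cite{daskalakis2008probabilistic} (recorded as Corollary \ref{le:existencehyperflow}), and Theorem \ref{le:existenceAWDAG}. Your explicit verification that a hyperflow is a dual witness and your note that Theorem \ref{le:existenceAWDAG} preserves $(V,C,E,\gamma)$ are exactly the (routine) details the paper leaves implicit.
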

In order to prove Theorem \ref{le:existenceAWDAG}, we give an algorithm that transforms a WDG $G$ satisfying Equations (\ref{le:dw_var_equation}) and (\ref{le:dw_check_equation}) into an acyclic WDG $G''$ satisfying Equations (\ref{le:dw_var_equation}) and (\ref{le:hyperflow_check_equation}).

\begin{algorithm}[H]
\caption{Transforming the dual witness WDG $G$ for $\gamma$ into a hyperflow WDAG $G''$ for $\gamma$}\label{le:cycleremoval}
{\bf Input:} {$G=(V,C,E,w,\gamma)$} \\
{\bf Output:} {$G''=(V,C,E,w'',\gamma)$}\\ 
\begin{algorithmic}
\State $G'=(V,C,E,w',\gamma)$ $\gets$ $G$
\While{$G'$ has a directed cycle}
\State $c$ $\gets$ any directed cycle of $G'$
\State $w_{min}$ $\gets$ minimum weight of an edge of $c$ \Comment All edges along $c$ have a positive weight.
\State Subtract $w_{min}$ from the weights of all edges of $c$
\State Remove all zero weight edges
\State Store the resulting WDG in $G'$
\EndWhile\\ 
\For{all $j \in C$}
\State $d(j)$ $\gets$ degree of $j$
\State $\{v_{1},\dots,v_{d(j)}\}$ $\gets$ neighbours of $j$ in order of increasing $w'(v_{i},j)$
\If{$w'(v_{1},j) \geq 0$} \Comment All edges are directed toward $j$ and can thus be removed.
\State $w''(v_{i},j)$ $\gets$ $0$ $\forall i \in [d(j)]$
\Else \Comment $(v_{1},j)$ is the only edge directed away from $j$.
\State $w''(v_{1},j)$ $\gets$ $w'(v_{1},j)$
\State $w''(v_{i},j)$ $\gets$ $|w'(v_{1},j)|$ $\forall i \in \{2,\dots,d(j)\}$
\EndIf
\EndFor
\end{algorithmic}
\end{algorithm}

The next lemma is used to complete the proof of Theorem \ref{le:existenceAWDAG}.

\begin{lemma}\label{le:loopinvariant}
After each iteration of the while loop of Algorithm \ref{le:cycleremoval}, we have:
\begin{enumerate}
\item[(I)] The number of cycles of $G'$ decreases by at least $1$.
\item[(II)] $G'$ satsifies the dual witness equations (\ref{le:dw_var_equation}) and (\ref{le:dw_check_equation}).
\end{enumerate}
\end{lemma}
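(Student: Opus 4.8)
The plan is to analyze a single execution of the body of the while loop. Fix $G'$ at the start of such an iteration; it satisfies (\ref{le:dw_var_equation}) and (\ref{le:dw_check_equation}) by hypothesis --- on the first iteration because this is assumed of the input $G$, on subsequent iterations by part (II) from the previous round. Let $\sigma$ be the directed cycle picked in this iteration; we may assume $\sigma$ is simple, since every directed cycle contains a simple one. Let $w_{min}$ be the minimum edge weight along $\sigma$. A WDG contains an edge only where $w(\cdot,\cdot)\neq 0$, with weight $|w(\cdot,\cdot)|$, so every edge of $\sigma$ has strictly positive weight; hence $w_{min}>0$ and every edge of $\sigma$ has weight $\geq w_{min}$. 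After subtracting $w_{min}$ from the weight of each edge of $\sigma$, all of these weights are still $\geq 0$, at least one of them (a minimizer) has become $0$ and is deleted, and --- the key structural point --- no weight has changed sign: an edge $v\to c$ of $\sigma$ (where $w(v,c)>0$) either keeps its orientation with a smaller weight or disappears, and likewise for an edge $c\to v$ (where $w(v,c)<0$). Thus the iteration only deletes a nonempty set of edges; it never adds or reverses one.

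Part (I) follows immediately. Since the iteration only deletes edges, every directed cycle of the new $G'$ is also a directed cycle of the old $G'$, while $\sigma$ itself is destroyed (it lost at least one edge). As a finite digraph has only finitely many simple directed cycles, their number strictly decreases, i.e.\ by at least $1$.

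For part (II), every vertex not on $\sigma$ keeps all its incident weights, so its constraint is untouched; it suffices to re-check the vertices of $\sigma$. As $\mathcal{T}$ is bipartite, $\sigma$ alternates between $V$ and $C$, using exactly one in-edge and one out-edge at each of its vertices. \emph{At a variable node $v$ on $\sigma$:} the cycle enters via $c_1\to v$ (so $w(v,c_1)<0$) and leaves via $v\to c_2$ (so $w(v,c_2)>0$); the update raises $w(v,c_1)$ by $w_{min}$ and lowers $w(v,c_2)$ by $w_{min}$ (deleting either edge if its weight reaches $0$, in which case the corresponding sum in (\ref{le:dw_var_equation}) still changes by exactly $w_{min}$). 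Hence $\sum_{c\in N(v):\,w(v,c)>0}w(v,c)-\sum_{c\in N(v):\,w(v,c)\le 0}(-w(v,c))$ is unchanged, and since $\gamma(v)$ is unchanged and (\ref{le:dw_var_equation}) at $v$ says precisely that this quantity is below $\gamma(v)$, the constraint still holds. \emph{At a check node $c$ on $\sigma$:} the cycle enters via $v\to c$ (so $w(v,c)>0$, weight $\geq w_{min}$) and leaves via $c\to v'$ (so $w(v',c)<0$); applying (\ref{le:dw_check_equation}) to the old $G'$, $v'$ is the only neighbour of $c$ with a negative $w$-value, so $w(u,c)\geq 0$ for all $u\in N(c)\setminus\{v'\}$. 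After the update $w(v,c)\mapsto w(v,c)-w_{min}\geq 0$, $w(v',c)\mapsto w(v',c)+w_{min}\leq 0$, and every other value at $c$ is unchanged. For two distinct neighbours of $c$, the pair sum is: unchanged if the pair is $\{v,v'\}$ (hence still $\geq 0$); a sum of two nonnegative numbers if it involves $v$ but not $v'$; its old nonnegative value plus $w_{min}>0$ if it involves $v'$ but not $v$; and unchanged otherwise. So (\ref{le:dw_check_equation}) still holds at $c$, proving part (II).

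The only genuinely delicate point is the structural observation that subtracting $w_{min}$ never reverses an edge (which is what makes both parts work); granting it, part (I) is just ``deleting edges shrinks the cycle set,'' and part (II) is a local conservation check --- trivial at variable nodes and a short pair-sum case analysis at check nodes, the only care being the edges whose weight drops to $0$.
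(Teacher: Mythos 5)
Your proof is correct and follows essentially the same route as the paper's (much terser) argument: the cycle-breaking step only reduces absolute weights along the chosen cycle, hence deletes edges without reversing or adding any, which gives (I), and the per-vertex balance of one in-edge and one out-edge changing by the same amount gives (II). Your write-up simply makes explicit the sign-preservation observation and the pair-sum case analysis at check nodes that the paper leaves implicit.
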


\begin{proof}[\bf{Proof of Lemma \ref{le:loopinvariant}}]
(I) follows from the fact that cycle $c$ is being broken in every iteration of the while loop and no new cycle is added by reducing the absolute weights of some edges of the WDG. (II) follows from the fact that during any iteration of the while loop, we are possibly repeatedly reducing the absolute weights of one ingoing and one outgoing edge of a variable or check node by the same amount, which maintains the original LP constraints (\ref{le:dw_var_equation}) and (\ref{le:dw_check_equation}). 
\end{proof}

\begin{proof}[\bf{Proof of Theorem \ref{le:existenceAWDAG}}]
First, note that the while loop of Algorithm \ref{le:cycleremoval} will be executed a number of times no larger than the number of cycles of $G$, which is finite. By Lemma \ref{le:loopinvariant}, after the last iteration of the while loop, $G'$ is an acyclic WDG that satisfies (\ref{le:dw_var_equation}) and (\ref{le:dw_check_equation}). The for loop of Algorithm \ref{le:cycleremoval} decreases the weights of edges that are directed away from variable nodes; thus, it maintains (\ref{le:dw_var_equation}) and $G''$ inherits the acyclic property of $G'$. Moreover, $G''$ satsifies (\ref{le:hyperflow_check_equation}), which completes the proof Theorem \ref{le:existenceAWDAG}.
\end{proof}

\begin{remark}
In virtue of Theorem \ref{le:existencedualwitness}, Theorem \ref{le:existencehyperflow} and Theorem \ref{le:existenceAWDAG}, we will use the terms ``hyperflow'', ``dual witness'' and ``WDAG'' interchangeably in the rest of this paper.
\end{remark}

\section[Transforming a WDAG into a directed weighted forest]{\Large{\bf Transforming a WDAG into a directed weighted forest}}\label{le:transformWDAG}

The WDAG corresponding to a hyperflow has no directed cycles but it possibly has cycles when viewed as an undirected graph. In this section, we show how to transform the WDAG corresponding to a hyperflow into a directed weighted forest (which is by definition a directed graph that is acyclic even when viewed as an undirected graph). This forest has possibly a larger number of variable and check nodes than the original WDAG but it still satisfies Equations (\ref{le:dw_var_equation}) and (\ref{le:hyperflow_check_equation}). Moreover, the vertices of the forest ``corresponding'' to a vertex of the original WDAG will have their weights sum up to the weight of the original vertex. Furthermore, the directed paths of the forest will be in a bijective correspondence with the directed paths of the original WDAG. This transformation will be used when we derive an upper bound on the weight of an edge in a WDAG of a $(d_{v},d_{c})$-regular LDPC code in Section \ref{le:maxweightregular} and of a spatially coupled code in Section \ref{le:maxweightsc_section}.

\begin{theorem} \label{le:transformation}(Transforming a WDAG into a directed weighted forest)\\ 
Let $G=(V,C,E,w,\gamma)$ be a WDAG. Then, $G$ can be transformed into a directed weighted forest $T=(V',C',E',w',\gamma')$ that has the following properties:
\begin{enumerate}
\item\label{le:rep_var_item} $V'=\underset{v \in V} \bigcup V'_{v}$ where $V'_{x} \cap V'_{y} = \emptyset$ for all $x,y \in V$ s.t. $x \neq y$. For every $v \in V$, each variable node in $V'_{v}$ is called a ``replicate'' of $v$.
\item\label{le:rep_check_item} $C'=\underset{c \in C} \bigcup C'_{c}$ where $C'_{x} \cap C'_{y} = \emptyset$ for all $x,y \in C$ s.t. $x \neq y$. For every $c \in C$, each check node in $C'_{c}$ is called a ``replicate'' of $c$.
\item\label{le:add_up_item} For all $v \in V, \displaystyle\sum\limits_{v' \in V'_{v}} \gamma'(v') = \gamma(v)$.
\item\label{le:sign_conservation} For all $v \in V$ and all $v' \in V_{v}$, $\gamma'(v')$ has the same sign as $\gamma(v)$. 
\item\label{le:sat_hyperflow_item} The forest $T$ satisfies the hyperflow equations (\ref{le:dw_var_equation}) and (\ref{le:hyperflow_check_equation}).
\item\label{le:bijection_item} The directed paths of $G$ are in a bijective correspondence with the directed paths of $T$. Moreover, if the directed path $h'$ of $T$ corresponds to the directed path $h$ of $G$, then the variable and check nodes of $h'$ are replicates of the corresponding variable and check nodes of $h$.
\item\label{le:same_weight_sink} If $G$ has a single sink node with a single incoming edge that has weight $\alpha$, then $T$ has a single sink node with a single incoming edge and that has the same weight $\alpha$.
\end{enumerate}
\end{theorem}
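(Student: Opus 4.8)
\emph{Proof proposal.} The plan is to realise $T$ as an ``unfolding'' of $G$ in which weights are \emph{rescaled}, not merely copied. Fix a topological order of $V\cup C$ (possible since $G$ is acyclic) and process the vertices in reverse order, so that every sink is handled first. For a vertex $u$ of $G$, call a directed path of $G$ that starts at $u$ and ends at a sink a \emph{route from $u$}: I create one replicate of $u$ for each route from $u$, and I join the replicate of $u$ for the route $u\to u_1\to\cdots\to t$ to the replicate of $u_1$ for the route $u_1\to\cdots\to t$ by a copy of the edge $\{u,u_1\}$. Since a check node has out-degree at most one in a hyperflow WDAG, every replicate has out-degree at most one; as $T$ also inherits acyclicity from $G$, it is automatically a disjoint union of in-trees, one rooted at the unique replicate of each sink of $G$. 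This already yields properties \ref{le:rep_var_item} and \ref{le:rep_check_item} and the forest structure, and --- once weights are assigned --- property \ref{le:same_weight_sink}. Weights are assigned vertex by vertex in the same order: processing a vertex $u$ fixes the weights of the copies of the edges entering $u$, using the weights of the copies of the edges leaving $u$, which are already fixed because $u$'s out-neighbours precede $u$. For a check $c$, each replicate $c'$ already carries its single outgoing copy, of some weight $p_{c'}>0$, and I give all $d(c)-1$ incoming copies of $c'$ the same weight $p_{c'}$; this is exactly the shape (\ref{le:hyperflow_check_equation}) with $P_{c'}=p_{c'}$, so (\ref{le:hyperflow_check_equation}) holds at every check replicate by construction. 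For a variable $v$ with outgoing copy-weights $o_1,\dots,o_m>0$ on its $m$ replicates, what remains free is how each incoming edge-weight $P_c$ of $v$ (over the checks $c$ with $v^{*}_c=v$) is split among the replicates, and the values $\gamma'$.

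The crux is to make these choices so that the variable inequality (\ref{le:dw_var_equation}) survives at every replicate, and the mechanism is a conservation law maintained throughout: \emph{for every edge $e$ of $G$, the weights of the copies of $e$ in $T$ sum to exactly the weight of $e$ in $G$.} At a check this is automatic: the outgoing copy-weights of $c$ are the weights of the copies of $c$'s outgoing edge (already conserved), and making the $d(c)-1$ incoming copies of each $c$-replicate equal to its outgoing one keeps the copies of every incoming edge of $c$ summing to $P_c$. At a variable $v$ I split each incoming $P_c$ among the replicates \emph{proportionally to their outgoing weights}: the copy reaching the $j$-th replicate gets weight $(o_j/o_v)P_c$, where $o_v=\sum_c P_c$ is $v$'s total outgoing weight in $G$ and, crucially, $\sum_j o_j=o_v$ by the conservation law already established for $v$'s outgoing edges --- so the fractions sum to $1$ and $v$'s incoming edges are conserved too (for a sink, whose single replicate has no outgoing copy, each incoming copy simply keeps the full weight $P_c$). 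The incoming weight of the $j$-th replicate of $v$ is then $(o_j/o_v)\iota_v$ with $\iota_v:=\sum_{c:\,v^{*}_c=v}P_c$, so (\ref{le:dw_var_equation}) there reads $\gamma'_j>o_j-(o_j/o_v)\iota_v=(o_j/o_v)(o_v-\iota_v)$; these lower bounds sum to $o_v-\iota_v$, which is strictly below $\gamma(v)$ --- exactly (\ref{le:dw_var_equation}) at $v$ in $G$. Hence I may pick $\gamma'_1,\dots,\gamma'_m$ strictly above those bounds with $\sum_j\gamma'_j=\gamma(v)$, giving property \ref{le:add_up_item}; and since $o_v-\iota_v<0$ whenever $\gamma(v)<0$ (then $\iota_v>o_v+1$), a short sign check shows the $\gamma'_j$ can also be taken with the common sign of $\gamma(v)$, giving property \ref{le:sign_conservation}. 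With the check case, this is property \ref{le:sat_hyperflow_item}.

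Property \ref{le:bijection_item} comes from the route structure: because check nodes have out-degree at most one, a route has a \emph{forced} continuation at every step, so the replicate chain attached to a maximal directed path $u_0\to\cdots\to u_\ell$ of $G$ is a maximal directed path of $T$; conversely a maximal directed path of $T$ starts at a source replicate --- a replicate of a source of $G$, i.e.\ a route from a source --- and is then forced, so it is the chain of a maximal directed path of $G$. These assignments are mutually inverse and carry vertices to their replicates by construction, and the same bookkeeping covers non-maximal directed subpaths. For property \ref{le:same_weight_sink}: if $G$ has a single sink $t$ with a single incoming edge of weight $\alpha$, then $T$ is a single in-tree rooted at the unique replicate of $t$, whose only incoming copy is the copy of $t$'s incoming edge; since nothing is split among replicates of $t$, that copy has weight $\alpha$.

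The main obstacle is precisely the weight bookkeeping of the second paragraph. A naive unfolding that copied edge weights unchanged would destroy (\ref{le:dw_var_equation}): a source variable that fans out --- directly, or through one or more downstream variables --- into many routes would have \emph{every} outgoing copy carry the original weight, and would therefore need a \emph{multiple} of its slack $\gamma(v)-(o_v-\iota_v)$ instead of that slack itself, which is impossible. The remedy is the proportional rescaling together with the conservation law, and the delicate point is that the two must be realised \emph{simultaneously and consistently} while sweeping from the sinks back to the sources, never disturbing the rigid ``all incident weights equal, exactly one reversed'' form that (\ref{le:hyperflow_check_equation}) forces on the check replicates. Checking that the proportional split indeed leaves every variable replicate with strictly positive, correctly signed slack --- the short computation displayed above --- is the technical heart of the proof.
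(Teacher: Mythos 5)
Your proposal is correct and is essentially the paper's own construction (Algorithm \ref{le:alg_trans_tree} together with the loop invariant of Lemma \ref{le:secondloopinvariant}): replicates of a vertex correspond to its branching continuations toward the sinks, and edge weights and $\gamma$-values are split proportionally to the outgoing weights, so that both sides of (\ref{le:dw_var_equation}) scale by the same positive factor at each replicate while (\ref{le:hyperflow_check_equation}) is preserved by uniform scaling of check neighborhoods. The differences are presentational: you sweep from sinks back to sources maintaining an explicit per-edge conservation law, whereas the paper sweeps variable nodes forward in topological order and rescales whole ancestor subtrees by $e_l/e_T^{(v)}$ --- which produces exactly the same weights --- and the paper's specific choice $\gamma'_j = (o_j/o_v)\,\gamma(v)$ is one instance of your freer choice of the $\gamma'_j$.
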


In order to prove Theorem \ref{le:transformation}, we now give an algorithm that transforms the WDAG $G$ into the directed weighted forest $T$.

\begin{algorithm}[H]
\caption{Transforming the WDAG $G$ into the directed weighted forest $T$}\label{le:alg_trans_tree}
{\bf Input:} {$G=(V,C,E,w,\gamma)$} \\
{\bf Output:} {$T=(V',C',E',w',\gamma')$} \\
\begin{algorithmic}
\For{$\text{each }v \in V\text{ taken in topological order}$}
\State $p$ $\gets$ $\text{number of outgoing edges of }v$
\State $\{e_j^{(v)}\}_{j=1}^{p} \gets \text{weights of outgoing edges of }v$
\State $e_{T}^{(v)} \gets \displaystyle\sum\limits_{j=1}^p e_j^{(v)}$
\State $\text{Create }p\text{ replicates of the subtree rooted at }v$ \Comment Contains all ancestors of $v$ in the current WDAG
\For{$\text{each }l \in [p]$}
\State $\text{Scale the $l$th subtree by }e_l/e_{T}^{(v)}$ \Comment The weights of all variable nodes and edges are scaled
\State $\text{Connect the $l$th subtree to the $l$th outgoing edge of $v$}$
\EndFor
\EndFor
\end{algorithmic}
\end{algorithm}

We now state and prove a loop invariant that constitutes the main part of the proof of Theorem \ref{le:transformation}. First, we introduce some notation related to the operation of Algorithm \ref{le:alg_trans_tree}.

\begin{notation}
In the following, let $V = \{v_1,\dots, v_n\}$. For every $i,j \in [n]$, let $r_{i,j}$ be the number of replicates of variable node $v_j$ after the $i$th iteration of the algorithm. Moreover, for every $k \in [r_{i,j}]$, let $v_{i,j,k}$ be the $k$th replicate of $v_j$ after the $i$th iteration of the algorithm. For all $i \in [n]$, let $V_{i}$, $C_{i}$, $E_{i}$, $\gamma_{i}$ and $w_{i}$ be the set of all variable nodes, set of all check nodes, set of all edges, log-likelihood ratio function and weight function, respectively, after the $i$th iteration of the algorithm and let $G_{i} = (V_{i},C_{i},E_{i},w_{i},\gamma_{i})$. Finally, we set $G_{0}=(V_{0},C_{0},E_{0},\gamma_{0},w_{0})$ to $(V,C,E,\gamma,w)$.
\end{notation}

\begin{lemma}\label{le:secondloopinvariant} For any $i \geq 0$, after the $i$th iteration of Algorithm \ref{le:alg_trans_tree}, we have:\footnote{By ``after the $0$th iteration'', we mean ``before the $1$st iteration''.}
\begin{enumerate}
\item[(I)] For all $j \in [n]$, $\displaystyle\sum\limits_{k=1}^{r_{i,j}} \gamma_{i}(v_{i,j,k}) = \gamma(v_{j})$.
\item[(II)] For all $j \in [n]$ and all $k \in [r_{i,j}]$, $\gamma_{i}(v_{i,j,k})$ has the same sign as $\gamma(v_{j})$.
\item[(III)] For all $v \in V_{i}$, ${\displaystyle\sum\limits_{c \in N(v):w_{i}(v,c)>0} w_{i}(v,c)} < {\displaystyle\sum\limits_{c \in N(v):w_{i}(v,c) \le 0} (-w_{i}(v,c))} + \gamma_{i}(v)$.
\item[(IV)] For all $c \in C_{i}$, there exist $P_{c} \geq 0$ and $v \in N(c)$ s.t. $w_{i}(v,c) = -P_{c}$ and for all $v' \in N(c)$ s.t. $v' \neq v, w_{i}(v',c) = P_{c}$.
\item[(V)] The directed paths of $G$ are in a bijective correspondence with the directed paths of $G_{i}$. Moreover, if the directed path $h'$ of $G_{i}$ corresponds to the directed path $h$ of $G$, then the variable and check nodes of $h'$ are replicates of the corresponding variable and check nodes of $h$.
\end{enumerate}
\end{lemma}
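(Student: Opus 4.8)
The plan is to prove all five assertions of Lemma~\ref{le:secondloopinvariant} simultaneously by induction on $i$. The base case $i=0$ is immediate: $G_0=G$ is a WDAG corresponding to a hyperflow, so (III) and (IV) are literally the hyperflow equations~(\ref{le:dw_var_equation}) and~(\ref{le:hyperflow_check_equation}); each original variable node $v_j$ has exactly one replicate $v_{0,j,1}=v_j$, which makes (I) and (II) trivial; and (V) holds via the identity correspondence.

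For the inductive step, assume the five assertions hold after each of the iterations $0,\dots,i-1$ and consider iteration $i$, which processes the $i$th variable node $v_i$ of the topological order. I would begin with a structural claim about $G_{i-1}$, which is the heart of the argument. Since $v_i$ comes $i$th in topological order it is an ancestor of no $v_j$ with $j<i$ in $G$, hence --- by the path-correspondence (V) applied to the earlier iterations --- no replicate of $v_i$ lies in any subtree replicated during iterations $1,\dots,i-1$; so $r_{i-1,i}=1$, and I write $u_i$ for the unique replicate of $v_i$ in $G_{i-1}$. I then claim that the subgraph $S$ of $G_{i-1}$ spanned by $u_i$ together with all of its ancestors is a tree, with every edge directed toward the root $u_i$. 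Indeed, by the induction hypothesis (IV) every check node of $G_{i-1}$ has out-degree at most one, and by construction every already-processed variable node has each of its replicates attached to exactly one outgoing edge, a property preserved under later replications; since besides check nodes the only variable nodes of $S$ are replicates of $v_1,\dots,v_{i-1}$, every vertex of $S$ other than $u_i$ has out-degree exactly one, and because it is an ancestor of $u_i$ that unique outgoing edge again lands in $S$. As $S$ is connected (each of its vertices has a directed path to $u_i$ which stays inside $S$), it is a tree; moreover all edges entering a vertex of $S$ originate in $S$, so no directed path can enter $S$ from outside. Iteration $i$ then creates $p$ vertex-disjoint copies $S^{(1)},\dots,S^{(p)}$ of $S$, scaling all node- and edge-weights of the $l$th copy by $t_l:=e_l^{(v)}/e_T^{(v)}$ (so $t_l>0$ and $\sum_{l}t_l=1$), and attaches the copy $u_i^{(l)}$ of $u_i$ to the $l$th original outgoing edge of $u_i$, whose weight $e_l^{(v)}$ is left unchanged; nothing outside $S$ is modified.

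Granting this picture, the five assertions are verified by routine bookkeeping. For (I) and (II): a replicate outside $S$ is untouched, while a replicate inside $S$ is replaced by $p$ copies whose weights are $t_l$ times the original, hence of the same sign and summing to it; with $\sum_l t_l=1$ and $\gamma_{i-1}(u_i)=\gamma(v_i)$ this yields (I), and $t_l>0$ yields (II). For (III) and (IV): a vertex untouched by iteration $i$ inherits its (in)equality from the hypothesis; a copy $x^{(l)}$ of an ancestor $x$ of $u_i$ has every incident edge inside $S$ (its single outgoing edge heading toward $u_i$, its incoming edges coming from ancestors of $u_i$), so both sides of (III) and the entire pattern of~(\ref{le:hyperflow_check_equation}) get multiplied by $t_l>0$ and are preserved; and for $u_i^{(l)}$, writing $f_1,\dots,f_q$ for the incoming-edge weights of $u_i$ in $G_{i-1}$, inequality~(\ref{le:dw_var_equation}) reads $e_l^{(v)}<t_l(\sum_m f_m)+t_l\gamma(v_i)$, which upon dividing by $t_l$ is exactly the inequality the hypothesis furnishes for $u_i$ (using $\sum_l e_l^{(v)}=e_T^{(v)}$), while~(\ref{le:hyperflow_check_equation}) at a copy of a check node is the original hyperflow pattern scaled by $t_l$. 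For (V): combining with the hypothesis it is enough to biject the directed paths of $G_{i-1}$ with those of $G_i$ compatibly with the replicate structure; a directed path disjoint from $S$ is sent to itself, while a directed path meeting $S$ --- which by the structural claim consists of a segment inside $S$ ending at $u_i$, followed, if it continues, by the $l$th outgoing edge of $u_i$ and an unchanged tail --- is sent to the corresponding $l$th copy of that segment, then $u_i^{(l)}$, then the tail.

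I expect the main obstacle to be exactly the structural claim that the subgraph spanned by $u_i$ and its ancestors is an honest tree: this is what makes ``create $p$ replicates of the subtree rooted at $v$'' well defined and what drives the uniform-scaling arguments behind (III), (IV) and (V), and it relies on combining the topological order, the hyperflow form of the check constraints, and the out-degree-one property produced by the algorithm. The one further subtlety lies in (V): a directed path ending exactly at $u_i$ would split into $p$ distinct paths of $G_i$, so ``directed path'' should be read here as ``maximal directed path'' (equivalently, source-to-sink path), along which $u_i$ is always exited via one of its outgoing edges; with that reading the displayed correspondence is a genuine bijection.
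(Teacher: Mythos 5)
Your proof is correct and follows essentially the same route as the paper's: induction on the iteration count, with the same case analysis (vertices untouched by iteration $i$, copies of proper ancestors whose entire neighborhoods are scaled uniformly by $t_l$, and the copies of the processed node itself, for which the inequality reduces to the induction hypothesis after dividing by $t_l$). Your only genuine additions are the structural claim that the ancestors of $u_i$ form a tree closed under taking ancestors with out-degree one off the root — the paper simply takes the well-definedness of ``replicating the subtree rooted at $v$'' for granted — and the observation that the bijection in (V) must be read on maximal (source-to-sink) directed paths, an edge case the paper's map also glosses over; both are sound and only make the argument more careful.
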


\begin{proof}[\bf{Proof of Lemma \ref{le:secondloopinvariant}}]
Base Case: Before the first iteration, we have: $r_{0,j} = 1\text{ ,  }\gamma_{0}(v_{0,j,1}) = \gamma(v_{j})$ for all $j \in [n]$. Thus, (I) and (II) are initially true. (III) and (IV) are initially true because the original WDAG $G$ satisfies the hyperflow equations (\ref{le:dw_var_equation}) and (\ref{le:hyperflow_check_equation}). Moreover, (V) is initially true since $G_0 = G$.\\ 
Inductive Step: We show that, for every $i \geq 1$, if (I), (III), (IV) and (V) are true after iteration $i-1$ of Algorithm \ref{le:alg_trans_tree}, then they are also true after iteration $i$.\\ 
Let $i \geq 1$. In iteration $i$, a variable node $v$ with log-likelihood ratio $\gamma_{i-1}(v)$ is (possibly) replaced by a number $p$ of replicates $\{v_{1}',\dots, v_{p}'\}$ with log-likelihood ratios $\big\{ \frac{e_l}{e_{T}^{(v)}} \gamma_{i-1}(v) ~ | ~ l \in [p] \big\}$. Therefore, the total sum of the added replicates is $ \displaystyle\sum\limits_{l=1}^p \big({ \frac{e_l}{e_{T}^{(v)}} \gamma_{i-1}(v) }\big) = \gamma_{i-1}(v)$ . Thus, (I) is true. By the induction assumption and since $e_l/e_{T}^{(v)} > 0$, it follows that (II) is also true.\\ 
To show that (III) is true, we first note that if $v' \in V_{i}$ was not created during the $i$th iteration, then $v'$ will satisfy (III) after the $i$th iteration. If $v'$ was created during the $i$th iteration, we distinguish two cases:\\ 
In the first case, $v'$ is not a replicate of $v$ (which is the variable node considered in the $i$th iteration). Then, $v'$ is a replicate of $v_{i-1} \in V_{i-1}$. By the induction assumption, $\gamma_{i-1}(v_{i-1})$ and the weights of the adjacent edges to $v_{i-1}$ satisfy (III) before the $i$th iteration. Since $\gamma_{i}(v')$ and the weights of the edges adjacent to $v'$ will be respectively equal to $\gamma_{i-1}(v_{i-1})$ and the weights of the edges adjacent to $v_{i-1}$, scaled by the same positive factor, $v'$ will satisfy (III) after the $i$th iteration.\\ 
In the second case, $v'$ is a replicate of $v$. Assume that $v'$ is the replicate of $v$ corresponding to the edge $(v,c_{0})$ where $c_{0} \in N(v)$ and $w_{i-1}(v,c_{0})>0$. During the $i$th iteration, the subtree corresponding to $v'$ will be created and in this subtree, $\gamma_{i}(v')$ and the weights of the edges incoming to $v'$ will be respectively equal to $\gamma_{i-1}(v)$ and the weights of the edges incoming to $v$, scaled by $\theta(v,c_{0}) = w_{i-1}(v,c_{0})/e_{T}^{(v)}$ where $e_{T}^{(v)} = \displaystyle\sum\limits_{c \in N(v):w_{i-1}(v,c)>0} w_{i-1}(v,c)$. The only outgoing edge of $v'$ will be $(v',c_{0})$. Thus,
\begin{align*}\displaystyle\sum\limits_{c \in N(v'):w_{i}(v',c)>0} w_{i}(v',c) = w_{i}(v',c_{0}) = w_{i-1}(v,c_{0}) &= \theta(v,c_{0}) \displaystyle\sum\limits_{c \in N(v):w_{i-1}(v,c)>0} w_{i-1}(v,c)\\ &< \theta(v,c_{0})\Big({\displaystyle\sum\limits_{c \in N(v):w_{i-1}(v,c) \le 0} (-w_{i-1}(v,c))} + \gamma_{i-1}(v)\Big)\\ & = \theta(v,c_{0}){\displaystyle\sum\limits_{c \in N(v):w_{i-1}(v,c) \le 0} (-w_{i-1}(v,c))} + \theta(v,c_{0})\gamma_{i-1}(v)\\ &= \displaystyle\sum\limits_{c \in N(v'):w_{i}(v',c) \le 0} (-w_{i}(v',c)) + \gamma_{i}(v')
\end{align*}
Therefore, $v'$ will satisfy (III) after the $i$th iteration.\\ 
Equation (IV) follows from the induction assumption and from the fact that we are either uniformly scaling the neighborhood of a check node or leaving it unchanged.\\ 
To prove that (V) is true after the $i$th iteration, let $v$ be the variable node under consideration in the $i$th iteration and consider the function that maps the directed path $h$ of $G_{i-1}$ to the directed path $h'$ of $G_{i}$ as follows:
\begin{enumerate}
\item If $h$ does not contain $v$, then $h'$ is set to $h$.
\item If $h$ contains $v$, then $h$ can be uniquely decomposed into the concatenation $h_{1} h_{2}$ where $h_{1}$ is a directed path of $G_{i-1}$ that ends at $v$ and $h_{2}$ is a directed path of $G_{i-1}$ that starts at $v$. Let $e_{l}$ be the first edge of $h_{2}$. Then, $h'$ is set to $h'_{1} h_{2}$ where $h'_{1}$ is the directed path in the $l$th created subtree of $G'$ that corresponds to $h_{1}$.
\end{enumerate}
This map is a bijection from the set of all directed paths of $G_{i-1}$ to the set of all directed paths of $G_{i}$. Moreover, if the directed path $h$ of $G_{i-1}$ is mapped to the directed path $h'$ of $G_{i}$, then the variable and check nodes of $h'$ are replicates of the corresponding variable and check nodes of $h$.

\end{proof} 

\begin{proof}[\bf{Proof of Theorem \ref{le:transformation}}]
Note that \ref{le:rep_var_item} and \ref{le:rep_check_item} in Theorem \ref{le:transformation} follow from the operation of Algorithm \ref{le:alg_trans_tree}. Moreover, \ref{le:add_up_item}, \ref{le:sign_conservation}, \ref{le:sat_hyperflow_item} and \ref{le:bijection_item} follow from Lemma \ref{le:secondloopinvariant} with $\gamma' = \gamma_{n}$. To prove \ref{le:same_weight_sink}, note that if $G$ has a single sink node $v$, then $v$ will be the last vertex in any topological ordering of the vertices of $G$. Furthermore, if $v$ has a single incoming edge with weight $\alpha$, then it will have only one replicate in $T$, with a single incoming edge having the same weight $\alpha$.\\ 
\end{proof}

\section[Maximum weight of an edge in a regular WDAG on the BSC]{\Large{\bf Maximum weight of an edge in a regular WDAG on the BSC}}\label{le:maxweightregular}

In this section, we present sublinear (in the block length $n$) upper bound on the weight of an edge in a regular WDAG. The main idea of the proof is the following. Consider a $(d_{v},d_{c})$-regular WDAG $G$ (where $d_{v},d_{c}>2$ are constants) corresponding to a hyperflow. Note that each variable node has a log-likelihood ratio of $\pm 1$. Thus, the total amount of flow available in the WDAG is most $n$. Moreover, for a substantial weight to get ``concentrated'' on an edge in the WDAG, the $+1$'s should ``move'' from variable nodes accross the WDAG toward that edge. By the hyperflow equation (\ref{le:hyperflow_check_equation}), each check node cuts its incoming flow by a factor of $d_{c}-1$. Thus, it can be seen that the maximum weight that can get concentrated on an edge is asymptotically smaller than $n$.

\begin{theorem}\label{le:maxweight} (Maximum weight of an edge in a regular WDAG on the BSC)\\ 
Let $G=(V,C,E,w,\gamma)$ be a WDAG corresponding to LP decoding of a $(d_{v},d_{c})$-regular LDPC code (with $d_{v},d_{c}>2$) on the BSC. Let $n=|V|$ and $\alpha_{max}=\underset{e \in E}{\operatorname{max}}{ |w(e)| }$ be the maximum weight of an edge in $G$. Then,
\begin{equation} 
\alpha_{max} \le c n^{\frac{\ln(d_{v}-1)}{\ln(d_{v}-1)+\ln(d_{c}-1)}} = o(n)
\end{equation}
for some constant $c > 0$ depending only on $d_{v}$.
\end{theorem}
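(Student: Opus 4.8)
The plan is to exploit the forest transformation of Theorem~\ref{le:transformation} to reduce the question about an arbitrary edge in a WDAG $G$ to a question about the root edge of a directed weighted tree, and then to analyze how weight can "concentrate" in such a tree under the hyperflow constraints. First I would fix the edge $e=(u,c^*)$ (or $(c^*,u)$) of maximal weight $\alpha_{max}$ in $G$. The idea is to cut $G$ at $e$: restrict attention to the sub-WDAG $G_e$ consisting of $e$ together with everything lying "upstream" of $e$ along directed paths feeding into $e$, making $e$ the unique sink edge with weight $\alpha_{max}$. Applying Theorem~\ref{le:transformation} to $G_e$ produces a directed weighted forest $T$; by property~\ref{le:same_weight_sink} the sink edge of $T$ still carries weight $\alpha_{max}$, by property~\ref{le:add_up_item} and~\ref{le:sign_conservation} the total $\gamma'$-mass of all variable-node replicates is at most $n$ in absolute value (each original $\gamma(v)=\pm1$, and the replicates of a given $v$ have the same sign and sum to $\gamma(v)$, so the total positive mass is at most $n$), and by property~\ref{le:sat_hyperflow_item} the forest still satisfies (\ref{le:dw_var_equation}) and (\ref{le:hyperflow_check_equation}).

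Next I would set up the recursion on the tree $T$ rooted at the sink edge. Walk up from the root: the root edge delivers weight $\alpha_{max}$ into the sink. Following (\ref{le:dw_var_equation}), a variable node replicate $v'$ carrying outgoing flow $F$ on its unique outgoing edge must receive, from incoming edges, flow at least $F - \gamma'(v') \ge F - |\gamma'(v')|$ (when $\gamma'(v')<0$ this only helps, when $>0$ it costs; in the worst case the incoming flow is $\ge F - 1$ if this replicate inherits all of $v$'s unit, but since replicate masses sum to the parent unit, summing over all replicates of one original node the total "discount" is at most $1$). Following the hyperflow check equation (\ref{le:hyperflow_check_equation}): a check node replicate that outputs weight $P$ on one edge must have $d_c-1$ incoming edges each of weight $P$, hence total incoming weight $(d_c-1)P$; conversely going up from the check, the single edge directed out of the check toward higher levels carries weight $P$ while $d_c-1$ edges of weight $P$ fan out downward — so passing through a check node in the "up" direction shrinks the weight on the distinguished branch while creating $d_c-1$ parallel demands. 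Each variable node has $d_v$ edges, so each variable replicate has at most $d_v-1$ incoming edges. I would define $W(t)$ = maximal weight that can appear on an edge at depth-$t$ "above" the sink given that the subtree above it contains at most $N$ units of $\gamma'$-mass, and derive a recursion of the form: mass needed to support weight $\alpha$ at a node grows at least like $\alpha$ times (number of leaves of a balanced tree of the appropriate arity), while the weight itself decays by the check-node factor. Quantitatively, to support weight $\alpha$ on the root edge, after $t$ layers of variable/check pairs the weight is still roughly $\alpha$ on each of $(d_v-1)^i$ branches after $i$ variable-node layers but gets divided among $(d_c-1)$ branches... The balance is that the number of distinct edges carrying near-full weight multiplies by $d_v-1$ per variable layer, and the total $\gamma'$-mass consumed is at least $\alpha \cdot (d_v-1)^{t} /$(something), while the depth-$t$ weight contribution that still "needs sourcing" is $\alpha/(d_c-1)^t$; since a fraction of order $1/(d_c-1)^t$ of the weight can be sourced by a layer of variable nodes that number $(d_v-1)^t$, setting $(d_v-1)^{t}(d_c-1)^{-t}\cdot(\text{stuff})$ against $n$ and optimizing over $t$ yields $\alpha_{max} = O\!\left(n^{\ln(d_v-1)/(\ln(d_v-1)+\ln(d_c-1))}\right)$.

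More concretely, I would prove by induction on the tree structure that the total positive $\gamma'$-mass in the subtree above an edge of weight $\alpha$ satisfies $N \ge \alpha - $ (a geometric-series correction), but crucially that "sourcing" large $\alpha$ forces the tree to branch: each check node layer cuts the weight by $d_c-1$ but only after the weight has been supplied, and a supply of $\delta$ units of weight-deficit at a variable node costs at least $\delta - 1$ in incoming weight spread over $\le d_v-1$ edges. Unwinding $t$ alternating layers: $\alpha \le (d_c-1)^t \cdot (\text{weight per deepest edge}) + (\text{mass discharged in the top } t \text{ layers})$, and the mass discharged in $t$ layers is at most the number of variable-node replicates within depth $t$, which is at most $(d_v-1)^t \cdot (\text{branching of checks}) \le \big((d_v-1)(d_c-1)\big)^{t}$-ish but each only discharges $O(1)$ of actual $\gamma'$... the cleanest bookkeeping: letting $m_t$ be the $\gamma'$-mass within depth $t$ of the root, one shows $\alpha \le (d_c-1)^t \cdot \alpha_{deep} + m_t$ and $\alpha_{deep} \le (\text{weight still needed}) \le \alpha/(d_c-1)^t$ forces, together with $m_t \le n$, a bound $\alpha \le (d_c-1)^t \cdot \frac{n}{(d_v-1)^t} + n$ balanced by choosing $t \approx \frac{\ln n}{\ln(d_v-1)+\ln(d_c-1)}$, giving the claimed exponent. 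The main obstacle I anticipate is the careful accounting of the additive "$+\gamma'(v')$" slack terms in (\ref{le:dw_var_equation}): naively each variable replicate gives a free unit, and there can be many replicates, so one must use property~\ref{le:add_up_item} (replicate masses sum to the parent unit) to argue that the \emph{total} free mass injected throughout the whole subtree above $e$ is bounded by the number of \emph{original} variable nodes, i.e. by $n$, not by the number of replicates. Getting the recursion to respect this global mass budget — rather than a per-node budget that would be vacuous — is the delicate step; everything else is a geometric-series estimate and an optimization over the cut depth $t$.
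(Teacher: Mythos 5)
Your proposal follows the paper's own proof essentially step for step: restricting to the ancestor sub-WDAG $G_{max}$ of the maximal edge, applying Theorem~\ref{le:transformation} so the sink edge keeps weight $\alpha_{max}$ and replicate masses sum to the original $\pm1$ values, deriving the layered inequality $\alpha_{max}\le\sum_i T_i/(d_c-1)^i$ with the branching bound $T_{i+1}\le(d_v-1)(d_c-1)T_i$ and total mass at most $n$, and optimizing over the cut depth --- exactly the content of Definition~\ref{le:Gmax}, Lemma~\ref{le:firstinductivelemma}, Corollary~\ref{le:firstmaximization} and Theorem~\ref{le:unified_theorem}, including the key replicate-mass accounting you flag as the delicate step. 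The one displayed inequality $\alpha\le(d_c-1)^t n/(d_v-1)^t+n$ near the end is mis-stated (the correct per-depth contribution is $T_i(d_c-1)^{-i}\le(d_v-1)^i$, summed up to depth $\log_{(d_v-1)(d_c-1)}n$), but the correct balance appears elsewhere in your sketch and the ingredients you assemble do yield the claimed exponent.
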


We now state and prove a series of lemmas that leads to the proof of Theorem \ref{le:maxweight}.

\begin{definition}\label{le:reverseorientedtree} (Root-oriented tree)\\ 
A root-oriented tree is defined in the same way as the WDAG in Definition \ref{le:hyperflow} and Theorem \ref{le:existenceAWDAG} but with the further constraints that $T$ has a single sink node (which is a variable node) and that $T$ is a tree when viewed as an undirected graph. Note that the name ``root-oriented'' is due to the fact that the edges are oriented toward the root of the tree, as shown in Figure \ref{le:root_oriented_tree_figure}.
\end{definition}

\begin{remark}
Algorithm \ref{le:alg_trans_tree} can also be used to generate the directed weighted forest corresponding to the subset of the WDAG consisting of all variable and check nodes that are ancestors of a given variable node $v$. In this case, the output is a root-oriented tree with its single sink node being the unique replicate of $v$.
\end{remark}

\begin{figure}[!h]

\centering

\begin{tikzpicture}
  [scale=.8,auto=left,every node/.style={circle,fill=gray!20}]
  \node (n1) at (12,10) {$v_{0}$};
  \node (n2) at (10,9)  {$c_{1}$};
  \node (n3) at (14,9)  {$c_{2}$};
  \node (n4) at (9,8) {$v_{1}$};
  \node (n5) at (11,8)  {$v_{2}$};
  \node (n6) at (13,8)  {$v_{3}$};
  \node (n7) at (15,8)  {$v_{4}$};

  \foreach \from/\to in {n2/n1,n3/n1,n4/n2,n2/n1,n5/n2,n6/n3,n7/n3}
    \draw[->] (\from) -- (\to);

\end{tikzpicture}

\caption{Root-oriented tree with root the variable node $v_{0}$}\label{le:root_oriented_tree_figure}

\end{figure}
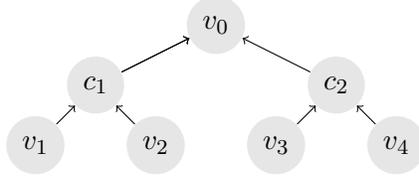

\begin{definition}\label{le:Gmax} ($G_{max}$, $\alpha_{max}$)\\ 
Let $G=(V,C,E,w,\gamma)$ be a WDAG. Let $e_{max}=(v_{max},c_{max}) = \underset{(v,c):w(v,c) \le 0}{\operatorname{argmax}}{~ |w(v,c)|}$ and let $\alpha_{max} = |w(v_{max},c_{max})|$. Let $V_{max} = V_{1} \cup \{v_{max}\}$ where $V_{1}$ is the set of all variable nodes $v \in V$ s.t. $c_{max}$ is reachable from $v$ in $G$ and let $C_{max}$ be the set of all check nodes $c \in C$ s.t. $c_{max}$ is reachable from $c$ in $G$.\footnote{Note that $c_{max} \in C_{max}$.} Let $G_{max}=(V_{max},C_{max},E_{max},w_{max},\gamma_{max})$ be the corresponding WDAG.
\end{definition}

\begin{definition}\label{le:depthreverseoriented} (Depth of a variable node in a root-oriented tree)\\ 
Let $T$ be a root-oriented tree with root $v_0$. For any variable node $v$ in $T$, the depth of $v$ in $T$ is defined to be the number of check nodes on the unique directed path from $v$ to $v_0$ in $T$.
\end{definition}

\begin{definition}\label{le:phifunction} ($F$-function)\\ 
Let $G=(V,C,E,w,\gamma)$ be a WDAG. For any $S \subseteq V$, define $F(S) = \displaystyle\sum\limits_{v \in S} \displaystyle\sum\limits_{c \in N(v):w(v,c) \geq 0} w(v,c)$. In other words, $F(S)$ is the sum of all the ``flow'' leaving variable nodes in $S$ to adjacent check nodes.
\end{definition}

\begin{lemma}\label{le:firstinductivelemma}
Let $G=(V,C,E,w,\gamma)$ be a WDAG corresponding to LP decoding of a $(d_{v},d_{c})$-regular LDPC code (with $d_{v},d_{c}>2$) on the BSC and let $G_{max}=(V_{max},C_{max},E_{max},w_{max},\gamma_{max})$ be the WDAG corresponding to Definition \ref{le:Gmax}. Let $n_{max}=|V_{max}|$ and $T=(V',C',E',w',\gamma')$ be the output of Algorithm \ref{le:alg_trans_tree} on input $G_{max}$. Note that $T$ is a root-oriented tree with root $v_{max}$ which has a single incoming edge with weight $\alpha_{max}$ (by Theorem \ref{le:transformation}). Let $d_{max}$ be the maximum depth of a variable node in $T$ and for any $m \in \{0,\dots,d_{max}\}$, let $S_{m}$ be the set of all variable nodes in $T$ with depth equal to $m$. Moreover, for all $i \in \{0,\dots,d_{max}\}$ and all $j \in [n_{max}]$, let $d_{i,j}$ denote the number of replicates of variable node $v_j$ having depth equal to $i$ in $T$. Furthermore, for every $k \in [d_{i,j}]$, let $\Gamma_{i,j,k}$ be the $\gamma'$ value of the $k$th replicate of $v_j$ among those having depth equal to $i$ in $T$. Then, for all $m \in \{1,\dots,d_{max}\}$, we have:
\begin{equation}\label{le:inductive_proposition}{(P_{m}): ~ ~ ~ F(S_{m}) \geq (d_{c}-1)^{m} \alpha_{max} - \displaystyle\sum\limits_{i=0}^{m-1} (d_{c}-1)^{m-i}\displaystyle\sum\limits_{j=1}^{n_{max}} \displaystyle\sum\limits_{k=1}^{d_{i,j}}  \Gamma_{i,j,k}}\end{equation}
\end{lemma}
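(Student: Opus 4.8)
The plan is to prove $(P_m)$ by induction on $m$, tracking how the total "flow" $F(S_m)$ out of the depth-$m$ layer of the root-oriented tree $T$ relates to the flow out of the layer above it. The base case $m=1$ and the inductive step are structurally the same: each variable node at depth $m-1$ receives its incoming flow from check nodes at depth $m-1$ (in the tree sense), and each such check node, by the hyperflow equation \eqref{le:hyperflow_check_equation}, has exactly one outgoing edge (toward the root side) and $d_c-1$ incoming edges all of equal weight $P_c$; hence the flow the check node passes toward depth $m-1$ is $P_c$, while the flow it draws from depth $m$ is $(d_c-1)P_c$. So summing over all check nodes separating layer $m-1$ from layer $m$, the flow arriving at layer $m-1$ from below gets multiplied by a factor $d_c-1$ as we pass to layer $m$.

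First I would set up the layer bookkeeping. For a variable node $v'$ at depth $m-1$ in $T$ with value $\gamma'(v')$, the hyperflow variable inequality \eqref{le:dw_var_equation} gives that the total outgoing weight of $v'$ is strictly less than the total incoming weight plus $\gamma'(v')$; equivalently, the flow that $v'$ pushes up toward the root is at most (incoming flow to $v'$) $+\ \gamma'(v')$. Rearranged: the incoming flow to $v'$ is at least (flow $v'$ pushes up) $-\ \gamma'(v')$. Summing this over all depth-$(m-1)$ variable nodes, the total flow entering layer $m-1$ from the check nodes below is at least $F(S_{m-1}^{\uparrow}) - \sum_{j,k}\Gamma_{m-1,j,k}$, where $F(S_{m-1}^{\uparrow})$ denotes the flow these nodes send upward — and for $m-1\ge 1$ that upward flow is exactly the quantity lower-bounded by $(P_{m-1})$, while for $m-1=0$ the single node $v_{max}$ sends up nothing but receives exactly $\alpha_{max}$ from its unique incoming edge (this is where property~\ref{le:same_weight_sink} of Theorem~\ref{le:transformation} is used). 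Then I would push this "flow entering layer $m-1$" down through the check nodes: since in $T$ every check node has the split structure \eqref{le:hyperflow_check_equation} with one up-edge of weight $P_c$ and $d_c-1$ down-edges of weight $P_c$ each, the flow $F(S_m)$ leaving layer $m$ is exactly $(d_c-1)$ times the flow entering layer $m-1$ from below. Combining: $F(S_m) \ge (d_c-1)\big(F(S_{m-1}) - \sum_{j,k}\Gamma_{m-1,j,k}\big)$ for $m\ge 2$, and $F(S_1)\ge (d_c-1)\big(\alpha_{max} - \sum_{j,k}\Gamma_{0,j,k}\big)$ for $m=1$. Unrolling this one-step recursion and substituting the inductive hypothesis $(P_{m-1})$ yields exactly $(P_m)$; the telescoping of the $(d_c-1)^{m-i}$ factors is routine.

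I would present the base case $m=1$ first (using only the variable inequality at $v_{max}$ — but $v_{max}$ is the sink, so more precisely using the structure of the check nodes feeding it, together with property~\ref{le:same_weight_sink}), then the inductive step as above. Two small points to handle carefully: (i) the tree $T$ may have several check nodes strictly between consecutive layers, so "depth of a variable node" counts check nodes on its directed path to the root (Definition~\ref{le:depthreverseoriented}), and one must verify that every check node appearing on such a path sits between layer $i$ and layer $i+1$ for a well-defined $i$ and that its down-edges all go to layer $i+1$ — this follows because in a root-oriented tree the directed path to the root is unique and edge weights are strictly positive. (ii) Variable nodes with no incoming edges (sources) contribute $0$ to the "flow pushed up" lower bound via their own copies at the relevant depth, but they still contribute their $\gamma'$ value to the subtracted sum, which is already accounted for since the sums $\sum_{k=1}^{d_{i,j}}\Gamma_{i,j,k}$ range over all replicates at depth $i$.

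The main obstacle I expect is bookkeeping rather than a conceptual difficulty: correctly defining "the flow entering layer $m-1$ from below" when the tree is not layered level-by-level by check nodes, and making sure the factor-$(d_c-1)$ step aggregates cleanly over all check nodes on all root-paths of length exactly $m$ versus $m-1$. Once one observes that in a root-oriented tree each check node has a single up-edge and that its $d_c-1$ down-edges carry equal weight, the multiplicative structure is forced and the induction closes. No deep idea is needed beyond the hyperflow equations \eqref{le:dw_var_equation}, \eqref{le:hyperflow_check_equation} and the forest properties of Theorem~\ref{le:transformation}.
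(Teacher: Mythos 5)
Your proposal follows the paper's proof essentially verbatim: the paper likewise proceeds by induction on $m$, deriving the one-step recursion $F(S_{m+1}) \ge (d_{c}-1)\big(F(S_{m}) - \sum_{j,k}\Gamma_{m,j,k}\big)$ from the variable-node inequality (\ref{le:dw_var_equation}) together with the $(d_{c}-1)$-fold amplification at each check node forced by (\ref{le:hyperflow_check_equation}), and handling the base case via the unique weight-$\alpha_{max}$ edge into $v_{max}$ guaranteed by part \ref{le:same_weight_sink} of Theorem \ref{le:transformation}. The bookkeeping points you flag (uniqueness of root-paths in $T$, one outgoing edge per non-root variable node) are exactly what make the paper's layer sets $S_{m+1}=\Delta(S_{m})$ well defined, so your argument matches the paper's and is complete.
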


\begin{proof}[\bf{Proof of Lemma \ref{le:firstinductivelemma}}]
For any $S \subseteq V'$, let $\Delta(S)$ be the set of all $v \in V'$ for which there exist $s \in S$ and a directed path from $v$ to $s$ in $T$ containing exactly one check node. We proceed by induction on $m$.\\ 
Base Case: $m=1$. We note that $S_{1}=\Delta(\{v_{max}\})$ and that $v_{max}$ is the only variable node in $T$ having depth equal to $0$ in $T$. Hence, for the hyperflow to satisfy (\ref{le:hyperflow_check_equation}), we should have:
\begin{equation*}F(S_{1}) \geq (d_{c}-1)(\alpha_{max}-\gamma'(v_{max})) = (d_{c}-1)\alpha_{max} - \displaystyle\sum\limits_{i=0}^0 (d_{c}-1)^{1}\displaystyle\sum\limits_{j=1}^{n_{max}} \displaystyle\sum\limits_{k=1}^{d_{i,j}}  \Gamma_{i,j,k} \end{equation*}
Note that the last equality follows from the facts that $d_{0,j} = 1$ if $v_{j} = v_{max}$ and $d_{0,j} = 0$ otherwise, and that $\Gamma_{i,j,k} = \gamma'(v_{max})$ if $v_{j} = v_{max}$ and $k=1$ and $\Gamma_{i,j,k} = 0$ otherwise.\\ 
Inductive Step: We need to show that if $(P_{m})$ is true for some $1 \le m \le d_{max}-1$, then $(P_{m+1})$ is also true. Assuming that $(P_{m})$ is true, $S_{m}$ satisfies Equation (\ref{le:inductive_proposition}). Since $T$ is a root-oriented tree, $S_{m+1}=\Delta(S_{m})$. Hence, for the hyperflow to satisfy (\ref{le:hyperflow_check_equation}), we should have:
\begin{align*}
F(S_{m+1}) &\geq (d_{c}-1) \big(F(S_{m}) - \displaystyle\sum\limits_{j=1}^{n_{max}} \displaystyle\sum\limits_{k=1}^{d_{m,j}} \Gamma_{m,j,k} \big)\\ 
&\geq (d_{c}-1)[(d_{c}-1)^{m} \alpha_{max} - \displaystyle\sum\limits_{i=0}^{m-1} (d_{c}-1)^{m-i}\displaystyle\sum\limits_{j=1}^{n_{max}} \displaystyle\sum\limits_{k=1}^{d_{i,j}}  \Gamma_{i,j,k} - \displaystyle\sum\limits_{j=1}^{n_{max}} \displaystyle\sum\limits_{k=1}^{d_{m,j}} \Gamma_{m,j,k} ]\\ 
&= (d_{c}-1)^{m+1} \alpha_{max} - \displaystyle\sum\limits_{i=0}^m (d_{c}-1)^{m+1-i}\displaystyle\sum\limits_{j=1}^{n_{max}} \displaystyle\sum\limits_{k=1}^{d_{i,j}} \Gamma_{i,j,k}
\end{align*}
\end{proof}

\begin{definition}\label{le:depthsingle} (Depth of a variable node in a WDAG with a single sink node)\\ 
Let $G=(V,C,E,w,\gamma)$ be a WDAG with a single sink node $v_0 \in V$ and let $v \in V$. The depth of $v$ in $G$ is defined to be the minimal number of check nodes on a directed path from $v$ to $v_0$ in $G$.
\end{definition}

\begin{corollary}\label{le:firstmaximization}
Let \ensuremath{g_{max}} be the maximum depth of a variable node $v \in V_{max}$ in the WDAG $G_{max}$ (which has a single sink node $v_{max}$).\footnote{Note that in general $g_{max} \le d_{max}$ but the two quantities need not be equal.} Then,
\begin{equation}\label{le:max_equation} \alpha_{max} \le \max_{ (T_{0}, \dots , T_{g_{max}}) \in W} { f(T_{0}, \dots , T_{g_{max}}) } \end{equation}
where: 
\begin{equation*} f(T_{0}, \dots , T_{g_{max}}) = \displaystyle\sum\limits_{i=0}^{g_{max}} \frac{T_{i}}{(d_{c}-1)^{i}} \end{equation*}
and $W$ is the set of all tuples $(T_{0}, \dots , T_{g_{max}}) \in \mathbb{N}^{g_{max}+1}$ satisfying the following three equations:
\begin{equation}\label{le:sum_equation}
\displaystyle\sum\limits_{i=0}^{g_{max}} T_{i} = n_{max}
\end{equation}
\begin{equation}\label{le:base_equation}
T_{0} = 1
\end{equation}
\begin{equation}\label{le:inductive_equation}
\text{For all } i \in \{0,\dots,g_{max}-1\}, ~ T_{i+1} \le (d_{c}-1)(d_{v}-1)T_{i}
\end{equation}
\end{corollary}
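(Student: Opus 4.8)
\textbf{Proof proposal for Corollary~\ref{le:firstmaximization}.}

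The plan is to extract from Lemma~\ref{le:firstinductivelemma} an upper bound on $\alpha_{max}$ in terms of the ``layer sizes'' of the root-oriented tree $T$, then observe that these layer sizes satisfy the three constraints defining $W$, so that $\alpha_{max}$ is dominated by the maximum of $f$ over $W$. First I would rearrange the inequality $(P_m)$ from Lemma~\ref{le:firstinductivelemma}: dividing through by $(d_c-1)^m$ gives
\begin{equation*}
\alpha_{max} \le \frac{F(S_m)}{(d_c-1)^m} + \sum_{i=0}^{m-1} \frac{1}{(d_c-1)^i} \sum_{j=1}^{n_{max}} \sum_{k=1}^{d_{i,j}} \Gamma_{i,j,k}.
\end{equation*}
Since the code is on the BSC, each $\gamma(v_j)\in\{+1,-1\}$, and by property~\ref{le:add_up_item} and~\ref{le:sign_conservation} of Theorem~\ref{le:transformation} the replicate values $\Gamma_{i,j,k}$ (ranging over all depths $i$ and all replicates $k$) are nonnegative and sum over everything to at most $n_{max}$; more precisely, for each fixed $j$ the total $\sum_i\sum_k \Gamma_{i,j,k} = \gamma(v_j) \le 1$ when $\gamma(v_j)=+1$ and the contribution is $\le 0$ when $\gamma(v_j)=-1$. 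I would use this to bound $\sum_{k=1}^{d_{i,j}}\Gamma_{i,j,k}$ by (a quantity no larger than) the number of replicates of $v_j$ at depth $i$, i.e. by $d_{i,j}$, since positive $\gamma$-values are exactly $+1$ and negative contributions only help. Setting $T_i \defeq \sum_{j=1}^{n_{max}} d_{i,j}$, the number of variable-node replicates at depth $i$, and also bounding $F(S_m)\le T_m$ (each replicate has a single outgoing edge of weight equal to $\gamma'$ plus the incoming flow, but one needs to argue the flow out of layer $d_{max}$ is itself bounded by the number of source replicates there — these are leaves of the tree, so their outgoing weight is bounded by their $\gamma'$, hence by $T_{d_{max}}$), I would take $m=g_{max}$ — or rather observe the telescoped form directly — to conclude $\alpha_{max}\le \sum_{i=0}^{g_{max}} T_i/(d_c-1)^i = f(T_0,\dots,T_{g_{max}})$.

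Next I would verify that $(T_0,\dots,T_{g_{max}})$ lies in $W$. Equation~\eqref{le:sum_equation} holds because every variable-node replicate of $T$ has a well-defined depth between $0$ and $d_{max}$ (and one groups $g_{max}$ with $d_{max}$ appropriately, or works with $d_{max}$ and notes the extra layers only add nonnegative terms that can be absorbed — this bookkeeping between $g_{max}$ and $d_{max}$ is a point to handle carefully) and $\sum_i T_i$ counts all replicates, which by Theorem~\ref{le:transformation}\ref{le:bijection_item} and the structure of Algorithm~\ref{le:alg_trans_tree} restricted to ancestors of $v_{max}$ equals $n_{max}=|V_{max}|$. Equation~\eqref{le:base_equation} holds because $v_{max}$ is the unique sink, hence the unique depth-$0$ variable node, so $T_0=1$. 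Equation~\eqref{le:inductive_equation} is the key structural fact: a replicate at depth $i+1$ is connected to a depth-$i$ replicate through exactly one check node lying at ``check-depth'' between $i$ and $i+1$; each such check node has $d_c-1$ incoming variable neighbors (degree $d_c$, one outgoing edge toward the root by the hyperflow property~\eqref{le:hyperflow_check_equation}), and each depth-$i$ variable replicate has $d_v-1$ outgoing edges going ``up'' in the WDAG toward distinct checks (degree $d_v$, with at least one edge — the one in its path to the root — directed away), so the number of depth-$(i+1)$ replicates is at most $(d_v-1)(d_c-1)$ times the number at depth $i$. Since $f$ depends on $(T_0,\dots,T_{g_{max}})$ only and every valid tree yields a point of $W$, taking the max over $W$ gives~\eqref{le:max_equation}.

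The main obstacle I anticipate is making the two ``bounding'' steps rigorous simultaneously: (i) bounding $F(S_m)$ and the residual $\gamma'$-sums so that everything collapses to the clean form $\sum_i T_i/(d_c-1)^i$ — in particular showing that replacing the signed quantities $\sum_k \Gamma_{i,j,k}$ by the unsigned replicate counts $d_{i,j}$ is valid (it is, because at a BSC node with $\gamma=+1$ the replicate $\gamma'$-values are positive reals summing to $1\le d_{i,j}$ when $d_{i,j}\ge 1$, and when $\gamma=-1$ the term is negative and dropping it only weakens the bound) — and (ii) reconciling the tree depth $d_{max}$ used in Lemma~\ref{le:firstinductivelemma} with the WDAG depth $g_{max}$ used here, given the footnote's warning that $g_{max}\le d_{max}$ but they need not coincide; I would resolve this by noting that all layers with index $>g_{max}$ contribute nonnegatively and can be merged back or simply that the constraint~\eqref{le:inductive_equation} together with~\eqref{le:sum_equation} makes the value of $f$ no smaller when we allow the full range of depths, so bounding by the max over the (smaller) index set $\{0,\dots,g_{max}\}$ is the correct and weakest statement. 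Beyond that, the remaining verifications are the routine degree-counting arguments sketched above.
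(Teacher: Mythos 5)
Your overall strategy (feed the inductive inequality of Lemma~\ref{le:firstinductivelemma} into a layer-by-layer count and check that the counts lie in $W$) matches the paper's, but there is a genuine gap in the middle step: you define $T_i \defeq \sum_j d_{i,j}$, the number of variable-node \emph{replicates} at depth $i$ in the tree $T$, and then assert that $\sum_i T_i = n_{max}$. That is false. Algorithm~\ref{le:alg_trans_tree} duplicates the entire ancestor subtree of each variable node once per outgoing edge, so the total number of replicates in $T$ can be exponentially larger than $|V_{max}|$. With your $T_i$ the constraint~\eqref{le:sum_equation} fails (in the harmful direction: the sum exceeds $n_{max}$), so the tuple need not lie in $W$ and the maximization over $W$ no longer upper-bounds $\alpha_{max}$. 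Bounding $\sum_k \Gamma_{i,j,k}$ by the raw replicate count $d_{i,j}$ is exactly the step that throws away the structure needed to keep the total at $n_{max}$.

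The missing idea is the paper's regrouping by \emph{original} variable node. From
$\alpha_{max} \le \sum_{i} (d_c-1)^{-i} \sum_{j}\sum_{k} \Gamma_{i,j,k}$
one passes to absolute values (legitimate, as you note, since same-signed) and then swaps the order of summation: by part~\ref{le:bijection_item} of Theorem~\ref{le:transformation}, every replicate of $v_j$ has tree-depth at least $d_j$, the depth of $v_j$ in $G_{max}$, so $(d_c-1)^{-i} \le (d_c-1)^{-d_j}$ whenever $\Gamma_{i,j,k}\neq 0$; and by parts~\ref{le:add_up_item} and~\ref{le:sign_conservation}, $\sum_{i,k}|\Gamma_{i,j,k}| = |\gamma(v_j)| = 1$ for each $j$. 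This collapses the bound to $\sum_{j} (d_c-1)^{-d_j} = \sum_i T_i/(d_c-1)^i$ where now $T_i$ counts original variable nodes of $G_{max}$ at WDAG-depth $i$ (Definition~\ref{le:depthsingle}). For these $T_i$ one genuinely has $\sum_i T_i = n_{max}$, $T_0 = 1$, $T_{i+1}\le (d_c-1)(d_v-1)T_i$ by the degree count you already gave (which works equally well in $G_{max}$), and $T_i = 0$ for $i > g_{max}$, which disposes of the $g_{max}$-versus-$d_{max}$ bookkeeping cleanly. Your degree-counting argument for~\eqref{le:inductive_equation} and your treatment of the sign issue and of $F(S_{d_{max}})$ are fine; it is only the identification of the $T_i$'s that must be repaired.
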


\begin{proof}[\bf{Proof of Corollary \ref{le:firstmaximization}}]
Setting $m=d_{max}$ in Lemma \ref{le:firstinductivelemma} and noting that the leaves of $T$ have no entering flow, we get:
\begin{equation*}
\displaystyle\sum\limits_{j=1}^{n_{max}} \displaystyle\sum\limits_{k=1}^{d_{d_{max},j}}  \Gamma_{d_{max},j,k} \geq F(S_{d_{max}}) \geq (d_{c}-1)^{d_{max}} \alpha_{max} - \displaystyle\sum\limits_{i=0}^{d_{max}-1} (d_{c}-1)^{d_{max}-i}\displaystyle\sum\limits_{j=1}^{n_{max}} \displaystyle\sum\limits_{k=1}^{d_{i,j}} \Gamma_{i,j,k}
\end{equation*}
Thus,
\begin{equation*}
\alpha_{max} \le \displaystyle\sum\limits_{i=0}^{d_{max}} \frac{1}{(d_{c}-1)^{i}}\displaystyle\sum\limits_{j=1}^{n_{max}} \displaystyle\sum\limits_{k=1}^{d_{i,j}}  \Gamma_{i,j,k}
\end{equation*}
Part \ref{le:bijection_item} of Theorem \ref{le:transformation} implies that for all $v \in V_{max}$, the depth of $v$ in $G_{max}$ is equal to the minimum depth in $T$ of a replicate of $v$. By parts \ref{le:add_up_item} and \ref{le:sign_conservation} of Theorem \ref{le:transformation}, we also have that for all $j \in [n_{max}]$, $\displaystyle\sum\limits_{i=0}^{d_{max}} \displaystyle\sum\limits_{k=1}^{d_{i,j}} \Gamma_{i,j,k} \le 1$ and for all $i \in \{0,\dots,d_{max}\}$ and all $k \in [d_{i,j}]$, $\Gamma_{i,j,k} \le 1$ and $\{\Gamma_{i,j,k}\}_{i,k}$ all have the same sign. For every $ j\in [n_{max}]$, let $d_{j}$ be the depth of $v_j$ in $G_{max}$ and note that $d_j \le i$ for every $i \in \{0,\dots,d_{max}\}$ for which there exists $k \in [d_{i,j}]$ s.t. $\Gamma_{i,j,k} \neq 0$. Thus, we get that:
\begin{equation*}
\alpha_{max} \le \displaystyle\sum\limits_{i=0}^{d_{max}} \frac{1}{(d_{c}-1)^{i}}\displaystyle\sum\limits_{j=1}^{n_{max}} \displaystyle\sum\limits_{k=1}^{d_{i,j}}  |\Gamma_{i,j,k}| \le \displaystyle\sum\limits_{j=1}^{n_{max}} \frac{1}{(d_{c}-1)^{d_{j}}} \displaystyle\sum\limits_{i=0}^{d_{max}} \displaystyle\sum\limits_{k=1}^{d_{i,j}}  |\Gamma_{i,j,k}| = \displaystyle\sum\limits_{i=0}^{d_{max}} \frac{1}{(d_{c}-1)^{i}} T_i
\end{equation*}
where the last equality follows from the fact that $\displaystyle\sum\limits_{i=0}^{d_{max}} \displaystyle\sum\limits_{k=1}^{d_{i,j}}  |\Gamma_{i,j,k}| = |\displaystyle\sum\limits_{i=0}^{d_{max}} \displaystyle\sum\limits_{k=1}^{d_{i,j}} \Gamma_{i,j,k} | = 1$ for every $j \in [n_{max}]$ with $T_i$ being the number of variable nodes with depth equal to $i$ in $G_{max}$ for every $i \in [d_{max}]$. Note that the notion of depth used here is the one given in Definition \ref{le:depthsingle} since $G_{max}$ is a WDAG with a single sink node $v_{max}$. Since $T_{i} =0$ for all $g_{max} < i \le d_{max}$, we get:
\begin{equation*}
\alpha_{max} \le \displaystyle\sum\limits_{i=0}^{g_{max}} \frac{1}{(d_{c}-1)^{i}} T_i
\end{equation*}
Equations (\ref{le:sum_equation}), (\ref{le:base_equation}) and (\ref{le:inductive_equation}) follow from the definitions of $T_i$ and $g_{max}$.
\end{proof}

\begin{lemma}\label{le:finalupperbound}
The RHS of Equation (\ref{le:max_equation}) is at most $c \times (n_{max})^{\frac{\ln(d_{v}-1)}{\ln(d_{v}-1)+\ln(d_{c}-1)}}$ for some constant $c > 0$ depending only on $d_{v}$.
\end{lemma}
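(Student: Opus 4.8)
The right-hand side of (\ref{le:max_equation}) (from Corollary~\ref{le:firstmaximization}) is the optimum of a concrete program: maximize $f(T_{0},\dots,T_{g_{max}})=\sum_{i=0}^{g_{max}}T_{i}/(d_{c}-1)^{i}$ over nonnegative-integer tuples with $\sum_{i}T_{i}=n_{max}$, $T_{0}=1$, and $T_{i+1}\le(d_{c}-1)(d_{v}-1)T_{i}$. The plan is to bound this optimum directly, without solving the program exactly; no exchange argument is needed, because two crude bounds on $T_{i}$ (one sharp for small $i$, one for large $i$) already meet at the desired exponent. Abbreviate $a:=(d_{c}-1)(d_{v}-1)$, $b:=d_{c}-1$ and $\beta:=\frac{\ln(d_{v}-1)}{\ln(d_{v}-1)+\ln(d_{c}-1)}$, and note $b\ge2$ and $d_{v}-1\ge2$ since $d_{c},d_{v}>2$. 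The only structural input is that $T_{i}\le a^{i}$ for every $i$, which follows from $T_{0}=1$ and the growth constraint by immediate induction.

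Next I would split $f$ at an integer threshold $t\ge0$. For the head, $T_{i}\le a^{i}$ together with $a/b=d_{v}-1$ gives a geometric bound
\[
\sum_{i=0}^{t}\frac{T_{i}}{b^{i}}\ \le\ \sum_{i=0}^{t}(d_{v}-1)^{i}\ \le\ \frac{(d_{v}-1)^{t+1}}{d_{v}-2},
\]
while for the tail only the mass constraint is used: $\sum_{i>t}T_{i}/b^{i}\le b^{-(t+1)}\sum_{i>t}T_{i}\le n_{max}/b^{t}$. Hence $f\le\frac{(d_{v}-1)^{t+1}}{d_{v}-2}+\frac{n_{max}}{(d_{c}-1)^{t}}$ for every integer $t\ge0$ (when $t\ge g_{max}$ the tail sum is empty and the head bound alone already gives this, so the inequality is uniform in $t$).

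Finally I would balance the two terms. The real number $t^{\star}:=\frac{\ln n_{max}}{\ln(d_{v}-1)+\ln(d_{c}-1)}$ satisfies $(d_{v}-1)^{t^{\star}}=n_{max}^{\beta}$ and $(d_{c}-1)^{t^{\star}}=n_{max}^{1-\beta}$, so both terms equal $n_{max}^{\beta}$ at $t^{\star}$. Taking the integer $t=\lceil t^{\star}\rceil\ge0$: the tail term is at most $n_{max}/(d_{c}-1)^{t^{\star}}=n_{max}^{\beta}$, and since $t+1\le t^{\star}+2$ the head term is at most $\frac{(d_{v}-1)^{2}}{d_{v}-2}\,n_{max}^{\beta}$. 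Therefore $f\le\big(\frac{(d_{v}-1)^{2}}{d_{v}-2}+1\big)n_{max}^{\beta}$, which is the asserted bound with a constant depending only on $d_{v}$.

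The step I expect to require the most care is the last one: rounding $t^{\star}$ \emph{up} (so that the tail term stays exactly $\le n_{max}^{\beta}$ and the head inflates only by a $d_{v}$-dependent factor, keeping $d_{c}$ out of the final constant) and checking the exponent arithmetic $(d_{v}-1)^{t^{\star}}(d_{c}-1)^{t^{\star}}=n_{max}$ that makes the two crude bounds meet precisely at $n_{max}^{\beta}$. This is bookkeeping rather than a genuine difficulty.
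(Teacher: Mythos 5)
Your proof is correct, but it follows a genuinely different route from the paper's. The paper deduces Lemma \ref{le:finalupperbound} from the general Theorem \ref{le:unified_theorem} in the appendix, which is proved by identifying the exact (unique up to leading zeros) maximizer of the program via an exchange-type comparison (Lemmas \ref{le:upperinductivebound}--\ref{le:uniquesolutionRHS}): the extremal tuple saturates $T_i = \lambda\beta^i$ up to a cutoff index $l$ and dumps the remaining mass at level $l+1$, after which the value is computed directly. You instead bound the objective of \emph{every} feasible tuple by splitting at a threshold $t$, using the growth constraint $T_i \le ((d_c-1)(d_v-1))^i$ for the head (a geometric sum with ratio $d_v-1$) and only the mass constraint $\sum_i T_i = n_{max}$ for the tail, then choosing $t = \lceil t^\star\rceil$ with $t^\star = \ln n_{max} / (\ln(d_v-1)+\ln(d_c-1))$ to balance the two terms at $n_{max}^{\beta}$. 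Your argument is shorter and more elementary since it never needs to locate the optimizer, and your care in rounding $t^\star$ up keeps the final constant $\frac{(d_v-1)^2}{d_v-2}+1$ dependent on $d_v$ alone, matching the lemma's requirement (the paper's constant $\frac{(d_v-1)^2}{d_v-2}$ is essentially the same). What the paper's heavier machinery buys is reuse: Theorem \ref{le:unified_theorem} with general $\lambda$ and $\beta$ is invoked again for the spatially coupled case (Lemma \ref{le:upperboundcalc2}), and the exact extremal configuration it identifies is what makes the tightness construction of Theorem \ref{le:asymptotictightness} natural; your split-and-balance bound would also generalize to arbitrary $\lambda,\beta$ with no extra work, so this is a stylistic rather than substantive trade-off.
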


\begin{proof}[\bf{Proof of Lemma \ref{le:finalupperbound}}]
Follows from Theorem \ref{le:unified_theorem} with $\lambda = 1$, $\beta = (d_{c}-1)(d_{v}-1)$ and $m=n_{max}$.
\end{proof}

\begin{proof}[\bf{Proof of Theorem \ref{le:maxweight}}]
Theorem \ref{le:maxweight} follows from Corollary \ref{le:firstmaximization} and Lemma \ref{le:finalupperbound} by noting that $|V_{max}| \le |V|$ since $V_{max} \subseteq V$ and that $\underset{e \in E}{\operatorname{max}}{~ |w(e)|} = \Omega(\underset{(v,c):w(v,c) \le 0}{\operatorname{max}}{~ |w(v,c)|})$ by the hyperflow equation (\ref{le:hyperflow_check_equation}).
\end{proof}

We now show that the bound given in Theorem \ref{le:maxweight} is asymptotically tight in the case of $(d_{v},d_{c})$-regular LDPC codes.
\begin{theorem}\label{le:asymptotictightness} (Asymptotic tightness of Theorem \ref{le:maxweight} for $(d_{v},d_{c})$-regular LDPC codes)\\ 
There exists an infinite family of $(d_{v},d_{c})$-regular Tanner graphs $\{(V_{n},C_{n},E_{n})\}_n$, an infinite family of error patterns $\{\gamma_{n}\}_n$ and a positive constant $c$ s.t. there exists a hyperflow for $\gamma_{n}$ on $(V_{n},C_{n},E_{n})$ and any WDAG $(V_{n},C_{n},E_{n},w,\gamma_{n})$ corresponding to a hyperflow for $\gamma_{n}$ on $(V_{n},C_{n},E_{n})$ must have
\begin{equation*}
\underset{e \in E_{n}}{\operatorname{max}}{ |w(e)| } \geq c n^{\frac{\ln(d_{v}-1)}{\ln(d_{v}-1)+\ln(d_{c}-1)}}
\end{equation*}
\end{theorem}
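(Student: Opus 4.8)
Set $\theta=\frac{\ln(d_v-1)}{\ln(d_v-1)+\ln(d_c-1)}$. The plan is to attain the bound of Theorem~\ref{le:maxweight} up to a constant factor by producing, for each large $d$, a configuration that saturates every inequality behind Lemma~\ref{le:firstinductivelemma} and Corollary~\ref{le:firstmaximization}.

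\emph{The extremal tree.} First I would build a finite rooted tree $\mathcal{T}_d$ mirroring the worst case of Corollary~\ref{le:firstmaximization}: a root variable node $v_0$ with a single child check $c_0$, below which check and variable layers alternate for $d$ variable layers, each interior variable node having one parent check and $d_v-1$ child checks, and each check having one parent variable node and $d_c-1$ child variable nodes, so that $|V(\mathcal{T}_d)|=\Theta\big(((d_v-1)(d_c-1))^{d}\big)$. On $\mathcal{T}_d$ I set $\gamma\equiv+1$ and put the ``geometric-decay'' hyperflow: weight $\alpha_d$ on the edge $(v_0,c_0)$, weights decaying from layer to layer by the rule $g\mapsto(g-1)/(d_v-1)$ (split equally among the $d_v-1$ child checks of a node and copied through each check to its $d_c-1$ children), chosen so the required outgoing weight at the leaves is just below $1$ and is supplied by their $\gamma=+1$. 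The very telescoping computation of Lemma~\ref{le:firstinductivelemma} shows that (\ref{le:dw_var_equation}) and (\ref{le:hyperflow_check_equation}) hold on $\mathcal{T}_d$ and that $\alpha_d=\Theta((d_v-1)^{d})$; since $(d_v-1)^{d}=\big(((d_v-1)(d_c-1))^{d}\big)^{\theta}$, this is $\Theta\big(|V(\mathcal{T}_d)|^{\theta}\big)$.

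\emph{Closing up and forcing the weight.} Two things remain: (a) complete $\mathcal{T}_d$ to a genuine $(d_v,d_c)$-regular Tanner graph $\mathcal{G}_d$ with $n=\Theta(|V(\mathcal{T}_d)|)$, and (b) choose an error pattern $\eta_d$ so that LP decoding succeeds yet \emph{every} hyperflow for $\eta_d$ on $\mathcal{G}_d$ carries weight $\Omega(\alpha_d)$ on some edge. For (a) I would attach, to the $d_v-1$ free half-edges of $v_0$ and of each leaf, a bounded-size $(d_v,d_c)$-regular ``cap'' of girth exceeding $2d$; the large girth ensures no hyperflow can feed a non-negligible amount of flow into the interior of $\mathcal{T}_d$ from outside it, so the reverse of the argument in Lemma~\ref{le:firstinductivelemma}, run on $\mathcal{G}_d$, still effectively sees only $\mathcal{T}_d$. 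For (b) I would make the decoding ``tight'': arrange a single high-weight competitor $z$ (a codeword or pseudocodeword with support essentially $V(\mathcal{T}_d)$) that nearly ties with $\mathbf{0}$, with $\eta_d$ picked so that the Hamming weight of $\eta_d$ is strictly below that of $\eta_d\oplus z$ while no other vector of comparably small cost competes. Then the reverse induction of Lemma~\ref{le:firstinductivelemma} shows the demand $\eta_d$ places on $\mathcal{T}_d$ --- each error node must, by (\ref{le:dw_var_equation}), have net incoming weight larger than $1$ --- cannot be discharged along $\mathcal{T}_d$ unless some edge carries $\Omega(\alpha_d)$: layer $i$ of $\mathcal{T}_d$ contains only $((d_v-1)(d_c-1))^{i}$ variable nodes, each contributing total weight at most $1$ by parts~\ref{le:add_up_item} and~\ref{le:sign_conservation} of Theorem~\ref{le:transformation}, so the available flow is exhausted only after depth $\Theta(\log_{d_v-1}\alpha_d)$. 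Combining with $n=\Theta(((d_v-1)(d_c-1))^{d})$ gives $\max_e|w(e)|\ge c\,n^{\theta}$, and such a hyperflow exists by construction, so LP decoding succeeds.

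\emph{Main obstacle.} The delicate step is (b): guaranteeing that the chosen $\eta_d$ forces \emph{every} hyperflow to be heavy rather than merely admitting one. This needs the competitor $z$ to leave the LP essentially without slack along $\mathcal{T}_d$, so the dual certificate is pinned down up to the unique routing permitted by the tree shape and must replicate the geometric-decay hyperflow and put $\Omega(\alpha_d)$ on $(v_0,c_0)$; and it needs the high-girth cap both to genuinely prevent cheap external supply and to introduce no new low-cost competitor. Verifying these points, and carrying through the telescoping of the reverse induction, is the bulk of the work; the exponent bookkeeping is then exactly that of the proof of Theorem~\ref{le:maxweight}, read backwards.
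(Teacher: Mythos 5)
There is a genuine gap at the step you yourself flag as the main obstacle, namely forcing \emph{every} hyperflow to carry weight $\Omega(\alpha_d)$ on some edge. Your existence construction (the all-$+1$ tree with geometric-decay weights) is fine and matches the flavor of the paper's $A$ blocks, but your proposed forcing mechanism does not work as described. First, a primal competitor $z$ that ``nearly ties'' with $\mathbf{0}$ does not pin down the dual certificate: dual witnesses here are defined by strict inequalities (\ref{le:dw_var_equation}) and are far from unique even when the primal optimum is unique, so complementary-slackness-type reasoning cannot force a particular edge of $\mathcal{T}_d$ to be heavy. Second, the claim that a high-girth cap ``prevents flow from feeding into the interior of $\mathcal{T}_d$'' is unjustified --- girth controls cycles, not the direction or magnitude of a hyperflow; nothing stops the cap's variable nodes (which carry $\gamma=+1$) from supplying flow through their edges into the tree and discharging the demand without any heavy edge. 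Since your $\eta_d$ and $z$ are never constructed, the lower-bound half of the theorem is not proved.

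The paper's proof resolves exactly this difficulty by a different device: the demand side is made structural rather than variational. It attaches a $B$ block whose variable nodes all have $\gamma_n=-1$ and whose internal check nodes are $2$-regular within the block, so that the block is a tree of $(d_v-1)^{y_n}=\Theta(n^{\theta})$ leaves each of which \emph{must}, by (\ref{le:dw_var_equation}), receive net incoming weight exceeding $1$; the connecting check nodes added to regularize the graph are each attached to zero or at least two leaves of the $B$ block, so by the hyperflow constraint (\ref{le:hyperflow_check_equation}) they cannot supply the needed flow, and an induction up the layers shows all $B$-block edges point toward the leaves and a total flow larger than $(d_v-1)^{y_n}$ must enter through the single edge from $c_0$ to the root of $B$. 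The $A$ blocks (branching at rate $(d_v-1)(d_c-1)$, all $\gamma=+1$) supply this flow feasibly and dominate $n$, giving the exponent $\theta$. If you want to salvage your approach, you should replace the ``tight LP competitor'' idea with such an explicit sink of $-1$'s whose topology forces the flow through one edge; as written, step (b) is a missing idea, not a deferred verification.
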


\begin{proof}[\bf{Proof of Theorem \ref{le:asymptotictightness}}]
See Appendix \ref{le:app_asymptotic_tightness}.
\end{proof}

\section[Maximum weight of an edge in the WDAG of a spatially coupled code on the BSC]{\Large{\bf Maximum weight of an edge in the WDAG of a spatially coupled code on the BSC}}\label{le:maxweightsc_section}

The upper bound of Theorem \ref{le:maxweight} holds for $(d_{v},d_{c})$-regular LDPC codes. In this section, we derive a similar sublinear (in the block length $n$) upper bound that holds for spatially coupled codes. 

\begin{theorem}\label{le:maxweightsc} (Maximum weight of an edge in a spatially coupled code)\\ 
Let $G=(V,C,E,w,\gamma)$ be a WDAG corresponding to LP decoding of any code of the $(d_{v},d_{c}=kd_{v},L,M)$ spatially coupled ensemble on the BSC. Let $n=(2L+1)M=|V|$ be the block length of the code. Let $\alpha_{max}=\underset{e \in E}{\operatorname{max}}{ |w(e)| }$ be the maximum weight of an edge in $G$. Then,
\begin{equation} 
\ensuremath{\alpha_{max} \le c n^{\frac{\ln(q)-\ln(d_{c}-1)}{\ln(q)}} = c n^{1-\epsilon}} = o(n)
\end{equation}
for some constant $c > 0$ depending only on $d_{v}$ and where $q = d_{v}(d_{c}-1)\frac{(d_{v}-1)^{d_{v}}-1}{d_{v}-2}$ and $0< \epsilon = \frac{\ln(d_{c}-1)}{\ln(q)} <1$.
\end{theorem}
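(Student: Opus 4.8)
The plan is to replay the argument of Section~\ref{le:maxweightregular} on spatially coupled Tanner graphs, with one new combinatorial estimate. Let $G=(V,C,E,w,\gamma)$ be as in the statement; it is a WDAG, hence acyclic as a directed graph, and by (\ref{le:hyperflow_check_equation}) all edges incident to a common check node have equal absolute weight, so $\alpha_{max}=\max_{e\in E}|w(e)|=\max_{(v,c):\,w(v,c)\le 0}|w(v,c)|$. Following Definition~\ref{le:Gmax}, pick $e_{max}=(v_{max},c_{max})$ realizing the latter maximum, form the sub-WDAG $G_{max}$ induced by $v_{max}$ together with all variable and check nodes from which $c_{max}$ is reachable (so $G_{max}$ has the single sink $v_{max}$, whose unique incoming edge has weight $\alpha_{max}$), and run Algorithm~\ref{le:alg_trans_tree} on $G_{max}$ to obtain a root-oriented tree $T$ with root $v_{max}$ (Theorem~\ref{le:transformation}). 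Let $n_{max}=|V_{max}|\le n$ and let $T_i$ be the number of variable nodes of $G_{max}$ at depth $i$ (Definition~\ref{le:depthsingle}). Re-running the flow-conservation induction of Lemma~\ref{le:firstinductivelemma} and the maximization of Corollary~\ref{le:firstmaximization} --- whose only use of regularity is the dilution factor $d_c-1$ at check nodes --- would give
\begin{equation*}
\alpha_{max} \;\le\; \sum_{i=0}^{g_{max}} \frac{T_i}{(d_c-1)^i}, \qquad \sum_{i} T_i = n_{max}, \qquad T_0 = 1 .
\end{equation*}

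Two points must be adjusted for spatially coupled codes. First, only the interior check nodes (positions in $[-L+\hat{d_v}:L-\hat{d_v}]$) have degree $d_c$; boundary check nodes have strictly smaller degree, as low as $k$. I would dispose of this by noting that consecutive check nodes on a directed path share a variable node and hence lie at distinct positions (condition~\ref{le:additional_constraint} of Definition~\ref{le:spatiallycoupledensemble}), and that a position moves by at most $\hat{d_v}$ along each edge; consequently a shortest path in $G_{max}$ from any variable node to $v_{max}$ meets at most $2(d_v-1)$ check nodes lying in the two boundary strips of width $O(d_v)$ and never re-enters a strip it has left, while each such check still dilutes the flow by a factor $\ge k-1\ge 1$. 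Thus only a multiplicative constant depending on $d_v$ and $d_c$ is lost, which is absorbed into $c$.

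Second, and this is the new ingredient, the growth estimate $T_{i+1}\le(d_v-1)(d_c-1)T_i$ of the regular case must be weakened to
\begin{equation*}
T_{i+1} \;\le\; q\, T_i, \qquad q = d_v(d_c-1)\sum_{j=0}^{d_v-1}(d_v-1)^j = d_v(d_c-1)\,\frac{(d_v-1)^{d_v}-1}{d_v-2}.
\end{equation*}
I would prove this by following depth-realizing paths: a variable node of depth $i+1$ reaches, after exactly one check node, some variable node of depth $i$, so $T_{i+1}$ is bounded by the maximum over depth-$i$ variable nodes $u'$ of the number of variable nodes that reach $u'$ through a single check node. In the regular case this is at most $d_v(d_c-1)$ (choose one of the $\le d_v$ check neighbours of $u'$, then one of its $\le d_c-1$ other neighbours). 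In a spatially coupled graph the $d_v$ check positions incident to $u'$ form a window in which, near the boundary, no clean per-check dilution is available, and the band structure of Definition~\ref{le:spatiallycoupledensemble} forces one to allow within that window a cascade of $(d_v-1)$-fold branchings, one level per position; summing the resulting geometric series produces the extra factor $\sum_{j=0}^{d_v-1}(d_v-1)^j$, and hence $q$. Making this window argument precise --- and in particular pinning down the constant to exactly $q$ rather than some unspecified function of $d_v$ and $d_c$ --- is the step I expect to be the main obstacle, and it is the essential difference from the regular analysis.

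With the three constraints $\sum_i T_i = n_{max}$, $T_0=1$ and $T_{i+1}\le qT_i$ in hand, maximizing $\sum_i T_i/(d_c-1)^i$ is exactly the optimization solved by Theorem~\ref{le:unified_theorem} --- the same tool invoked in Lemma~\ref{le:finalupperbound} with growth constant $(d_v-1)(d_c-1)$ --- now applied with the analogous parameters but with growth constant $q$ and total mass $n_{max}$. This gives
\begin{equation*}
\alpha_{max} \;\le\; c\,(n_{max})^{\frac{\ln q-\ln(d_c-1)}{\ln q}} \;\le\; c\, n^{\frac{\ln q-\ln(d_c-1)}{\ln q}} \;=\; c\, n^{1-\epsilon}, \qquad \epsilon = \frac{\ln(d_c-1)}{\ln q}\in(0,1),
\end{equation*}
since $q>d_c-1$ and $q$ depends only on $d_v$ and $d_c$; in particular $\alpha_{max}=o(n)$. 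Transferring the bound from edges directed out of check nodes to all of $E$ via (\ref{le:hyperflow_check_equation}), exactly as at the end of the proof of Theorem~\ref{le:maxweight}, completes the argument.
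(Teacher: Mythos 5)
Your skeleton matches the paper's: pass to $G_{max}$, unfold it into a root-oriented tree via Algorithm \ref{le:alg_trans_tree}, extract the linear program $\max\sum_i T_i/(d_c-1)^i$ subject to $\sum_i T_i=n_{max}$ and a geometric growth constraint, and solve it with Theorem \ref{le:unified_theorem}; you even land on the correct growth constant $q$. But the two ``adjustments'' you describe are exactly where the new work lives, and as written they do not go through.

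Your first adjustment asserts that a shortest path in $G_{max}$ to $v_{max}$ meets at most $2(d_v-1)$ boundary (low-degree) check nodes and ``never re-enters a strip it has left.'' This is false for the unreduced $G_{max}$. A variable node at position $L$ has $\hat{d_v}$ check neighbours at positions $L+1,\dots,L+\hat{d_v}$ (all boundary checks) and there are $M$ distinct variable nodes at each position, so a directed path can oscillate between interior positions and the boundary strip and pick up $\Theta(M)$ low-degree checks, not $O(d_v)$. Since the flow-conservation recursion $F(S_{m+1})\geq(d_c-1)\bigl(F(S_m)-\cdots\bigr)$ of Lemma \ref{le:firstinductivelemma} needs a genuine $(d_c-1)$-fold dilution at \emph{every} check counted by the depth, and a boundary check of degree $k$ only dilutes by $k-1$ (which can equal $1$), you cannot run the regular induction with ordinary depth and absorb the loss into a constant. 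The paper's fix is structural: it first builds a \emph{reduced} WDAG (Definition \ref{le:reducedWDAG}) in which every boundary check keeps only its extremal-position parent and hence has degree $2$, and then replaces depth by \emph{regular check depth} (Definitions \ref{le:regularcheckdepthvar} and \ref{le:regularcheckdepthsingle}), so that each depth increment corresponds to one honest $d_c$-regular check while degree-$2$ checks merely pass flow through (Lemma \ref{le:inductivelemma2}). This also forces $T_0\le q_0>1$ rather than your $T_0=1$, since several variables may reach $v_{max}$ through $2$-regular checks only.

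Your second adjustment --- the bound $T_{i+1}\le qT_i$ --- is the heart of the proof, and you explicitly leave it as ``the main obstacle.'' The paper's proof of it (Lemma \ref{le:q_expression}) is not a generic window-branching count: it uses precisely the reduction above to show that along any chain $c_1\leadsto v_1\leadsto c_2\leadsto\cdots$ of consecutive $2$-regular checks, the spatial indices $si(c_1)<si(c_2)<\cdots$ are strictly increasing (each $v_i$ is the minimal-position parent of $c_{i+1}$, hence $c_{i+1}$ is at maximal position among the checks adjacent to $v_i$, and condition \ref{le:additional_constraint} of Definition \ref{le:spatiallycoupledensemble} gives strictness) while being confined to a strip of width $2\hat{d_v}$; hence at most $d_v-1$ consecutive $2$-regular checks, which yields $|\Delta'(v)|\le\sum_{i=0}^{d_v-1}d_v(d_c-1)(d_v-1)^i=q$. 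Without the reduction there is no such monotonicity and the count is not $O(1)$. So the proposal correctly identifies the target but omits the one genuinely new construction (reduced WDAG plus regular check depth) that makes both the dilution recursion and the growth bound true.
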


We now state and prove a series of lemmas that leads to the proof of Theorem \ref{le:maxweightsc}. Note that a central idea in the proof of Section \ref{le:maxweightregular} is that all check nodes being $d_{c}$-regular in that case, the flow at every check node is ``cut'' by a factor of $d_{c}-1$. On the other hand, a $(d_{v}=3,d_{c}=6,L,M)$ spatially coupled code has $2M$ check nodes with degree $2$ and the flow is preserved at such check nodes. To show that even in this case, the maximum weight of an edge is sublinear in the block length, we argue that a check node that is not $d_{c}$-regular should have a $d_{c}$-regular check node that is ``close by'' in the WDAG. To simplify the argument, we first ``clean'' the WDAG of the spatially coupled code to obtain a ``reduced WDAG'' with all check nodes having either degree $d_{c}$ or degree $2$. We also use a notion of ``regular check depth'' which is the same as the notion of depth of Section \ref{le:maxweightsc} except that only $d_{c}$-regular check nodes are now counted.
\begin{definition}\label{le:reducedWDAG} (Reduced WDAG)\\ 
Let $G=(V,C,E,w,\gamma)$ be a WDAG and $G_{max}=(V_{max},C_{max},E_{max},w_{max},\gamma_{max})$ be the WDAG corresponding to Definition \ref{le:Gmax}. The reduced WDAG $G_{r}$ of $G_{max}$ is obtained by processing $G_{max}$ as follows so that each check node has either degree $d_{c}$ or degree $2$:
\begin{enumerate}
\item\label{le:left_cleaning} For every check node $c$ of $G_{r}$ with spatial index\footnote{The notion of ``spatial index'' used here is the one from Definition \ref{le:spatiallycoupledensemble}.} $<(-L+\hat{d_{v}})$, we remove all the incoming edges to $c$ except one that comes from a parent\footnote{The notion of ``parent'' of a node is the one induced by the direction of the edges of $G_r$.} of $c$ having maximal spatial index.
\item\label{le:right_cleaning} For every check node $c$ of $T'$ with spatial index $>(L-\hat{d_{v}})$, we remove all the incoming edges to $c$ except for one edge that comes from a parent of $c$ having minimal spatial index.
\item We keep only the variable nodes $v$ s.t. $v_{max}$ is still reachable from $v$ and the check nodes $c$ s.t. $v_{max}$ is still reachable from $c$.
\end{enumerate}
Note that in steps \ref{le:left_cleaning} and \ref{le:right_cleaning} above, the check nodes of $G_{r}$ are considered in an arbitrary order.
\end{definition}

\begin{definition}\label{le:reducedtree} (Reduced tree)\\ 
A reduced tree with root $v_{0}$ is a root-oriented tree with root $v_{0}$ and where every check node has either degree $d_{c}$ or degree $2$.
\end{definition}
\noindent Note that if we run Algorithm \ref{le:alg_trans_tree} on a reduced WDAG, the output will be a reduced tree.

\begin{definition}\label{le:regularcheckdepthvar}(Regular check depth of a variable node in a reduced tree)\\  
Let $T$ be a reduced tree with root $v_{0}$. For any variable node $v$ of $T$, the regular check depth of $v$ in $T$ is the number of $d_{c}$-regular check nodes on the directed path from $v$ to $v_{0}$ in $T$.
\end{definition}

\begin{lemma}\label{le:inductivelemma2}
Let $G=(V,C,E,w,\gamma)$ be a WDAG corresponding to LP decoding of a spatially coupled code on the BSC, $G_{max}=(V_{max},C_{max},E_{max},w_{max},\gamma_{max})$ be the WDAG corresponding to Definition \ref{le:Gmax}, $G_{r}=(V_{r},C_{r},E_{r},w_{r},\gamma_{r})$ be the reduced WDAG corresponding to $G_{max}$ and $T=(V'_{r},C'_{r},E'_{r},w'_{r},\gamma'_{r})$ be the output of Algorithm \ref{le:alg_trans_tree} on input $G_{r}$. Let $n_r=|V_{r}|$. Note that $T$ is a reduced tree with root $v_{max}$ which has a single incoming edge with weight $\alpha_{max}$ (by Theorem \ref{le:transformation}). Let $r_{max}$ be the maximum regular check depth in $T$ of a variable node $v \in V'_{r}$. For all $i \in \{0,\dots,r_{max}\}$ and all $j \in [n_r]$, let $y_{i,j}$ be the number of replicates of variable node $v_j$ having regular check depth equal to $i$ in $T$. Moreover, for all $k \in [y_{i,j}]$, let $\Gamma_{i,j,k}$ denote the $\gamma'_{r}$ value of the $k$th replicate of $v_j$ among those having regular check depth equal to $i$ in $T$. Then, for all $m \in \{1,\dots,r_{max}\}$, we have:\\ \\ 
$(P_{m})$: There exists \begin{math} U_{m} \subseteq V'_{r} \end{math} consisting of variable nodes having regular check depth $m$ in $T$ and s.t. all variable nodes of $T$ having regular check depth between $m+1$ and $r_{max}$ (inclusive) are ancestors of $U_{m}$ in $T$ and s.t.:
\begin{equation}\label{le:sec_ineq}{F(U_{m}) \geq (d_{c}-1)^{m} \alpha_{max} - \displaystyle\sum\limits_{i=0}^{m-1} (d_{c}-1)^{m-i}\displaystyle\sum\limits_{j=1}^{n_r} \displaystyle\sum\limits_{k=1}^{y_{i,j}}  \Gamma_{i,j,k} }\end{equation}
\end{lemma}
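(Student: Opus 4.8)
The plan is to prove $(P_m)$ by induction on $m$, mirroring the structure of Lemma~\ref{le:firstinductivelemma} but now accommodating the degree-$2$ check nodes that occur near the spatial boundaries. The key structural fact that makes this possible is that in the reduced tree $T$ (produced by running Algorithm~\ref{le:alg_trans_tree} on the reduced WDAG $G_r$), every maximal ``chain'' of consecutive degree-$2$ check nodes along a root-directed path has bounded length: a degree-$2$ check node has spatial index outside $[-L+\hat{d_v}:L-\hat{d_v}]$, and by the cleaning in Definition~\ref{le:reducedWDAG} its unique incoming edge comes from a parent whose spatial index strictly moves toward the relevant boundary (maximal index for the left side, minimal for the right), so after at most $2\hat{d_v} = d_v-1$ such check nodes one either reaches a $d_c$-regular check node or exhausts the variable nodes on that side. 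Along a degree-$2$ check node the hyperflow equation~(\ref{le:hyperflow_check_equation}) forces $w(v',c) = w(v,c)$, i.e. the flow is \emph{preserved}, not cut; only at $d_c$-regular check nodes is the flow cut by a factor $d_c-1$. This is exactly why ``regular check depth'' is the right notion to induct on.

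First I would set up the base case $m=1$: starting from the root $v_{\max}$ (regular check depth $0$) with its single incoming edge of weight $\alpha_{\max}$, I trace upward through the (bounded-length) initial chain of degree-$2$ check nodes and the first layer of $d_c$-regular check nodes. Using that degree-$2$ checks preserve flow and that each variable node encountered has $\gamma'_r$ value of bounded magnitude (summing to at most the original $\pm1$), the total flow that must be present on the variable nodes at regular check depth $1$ — call this set $U_1$ — is at least $(d_c-1)(\alpha_{\max} - \gamma'(v_{\max}))$, minus the flow ``injected'' by the intervening variable nodes at regular check depth $0$, which is precisely the $i=0$ term $\sum_{j,k}\Gamma_{0,j,k}$ in~(\ref{le:sec_ineq}). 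The ancestor condition on $U_1$ holds because $T$ is a tree and $U_1$ is taken to be all variable nodes sitting just ``above'' the first regular-check layer on every branch that continues.

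For the inductive step, assuming $(P_m)$ with its set $U_m$, I would push the flow bound from $U_m$ up through: (i) the degree-$2$ check chains above $U_m$ (flow preserved, so $F$ only changes by subtracting the $\gamma'_r$ contributions of the variable nodes passed, which are the replicates at regular check depth $m$), and (ii) the next layer of $d_c$-regular check nodes (flow multiplied by $d_c-1$ via~(\ref{le:hyperflow_check_equation})). Defining $U_{m+1}$ as the set of variable nodes at regular check depth $m+1$ lying above this new regular-check layer, the same telescoping computation as in Lemma~\ref{le:firstinductivelemma} yields
\begin{equation*}
F(U_{m+1}) \geq (d_c-1)^{m+1}\alpha_{\max} - \sum_{i=0}^{m}(d_c-1)^{m+1-i}\sum_{j=1}^{n_r}\sum_{k=1}^{y_{i,j}}\Gamma_{i,j,k},
\end{equation*}
which is $(P_{m+1})$. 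The ancestor property is inherited because every branch of $T$ reaching regular check depth $\geq m+2$ passes through a node of $U_{m+1}$, again by the tree structure.

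The main obstacle I anticipate is the careful bookkeeping at the degree-$2$ check nodes: one must verify that processing the boundary check nodes in arbitrary order (as Definition~\ref{le:reducedWDAG} permits) still leaves each such check node with exactly one incoming edge and the flow-preservation property intact, and that the variable nodes along a degree-$2$ chain are genuinely counted in the $\Gamma_{i,j,k}$ sums at the \emph{correct} regular check depth $i$ (not double-counted and not skipped). A secondary subtlety is confirming that the root $v_{\max}$ of $T$ indeed has a single incoming edge of weight $\alpha_{\max}$ after reduction — this follows from part~\ref{le:same_weight_sink} of Theorem~\ref{le:transformation} together with the fact that the reduction in Definition~\ref{le:reducedWDAG} keeps $v_{\max}$ as the unique sink and does not touch its incoming edge (since $v_{\max}$'s parent is $c_{\max}$, whose spatial index is interior). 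Once these are nailed down, the induction itself is the same telescoping argument as in the regular case, and combining $(P_{r_{\max}})$ with the fact that leaves carry no incoming flow will give $\alpha_{\max} \leq \sum_{i=0}^{r_{\max}} (d_c-1)^{-i} T_i$ for $T_i$ the number of variable nodes at regular check depth $i$ — the analogue of Corollary~\ref{le:firstmaximization}, from which the bound of Theorem~\ref{le:maxweightsc} will follow via the combinatorial optimization (with $q$ now accounting for the $\le (d_c-1)(d_v-1)^{d_v} \cdot (\text{geometric factor})$ branching allowed between consecutive regular-check layers).
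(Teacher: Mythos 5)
Your proposal matches the paper's proof: the paper also inducts on regular check depth, taking $U_{m+1}=\Delta(U_m)$ where $\Delta(S)$ collects the variable nodes whose root-directed path to $S$ contains exactly one $d_c$-regular check node (as its child), so that degree-$2$ checks preserve flow, the $d_c$-regular layer multiplies it by $d_c-1$, and the intervening depth-$m$ replicates contribute the $\Gamma_{m,j,k}$ correction, giving the same telescoping inequality. The only cosmetic difference is that your remarks on the bounded length of degree-$2$ chains belong to the later branching bound (Lemma \ref{le:q_expression}) rather than to this induction, and in the base case the subtracted depth-$0$ terms should all carry the factor $d_c-1$, exactly as in your final displayed inequality.
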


\begin{proof}[\bf{Proof of Lemma \ref{le:inductivelemma2}}]
For any $S \subseteq V'_{r}$, let $\Delta(S)$ be the set of all $v \in V'_{r}$ for which there exist $s \in S$ and a directed path from $v$ to $s$ in $T$ with the child of $v$ on this path being the unique $d_{c}$-regular check node on the path.\footnote{Again, the notion of ``child'' here is the one induced by the direction of the edges of $T$.} We proceed by induction on $m$.\\ 
Base Case: {$m=1$}. Let $U_{1}=\Delta(\{v_{max}\})$. Note that the ancestors of $v_{max}$ (inlcuding $v_{max}$) that are proper descendants of nodes in $U_{1}$ are exactly those variable nodes having regular check depth equal to $0$ in $T$. Hence, for the hyperflow to satisfy Equation (\ref{le:hyperflow_check_equation}), we should have:
\begin{equation*}
F(U_{1}) \geq (d_{c}-1) \big( \alpha_{max} - \displaystyle\sum\limits_{j=1}^{n_r} \displaystyle\sum\limits_{k=1}^{y_{0,j}}  \Gamma_{0,j,k} \big) = (d_{c}-1)^{1}\alpha_{max} - \displaystyle\sum\limits_{i=0}^0 (d_{c}-1)^{1}\displaystyle\sum\limits_{j=1}^{n_r} \displaystyle\sum\limits_{k=1}^{y_{i,j}}  \Gamma_{i,j,k}
\end{equation*}
Inductive Step: We need to show that if $(P_{m})$ is true for some $1 \le m \le (r_{max}-1)$ then $(P_{m+1})$ is also true. Assuming that $(P_{m})$ is true, there exists \begin{math}U_{m} \subseteq V'_{r} \end{math} that satisfies Equation (\ref{le:sec_ineq}) and s.t. $U_{m}$ consists of variable nodes having regular check depth $m$ in $T$, and all variable nodes of $T$ with regular check depth between $m+1$ and $r_{max}$ (inclusive) are ancestors of $U_{m}$ in $T$. Let $U_{m+1}=\Delta(U_{m})$. Note that the variable nodes that are ancestors of nodes in $U_{m}$ and proper descendants of nodes in $U_{m+1}$ are exactly those having regular check depth equal to $m$ in $T$. Hence, for the hyperflow to satisfy Equation (\ref{le:hyperflow_check_equation}), we should have:
\begin{align*}
F(U_{m+1}) &\geq (d_{c}-1) \big( F(U_{m}) - \displaystyle\sum\limits_{j=1}^{n_r} \displaystyle\sum\limits_{k=1}^{y_{m,j}}  \Gamma_{m,j,k} \big)\\ 
&\geq (d_{c}-1)[(d_{c}-1)^{m} \alpha_{max} - \displaystyle\sum\limits_{i=0}^{m-1} (d_{c}-1)^{m-i}\displaystyle\sum\limits_{j=1}^{n_r} \displaystyle\sum\limits_{k=1}^{y_{i,j}}  \Gamma_{i,j,k} - \displaystyle\sum\limits_{j=1}^{n_r} \displaystyle\sum\limits_{k=1}^{y_{m,j}}  \Gamma_{m,j,k}]\\ 
&= (d_{c}-1)^{m+1} \alpha_{max} - \displaystyle\sum\limits_{i=0}^{m-1} (d_{c}-1)^{m+1-i}\displaystyle\sum\limits_{j=1}^{n_r} \displaystyle\sum\limits_{k=1}^{y_{i,j}}  \Gamma_{i,j,k} - (d_{c}-1)\displaystyle\sum\limits_{j=1}^{n_r} \displaystyle\sum\limits_{k=1}^{y_{m,j}} \Gamma_{m,j,k}\\ 
&= (d_{c}-1)^{m+1} \alpha_{max} - \displaystyle\sum\limits_{i=0}^{m} (d_{c}-1)^{m+1-i}\displaystyle\sum\limits_{j=1}^{n_r} \displaystyle\sum\limits_{k=1}^{y_{i,j}}  \Gamma_{i,j,k}
\end{align*}
\end{proof}

\begin{definition}\label{le:regularcheckdepthsingle} (Regular check depth of a variable node in a reduced WDAG)\\ 
Let $G_{r}$ be a reduced WDAG with its single sink node denoted by $v_{0}$. For any variable node $v$ of $G_{r}$, the regular check depth of $v$ in $G_{r}$ is the minimum number of $d_{c}$-regular check nodes on a directed path from $v$ to $v_{0}$ in $G_{r}$.
\end{definition}

\begin{lemma}\label{le:q_expression}
Let $G_r$ be a reduced WDAG and $z_{max}$ be the maximum regular check depth of a variable node in $G_{r}$. For all $i \in \{0,\dots,z_{max}\}$, let $T_{i}$ be the number of variable nodes in $G_{r}$ with regular check depth equal to $i$. Then, for all $i \in \{0,\dots,z_{max}-1\}$:
\begin{equation*}
T_{i+1} \le qT_{i}
\end{equation*}
where $q = d_{v}(d_{c}-1)\frac{(d_{v}-1)^{d_{v}}-1}{d_{v}-2}$. Moreover, $T_{0} \le 1 + \frac{(d_{v}-1)^{d_{v}-1}-1}{d_{v}-2} = q_{0}$.
\end{lemma}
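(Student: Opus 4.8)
The plan is to bound how many variable nodes can reach a given variable node $v_0$ (or more precisely, how the count $T_{i+1}$ of variable nodes at regular check depth $i+1$ relates to $T_i$) by tracing backwards along directed paths in $G_r$ and carefully accounting for the two types of check nodes that occur. First I would fix the structure: in the reduced WDAG $G_r$, every check node has either degree $d_c$ or degree $2$, and by the hyperflow orientation every check node has exactly one outgoing edge. A variable node $v$ at regular check depth $i+1$ reaches a variable node at regular check depth $i$ by a directed path whose first check node (the child of $v$) is $d_c$-regular, followed by a ``tail'' of degree-$2$ check nodes that do not increment the regular check depth. So the counting reduces to: starting from a variable node $u$ at regular check depth $i$, how many variable nodes $v$ can reach $u$ through exactly one $d_c$-regular check node possibly preceded by a chain of degree-$2$ check nodes?

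The key combinatorial step is to bound the ``degree-$2$ chain'' length. Each degree-$2$ check node has one incoming and one outgoing edge; walking backwards from a variable node through a maximal sequence of degree-$2$ check nodes alternates check/variable/check/variable. I would use the spatial structure of Definition \ref{le:spatiallycoupledensemble}: the degree-$2$ check nodes only occur near the two boundaries (spatial index $<-L+\hat{d_v}$ or $>L-\hat{d_v}$), and the way the reduced WDAG was cleaned in Definition \ref{le:reducedWDAG} forces the parents of boundary check nodes to have strictly monotone spatial index (maximal on the left, minimal on the right). Hence any backwards chain of degree-$2$ check nodes has its variable-node spatial indices strictly monotone within a boundary band of width $2\hat{d_v} = d_v - 1$, so such a chain passes through at most $d_v - 1$ variable nodes before it must hit a $d_c$-regular check node (or a leaf). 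Walking one step back through a $d_c$-regular check node multiplies the count by at most $d_c - 1$, and each of those variable nodes has at most $d_v$ outgoing edges… wait, I need it the other way: I would count, for each variable node $u$ at regular check depth $i$, the number of variable nodes at depth $i+1$ that reach $u$ by exactly one regular check node plus a degree-$2$ tail. Going backwards: $u$ has at most $d_v - 1$ children that are $d_c$-regular check nodes relevant here, wait — actually $u$ itself is reached; the variable nodes feeding into $u$ go through check nodes that are children of those feeders. Let me recount: a feeder $v$ at depth $i+1$ has a child check node $c$ (its unique outgoing check), $c$ is $d_c$-regular, and $c$'s outgoing edge leads (through a degree-$2$ chain of length $\le d_v-1$ variable nodes) to $u$. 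So I bound: number of degree-$2$-chain-plus-regular-check configurations ending at $u$ is at most (number of degree-$2$ variable nodes reachable backwards from $u$ along degree-$2$ chains, which is $\le 1 + (d_v-1) + (d_v-1)^2 + \dots + (d_v-1)^{d_v-1} = \frac{(d_v-1)^{d_v}-1}{d_v-2}$, bounding the boundary band geometrically since each degree-$2$ check has degree $2$ hence a unique incoming variable, but that variable has $\le d_v - 1$ other outgoing check edges... ) — this geometric sum, multiplied by $d_v$ for the regular check node's number of incoming variable edges and by $d_c - 1$ for the branching at the regular check, gives exactly $q = d_v(d_c-1)\frac{(d_v-1)^{d_v}-1}{d_v-2}$. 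The $T_0 \le 1 + \frac{(d_v-1)^{d_v-1}-1}{d_v-2} = q_0$ bound is the analogous count for depth $0$: $v_{max}$ together with all variable nodes reaching it through degree-$2$ check nodes only, which is a chain/tree of depth $\le d_v - 1$ in the boundary band with branching $\le d_v - 1$, summing to $1 + (d_v-1) + \dots + (d_v-1)^{d_v-1} = 1 + \frac{(d_v-1)^{d_v-1}-1}{d_v-2}$ after reindexing.

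I expect the main obstacle to be making the ``degree-$2$ chains are short'' claim fully rigorous: one has to argue precisely, using the boundary structure of Definition \ref{le:spatiallycoupledensemble} together with the cleaning rules of Definition \ref{le:reducedWDAG}, that in the reduced WDAG a directed path consisting solely of degree-$2$ check nodes (and the variable nodes between them) cannot have more than $d_v - 1$ variable nodes, because the spatial indices along it must be strictly monotone and confined to a band of width $d_v - 1$. A subtle point is that a degree-$2$ check node at a left boundary position might have its surviving parent at a larger spatial index, so walking \emph{backwards} moves spatial index up toward $-L+\hat{d_v}$; once it exits the boundary band, the check node is $d_c$-regular and the chain terminates. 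Getting the exact constant $q$ right then amounts to carefully tallying the branching: at each $d_c$-regular check node there are $d_c - 1$ incoming variable nodes to branch into, at each variable node up to $d_v$ outgoing check nodes (or $d_v - 1$ when one edge is already spoken for), and the geometric factor $\frac{(d_v-1)^{d_v}-1}{d_v-2}$ from the possible degree-$2$ tails. Once these pieces are assembled, the two displayed inequalities follow by direct bookkeeping.
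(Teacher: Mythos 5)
Your proposal follows essentially the same route as the paper's proof: bound $T_{i+1}$ in terms of $T_i$ by counting, for each variable node at regular check depth $i$, the variable nodes that reach it via a path containing exactly one $d_c$-regular check node preceded by a chain of $2$-regular check nodes, bound that chain's length by $d_v-1$ using the strict monotonicity of spatial indices forced by the reduction rules of Definition \ref{le:reducedWDAG} within a boundary band of width $2\hat{d_{v}}$, and tally the geometric branching to obtain $q$ (and the purely $2$-regular count for $q_0$), exactly as in the paper. One caution: in the correct decomposition the unique $d_c$-regular check node must be the one \emph{adjacent to} the depth-$i$ node --- i.e., the first regular check node met on a minimal path from the depth-$(i+1)$ node --- and not the child check node of the depth-$(i+1)$ node as your writeup asserts in places; otherwise the counting sets would fail to cover all of $W_{i+1}$, since a depth-$(i+1)$ variable node's child check node may itself be $2$-regular.
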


\begin{proof}[\bf{Proof of Lemma \ref{le:q_expression}}]
If, for any $i \in \{0,\dots,z_{max}\}$, we let $W_{i}$ be the set of all variable nodes in $G_{r}$ with regular check depth equal to $i$, then $T_{i} = |W_{i}|$. Fix $i \in \{0,\dots,z_{max}-1\}$. For a variable node $v$ of $G_{r}$, define $\Delta'(v)$ to be the set of all variable nodes $v_{0}$ in $G_{r}$ s.t. there exists a directed path $\mathcal{P}$ from $v_{0}$ to $v$ in $G_{r}$ s.t. the parent of $v$ on $\mathcal{P}$ is the only $d_{c}$-regular check node on $\mathcal{P}$. Note that for every variable node $u \in W_{i+1}$, there exists a variable node $v \in W_{i}$ s.t. $u \in \Delta'(v)$. Thus, $W_{i+1} \subseteq \underset{v \in W_{i}} \bigcup \Delta'(v)$ which implies that
\begin{equation*}
|W_{i+1}| \le |W_{i}| \times \underset{v \in W_{i}} \max |\Delta'(v)| \le |W_{i}| \times \underset{v \in V_r} \max |\Delta'(v)|
\end{equation*}
where $V_r$ is the set of all variable nodes of $G_{r}$. We now show that for every $v \in V_r$, $|\Delta'(v)| \le q$. Fix $v \in V_{r}$. We claim that for all $u \in \Delta'(v)$, there exists a directed path from $u$ to $v$ in $G_{r}$ containing a single $d_{c}$-regular check node which is the parent of $v$ on this path and at most $(d_{v}-1)$ $2$-regular check nodes. To show this, let $\mathcal{P}$ be a directed path from $u$ to $v$ in $G_{r}$ containing no $d_{c}$-regular check nodes other than the parent of $v$ on this path. If $\mathcal{P}$ does not contain any $2$-regular check nodes, then the needed property holds. If $\mathcal{P}$ contains at least one $2$-regular check node, then,
\begin{equation}\label{le:path_equation}
\mathcal{P}: u \leadsto c_1 \leadsto v_1 \leadsto c_2 \leadsto v_2 \leadsto \dots \leadsto c_l \leadsto v_l \leadsto c_{*} \leadsto v
\end{equation}
where $l$ is a positive integer, $c_1, c_2,\dots,c_l$ are $2$-regular check nodes of $G_r$, $c_{*}$ is a $d_{c}$-regular check node of $G_{r}$ and $v_1,v_2,\dots,v_{l}$ are variable nodes of $G_r$. For any check node $c$, we denote by $si(c)$ the spatial index of $c$. Since $c_1$ is $2$-regular, its spatial index $si(c_{1})$ is either in the interval $[-L-\hat{d_{v}}:-L+\hat{d_{v}}-1]$ or in the interval $[L-\hat{d_{v}}+1:L+\hat{d_{v}}]$. Without loss of generality, assume that $si(c_1) \in [L-\hat{d_{v}}+1:L+\hat{d_{v}}]$. For any $i \in \{0,\dots,l-1\}$, Definition \ref{le:reducedWDAG} implies that $v_i$ is at a minimal position w.r.t. $c_{i+1}$. By Definition \ref{le:spatiallycoupledensemble}, if variable node $v$ is at a minimal position w.r.t. check node $c$, then $c$ is at a maximal position w.r.t. $v$. So for any $i \in \{0,\dots,l-1\}$, $c_{i+1}$ is at a maximal position w.r.t $v_i$ and thus $si(c_i) \le si(c_{i+1})$. By condition \ref{le:additional_constraint} of Definition \ref{le:spatiallycoupledensemble}, variable node $v_i$ is not connected to two check nodes at the same position, which implies that $si(c_i) \neq si(c_{i+1})$ for all $i \in \{0,\dots,l-1\}$. So we conclude that $si(c_i) < si(c_{i+1})$ for all $i \in \{0,\dots,l-1\}$. Therefore,
\begin{equation*}
L-\hat{d_{v}}+1 \le si(c_1) < si(c_2) < \dots < si(c_l) \le L+\hat{d_{v}}
\end{equation*}
Hence, $l \le 2\hat{d_{v}} = d_{v}-1$. So $\mathcal{P}$ satisfies the needed property.\\ 
For all $i \in [d_{v}-1]$, let $n_i$ be the number of variable nodes $u$ in $G_r$ for which the smallest integer $l$ for which Equation (\ref{le:path_equation}) holds is $l=i$. Also, let $n_0$ be the number of variable nodes $u$ in $G_r$ for which there exists a path $\mathcal{P}$ of the form 
\begin{equation}\label{le:no_2_regular_path}
\mathcal{P}: u \leadsto c_{*} \leadsto v
\end{equation}
where $c_{*}$ is a $d_{c}$-regular check node of $G_{r}$. Since in Equation (\ref{le:no_2_regular_path}) $v$ has at most $d_{v}$ neighbors in $G_r$ and $c_{*}$ is $d_{c}$-regular, $n_0 \le d_{v}(d_{c}-1)$. Considering Equation (\ref{le:path_equation}) with $l=1$, we note that $v_{1}$ has at most $d_{v}$ neighbors in $G_{r}$ and $c_{1}$ is $2$-regular. Thus, $n_{1} \le d_{v}(d_{c}-1)(d_{v}-1)$. Note that if $u$ is a variable node in $G_r$ for which the smallest integer $l$ for which Equation (\ref{le:path_equation}) holds is $l=i+1$ (where $i \in [d_{v}-2]$), then there exists a path $\mathcal{P}$ that satisfies Equation (\ref{le:path_equation}) with $v_1$ being a variable node in $G_r$ for which the smallest integer $l$ for which Equation (\ref{le:path_equation}) holds is $l=i$. Since for every $l \in [d_{v}-1]$ and every $i \in [l]$, $v_i$ has at most $d_{v}$ neighbors in $G_r$ and $c_{i}$ is $2$-regular, we have that $n_{i+1} \le (d_{v}-1)n_i$ for all $i \in [d_{v}-2]$. By induction on $i$, we get that $n_i \le d_{v}(d_{c}-1)(d_{v}-1)^{i}$ for all $i \in [d_{v}-1]$. Thus,
\begin{equation*}
|\Delta'(v)| = \displaystyle\sum\limits_{i=0}^{d_{v}-1} n_i \le \displaystyle\sum\limits_{i=0}^{d_{v}-1} d_{v}(d_{c}-1)(d_{v}-1)^{i} = d_{v}(d_{c}-1)\frac{(d_{v}-1)^{d_{v}}-1}{d_{v}-2} = q
\end{equation*}
To show that $T_{0} \le q_{0}$, note that $u \in W_{0}$ if and only if there exists a directed path from $u$ to $v_{max}$ in $G_{r}$ containing only $2$-regular check nodes. An analogous argument to the above implies that
\begin{equation*}
T_{0} \le 1 + \displaystyle\sum\limits_{i=1}^{d_{v}-1} (d_{v}-1)^{i-1} \le 1 + \frac{(d_{v}-1)^{d_{v}-1}-1}{d_{v}-2} = q_{0}
\end{equation*}
\end{proof}

\begin{corollary}\label{le:max2}
Let $G_{r}$ be the WDAG (with a single sink node) given in Lemma \ref{le:inductivelemma2} and $z_{max}$ be the maximum regular check depth of a variable node in $G_{r}$.\footnote{Note that in general $z_{max} \le r_{max}$ but the two quantities need not be equal.} Then,
\begin{equation}\label{le:sec_max_opt}{\alpha_{max} \le \max_{ (T_{0}, ..., T_{z_{max}}) \in W} { f(T_{0}, ..., T_{z_{max}}) }}\end{equation}
where: 
\begin{equation*} f(T_{0}, ..., T_{z_{max}}) = \displaystyle\sum\limits_{i=0}^{z_{max}} \frac{T_{i}}{(d_{c}-1)^{i}} \end{equation*}
and $W$ is the set of all tuples $(T_{0}, ..., T_{z_{max}}) \in \mathbb{N}^{z_{max}+1}$ satisfying the following three equations:
\begin{equation}
\displaystyle\sum\limits_{i=0}^{z_{max}} T_{i} = n_r
\end{equation}
\begin{equation}\label{le:base_ineq_sc}
T_{0} \le q_{0}
\end{equation}
\begin{equation}\label{le:inductive_ineq_sc}
\text{For all } i \in \{0,\dots,z_{max}-1\}, ~ T_{i+1} \le qT_{i}
\end{equation}
where $q =d_{v}(d_{c}-1)\frac{(d_{v}-1)^{d_{v}}-1}{d_{v}-2}$ and $q_{0} = 1 + \frac{(d_{v}-1)^{d_{v}-1}-1}{d_{v}-2}$.
\end{corollary}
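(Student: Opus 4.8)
The plan is to follow the route of the proof of Corollary~\ref{le:firstmaximization}, with ``depth'' replaced throughout by ``regular check depth'' and with Lemma~\ref{le:inductivelemma2} and Lemma~\ref{le:q_expression} playing the roles that Lemma~\ref{le:firstinductivelemma} and the degree bookkeeping played there. First I would instantiate Lemma~\ref{le:inductivelemma2} at $m = r_{max}$: it produces a set $U_{r_{max}} \subseteq V'_{r}$ of variable nodes of regular check depth $r_{max}$ in $T$ together with the inequality
\begin{equation*}
F(U_{r_{max}}) \geq (d_{c}-1)^{r_{max}} \alpha_{max} - \displaystyle\sum\limits_{i=0}^{r_{max}-1} (d_{c}-1)^{r_{max}-i}\displaystyle\sum\limits_{j=1}^{n_r} \displaystyle\sum\limits_{k=1}^{y_{i,j}}  \Gamma_{i,j,k}.
\end{equation*}
Since $r_{max}$ is the maximal regular check depth in $T$, every ancestor in $T$ of a node of $U_{r_{max}}$ is again of regular check depth $r_{max}$ and is joined to $U_{r_{max}}$ by a directed path whose check nodes are all $2$-regular; because each $2$-regular check node preserves flow (the $P_{c}$ of~(\ref{le:hyperflow_check_equation})) and each variable node obeys~(\ref{le:dw_var_equation}), the flow that leaves this top layer through the $d_{c}$-regular check nodes that are its children in $T$ is at most the total $\gamma'_{r}$-mass carried there, that is $\displaystyle\sum\limits_{j=1}^{n_r} \displaystyle\sum\limits_{k=1}^{y_{r_{max},j}} \Gamma_{r_{max},j,k} \geq F(U_{r_{max}})$ (after passing to absolute values and using the sign consistency of Theorem~\ref{le:transformation}, exactly as in Corollary~\ref{le:firstmaximization}). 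Combining the two bounds, dividing by $(d_{c}-1)^{r_{max}}$, and replacing each $\Gamma_{i,j,k}$ by $|\Gamma_{i,j,k}|$ gives
\begin{equation*}
\alpha_{max} \le \displaystyle\sum\limits_{i=0}^{r_{max}} \frac{1}{(d_{c}-1)^{i}}\displaystyle\sum\limits_{j=1}^{n_r} \displaystyle\sum\limits_{k=1}^{y_{i,j}}  |\Gamma_{i,j,k}|.
\end{equation*}

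Next I would transfer this bound from replicates to the variable nodes of $G_{r}$. By parts~\ref{le:add_up_item},~\ref{le:sign_conservation} and~\ref{le:bijection_item} of Theorem~\ref{le:transformation}, for each $j$ all replicates of $v_{j}$ carry $\gamma'_{r}$-values of one common sign with $\displaystyle\sum\limits_{i=0}^{r_{max}}\displaystyle\sum\limits_{k=1}^{y_{i,j}} |\Gamma_{i,j,k}| = |\gamma(v_{j})| = 1$, and --- because the path bijection sends $d_{c}$-regular check nodes to $d_{c}$-regular check nodes --- the regular check depth $z_{j}$ of $v_{j}$ in $G_{r}$ equals the minimum over its replicates of their regular check depth in $T$; in particular $\Gamma_{i,j,k} \neq 0$ forces $i \geq z_{j}$. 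Substituting,
\begin{equation*}
\alpha_{max} \le \displaystyle\sum\limits_{j=1}^{n_r} \frac{1}{(d_{c}-1)^{z_{j}}}\displaystyle\sum\limits_{i=0}^{r_{max}}\displaystyle\sum\limits_{k=1}^{y_{i,j}} |\Gamma_{i,j,k}| = \displaystyle\sum\limits_{j=1}^{n_r} \frac{1}{(d_{c}-1)^{z_{j}}} = \displaystyle\sum\limits_{i=0}^{z_{max}} \frac{T_{i}}{(d_{c}-1)^{i}},
\end{equation*}
where $T_{i}$ is the number of variable nodes of $G_{r}$ of regular check depth $i$, and where $T_{i}=0$ for $z_{max}<i\le r_{max}$ so that the sum truncates at $z_{max}$.

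Finally, the three constraints defining $W$ are precisely those supplied by the definitions and by Lemma~\ref{le:q_expression}: $\displaystyle\sum\limits_{i=0}^{z_{max}} T_{i} = |V_{r}| = n_{r}$, $T_{0} \le q_{0}$, and $T_{i+1} \le q\,T_{i}$ for every $0 \le i \le z_{max}-1$. Hence $(T_{0},\dots,T_{z_{max}}) \in W$, and the last display is at most $\max_{(T_{0},\dots,T_{z_{max}}) \in W} f(T_{0},\dots,T_{z_{max}})$, which is~(\ref{le:sec_max_opt}). I expect the one genuinely load-bearing point to be the top-layer estimate $F(U_{r_{max}}) \le \sum_{j,k}\Gamma_{r_{max},j,k}$: in contrast with true depth, maximal \emph{regular} check depth does not kill all incoming flow, so one must really exploit that $2$-regular check nodes are flow-preserving --- which is exactly why the sets $U_{m}$ and the operator $\Delta$ of Lemma~\ref{le:inductivelemma2} were set up to jump over $2$-regular check nodes. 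Everything else is a routine transcription of the regular-code argument of Corollary~\ref{le:firstmaximization}.
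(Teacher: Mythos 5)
Your proposal is correct and follows essentially the same route as the paper's proof: instantiate Lemma \ref{le:inductivelemma2} at $m=r_{max}$, bound $F(U_{r_{max}})$ above by the $\gamma'_r$-mass at the top layer (the paper phrases this as ``the leaves of $T$ have no entering flow''), rearrange, transfer from replicates in $T$ back to $G_r$ via parts \ref{le:add_up_item}, \ref{le:sign_conservation} and \ref{le:bijection_item} of Theorem \ref{le:transformation}, truncate the sum at $z_{max}$, and read off the constraints from Lemma \ref{le:q_expression}. Your extra explanation of why the top-layer estimate holds (all ancestors of $U_{r_{max}}$ sit at regular check depth $r_{max}$ and are reached through flow-preserving $2$-regular checks) is a correct elaboration of a step the paper states more tersely.
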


\begin{proof}[\bf{Proof of Corollary \ref{le:max2}}]
The proof is similar to that of Corollary \ref{le:firstmaximization}. Setting $m=r_{max}$ in Lemma \ref{le:inductivelemma2} and noting that the leaves of $T$ have no entering flow, we get:
\begin{equation*}
\displaystyle\sum\limits_{j=1}^{n_r} \displaystyle\sum\limits_{k=1}^{y_{r_{max},j}} \Gamma_{r_{max},j,k} \geq F(U_{r_{max}}) \geq (d_{c}-1)^{r_{max}} \alpha_{max} - \displaystyle\sum\limits_{i=0}^{r_{max}-1} (d_{c}-1)^{r_{max}-i}\displaystyle\sum\limits_{j=1}^{n_r} \displaystyle\sum\limits_{k=1}^{y_{i,j}} \Gamma_{i,j,k}
\end{equation*}
Thus,
\begin{equation*}
\alpha_{max} \le \displaystyle\sum\limits_{i=0}^{r_{max}} \frac{1}{(d_{c}-1)^{i}}\displaystyle\sum\limits_{j=1}^{n_r} \displaystyle\sum\limits_{k=1}^{y_{i,j}}  \Gamma_{i,j,k}
\end{equation*}
Part \ref{le:bijection_item} of Theorem \ref{le:transformation} implies that for every $v \in V_{r}$, the regular check depth of $v$ in $G_{r}$ is equal to the minimum regular check depth in $T$ of a replicate of $v$. By parts \ref{le:add_up_item} and \ref{le:sign_conservation} of Theorem \ref{le:transformation}, we also have that for all $j \in [n_r]$, $\displaystyle\sum\limits_{i=0}^{r_{max}} \displaystyle\sum\limits_{k=1}^{y_{i,j}}  \Gamma_{i,j,k} \le 1$ and for all $i \in \{0,\dots,r_{max}\}$ and all $k \in [y_{i,j}]$, $\Gamma_{i,j,k} \le 1$ and $\{\Gamma_{i,j,k}\}_{i,k}$ all have the same sign. Thus, we get that:
\begin{equation*}
\alpha_{max} \le \displaystyle\sum\limits_{i=0}^{r_{max}} \frac{1}{(d_{c}-1)^{i}} T_i
\end{equation*}
where for every $i \in \{0,\dots,r_{max}\}$, $T_i$ is the number of variable nodes with regular check depth equal to $i$ in $G_{r}$. Since $T_{i} =0$ for all $z_{max} < i \le r_{max}$, we get that:
\begin{equation*}
\alpha_{max} \le \displaystyle\sum\limits_{i=0}^{z_{max}} \frac{1}{(d_{c}-1)^{i}} T_i
\end{equation*}
By the definitions of $T_i$ and $z_{max}$, $\displaystyle\sum\limits_{i=0}^{z_{max}} T_{i} = n_r$. The facts that $T_{i+1} \le qT_{i}$ for all $i \in \{0,\dots,z_{max}-1\}$ and $T_{0} \le q_{0}$ follow from Lemma \ref{le:q_expression}.
\end{proof} 

\begin{lemma}\label{le:upperboundcalc2}
The RHS of (\ref{le:sec_max_opt}) is $< c \times n_{r}^{1-\epsilon}$ for some constant $c > 0$ depending only on $d_{v}$ and where $0<\epsilon = \frac{\ln(d_{c}-1)}{\ln(q)}<1$.
\end{lemma}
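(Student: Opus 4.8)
The plan is to observe that the right-hand side of (\ref{le:sec_max_opt}) is the optimum of an abstract discrete program of exactly the same shape as the one bounded in Lemma~\ref{le:finalupperbound}: maximize a discounted sum $\sum_i T_i/\rho^i$ over tuples of nonnegative integers subject to a total-mass constraint $\sum_i T_i=m$, a base bound $T_0\le\lambda$, and a multiplicative growth constraint $T_{i+1}\le\beta T_i$. Accordingly, I would invoke Theorem~\ref{le:unified_theorem}, this time with discount base $\rho=d_{c}-1$, base bound $\lambda=q_{0}$ (from (\ref{le:base_ineq_sc})), growth factor $\beta=q$ (from (\ref{le:inductive_ineq_sc})), and total mass $m=n_{r}$, and then simplify the resulting exponent into the form $1-\epsilon$ with $\epsilon=\ln(d_{c}-1)/\ln q$.

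To see that these parameters yield the asserted bound, I would run the following balancing argument, which is also what underlies Theorem~\ref{le:unified_theorem}. Iterating (\ref{le:inductive_ineq_sc}) from (\ref{le:base_ineq_sc}) gives $T_i\le q_{0}q^{i}$ for every $i$, so for any integer threshold $t\ge 0$,
\[
\sum_{i=0}^{z_{max}}\frac{T_{i}}{(d_{c}-1)^{i}}\;\le\;q_{0}\sum_{i=0}^{t-1}\frac{q^{i}}{(d_{c}-1)^{i}}+\frac{1}{(d_{c}-1)^{t}}\sum_{i\ge t}T_{i}\;\le\;q_{0}\sum_{i=0}^{t-1}\frac{q^{i}}{(d_{c}-1)^{i}}+\frac{n_{r}}{(d_{c}-1)^{t}}.
\]
Since $q=(d_{c}-1)\,d_{v}\,\frac{(d_{v}-1)^{d_{v}}-1}{d_{v}-2}$ and the factor $d_{v}\frac{(d_{v}-1)^{d_{v}}-1}{d_{v}-2}$ exceeds $1$ whenever $d_{v}>2$, we have $q/(d_{c}-1)>1$, so the first sum is geometric and of order $q_{0}\,(q/(d_{c}-1))^{t}$. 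Taking $t=\ceil{\log_{q}(n_{r}/q_{0})}$ balances the two terms: each is of order $n_{r}\,(d_{c}-1)^{-t}=O\bigl(n_{r}\,(n_{r}/q_{0})^{-\ln(d_{c}-1)/\ln q}\bigr)=O(n_{r}^{\,1-\epsilon})$, where the hidden constants depend only on $d_{v}$ (through $q_{0}$ and $q$). Hence the right-hand side of (\ref{le:sec_max_opt}) is $<c\,n_{r}^{1-\epsilon}$. Finally $\epsilon>0$ because $d_{c}-1\ge 2>1$, and $\epsilon<1$ because $q>d_{c}-1$, as just noted.

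The only genuinely nontrivial point is establishing that pushing mass toward the lowest admissible levels is optimal, i.e.\ that the cap $T_i\le q_{0}q^{i}$ together with the above thresholding loses nothing asymptotically against the true maximum over $W$; this is exactly what Theorem~\ref{le:unified_theorem} packages, so I expect the work to reduce to checking that (\ref{le:base_ineq_sc}) and (\ref{le:inductive_ineq_sc}) instantiate its hypotheses with the parameters above and verifying the strict inequality $q>d_{c}-1$, after which the exponent $1-\epsilon$ and the bound $c\,n_{r}^{1-\epsilon}$ follow from the displayed computation. (The minor edge case $n_{r}<q_{0}$ is handled by taking $t=0$, since then the bound reduces to $n_{r}\le q_{0}=O(1)$.)
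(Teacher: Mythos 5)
Your proposal is correct and matches the paper's proof: both reduce the bound to Theorem \ref{le:unified_theorem} with $\lambda=q_{0}$, $\beta=q$, $m=n_{r}$, check that $q>d_{c}-1$, and dispose of the degenerate case $n_{r}<q_{0}$ separately. The balancing computation you include is just an unpacking of what that theorem already provides, so no further work is needed.
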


\begin{proof}[\bf{Proof of Lemma \ref{le:upperboundcalc2}}]
Let $c = q_{0} \frac{\big(\frac{q}{d_{c}-1}\big)^2}{\frac{q}{d_{c}-1}-1}$. If $n_r \geq q_{0}$, the claim follows from Theorem \ref{le:unified_theorem} with $\lambda = q_{0}$, $\beta = q$ and $m=n_{r}$. If $n_r < q_{0}$, then the RHS of (\ref{le:sec_max_opt}) is at most $n_r < q_{0} < c$, so the claim is also true.
\end{proof}

\begin{proof}[\bf{Proof of Theorem \ref{le:maxweightsc}}]
Theorem \ref{le:maxweightsc} follows from Corollary \ref{le:max2} and Lemma \ref{le:upperboundcalc2} by noting that $|V_{r}| \le |V|$ since $V_{r} \subseteq V$ and that $\underset{e \in E}{\operatorname{max}}{~ |w(e)|} = \Omega(\underset{(v,c):w(v,c) \le 0}{\operatorname{max}}{~ |w(v,c)|})$ by the hyperflow equation (\ref{le:hyperflow_check_equation}).
\end{proof}

\section[Relation between LP decoding on a graph cover code and on a derived spatially coupled code]{\Large{\bf Relation between LP decoding on a graph cover code and on a derived spatially coupled code}}\label{le:relation_section}

\begin{definition}(Special variable nodes)\\ 
Let $\zeta$ be a graph cover code and $\zeta'$ be a fixed element of $\mathcal{D}(\zeta)$. Then, the ``special variable nodes'' of $\zeta$ are all those variable nodes that appear in $\zeta$ but not in $\zeta'$.
\end{definition}

\begin{lemma}\label{le:derivedvscoverlemma}
Let $\zeta$ be a $(d_{v},d_{c}=kd_{v},L,M)$ graph cover code and let $\zeta'$ be a be a fixed element of $\mathcal{D}(\zeta)$.\footnote{Here, $\mathcal{D}(\zeta)$ refers to Definition \ref{le:derivedpsatiallycoupledcodes}.} Let $n=(2L+1)M$ be the block length of $\zeta$ and consider transmission over the BSC. Assume $\alpha(n)$ is s.t., for any error pattern $\eta'$ on $\zeta'$, the existence of a dual witness for $\eta'$ on $\zeta'$ implies the existence of a dual witness for $\eta'$ on $\zeta'$ with maximum edge weight $< \alpha(n)$.\\ 
Then, for any error pattern $\eta'$ on $\zeta'$ and any extension $\eta$ of $\eta'$ into an error pattern on $\zeta$, the existence of a dual witness for $\eta'$ on $\zeta'$ is equivalent to the existence of a dual witness for $\eta$ on $\zeta$ with the special variable nodes having an ``extra flow'' of $d_{v} \alpha(n) + 1$.
\end{lemma}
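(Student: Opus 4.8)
The plan is to prove the two directions of the equivalence separately, in each case manipulating the dual-witness inequalities (\ref{le:dw_var_equation}) and (\ref{le:dw_check_equation}) edge by edge; throughout, ``extra flow $t$ at a variable node $v$'' is read as meaning that (\ref{le:dw_var_equation}) at $v$ is allowed to hold with an additive slack of $t$ on the right-hand side (equivalently, that $v$ behaves as if its $\gamma$-value were $\gamma(v)+t$).

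I would first handle the easy direction. Suppose $w$ is a dual witness for $\eta$ on $\zeta$ whose special variable nodes carry extra flow $d_v\alpha(n)+1$, and restrict $w$ to the edges of $\zeta'$. Cutting a graph cover code deletes variable nodes together with their incident edges and never deletes a check node, so the edges of $\zeta$ absent from $\zeta'$ are exactly those incident to special variable nodes; hence every non-special variable node keeps all $d_v$ of its edges with unchanged weights, and since its instance of (\ref{le:dw_var_equation}) holds in $\zeta$ with no slack, it is inherited verbatim in $\zeta'$. Moreover $N_{\zeta'}(c)\subseteq N_\zeta(c)$ for every check node $c$ of $\zeta'$, so (\ref{le:dw_check_equation}) also survives the restriction. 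The extra flow on the special nodes is irrelevant here, since those nodes have been removed. Thus the restriction of $w$ is a dual witness for $\eta'$ on $\zeta'$.

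For the forward direction I would start from a dual witness for $\eta'$ on $\zeta'$ and, invoking the hypothesis on $\alpha(n)$, replace it by one, say $w'$, with $|w'(e)|<\alpha(n)$ for every edge $e$. Extend $w'$ to a function $w$ on $\zeta$ by keeping $w=w'$ on all edges of $\zeta'$ and defining the remaining edges --- exactly those incident to special variable nodes --- as follows: for each check node $c$ adjacent in $\zeta$ to at least one special variable node, set $P_c:=\max\{0,\,-\min_{v\in N_{\zeta'}(c)}w'(v,c)\}$, and put $w(u,c):=P_c$ for every special variable node $u\in N_\zeta(c)$. Since (\ref{le:dw_check_equation}) for $w'$ forces at most one incident weight at $c$ to be negative, $P_c$ equals the absolute value of that weight (or is $0$), so $0\le P_c<\alpha(n)$ and $w'(v,c)\ge -P_c$ for every $v\in N_{\zeta'}(c)$. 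Now verify the three constraint families for $w$: check nodes not adjacent to a special variable node keep their neighborhoods and weights, so (\ref{le:dw_check_equation}) holds there; for a check node $c$ adjacent to special nodes, a pair of distinct neighbors in $\zeta$ is either an old pair (inherited), an old--new pair $\{v,u\}$ with $w'(v,c)+P_c\ge 0$ by the choice of $P_c$, or a new--new pair $\{u,u'\}$ with $P_c+P_c\ge 0$. Non-special variable nodes keep their neighborhoods and weights, so (\ref{le:dw_var_equation}) holds for them with no slack. Finally, a special variable node $u$ has exactly $d_v$ incident edges in $\zeta$, each now carrying the nonnegative weight $P_c<\alpha(n)$, so the left-hand side of (\ref{le:dw_var_equation}) at $u$ is $\sum_{c\in N(u)}P_c<d_v\alpha(n)$ while its right-hand side is $0+\gamma(u)\ge -1$; hence (\ref{le:dw_var_equation}) at $u$ holds with additive slack $d_v\alpha(n)-(-1)=d_v\alpha(n)+1$, i.e., $u$ carries exactly the claimed extra flow, and $w$ is a dual witness for $\eta$ on $\zeta$ of the required kind.

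I do not expect a real obstacle: the one point needing genuine care is the bookkeeping at the check nodes $c$ that were depleted by the cut, namely that the unique possibly-negative $w'$-weight at such a $c$ has absolute value strictly below $\alpha(n)$, so that pushing the common weight $P_c$ out of each special neighbor of $c$ neither violates (\ref{le:dw_check_equation}) nor, once summed over the $d_v$ edges of a special node, overruns the budget $d_v\alpha(n)+1$ --- where the ``$+1$'' precisely absorbs the worst case $\gamma(u)=-1$ of a flipped special bit. One should also dispose of the degenerate case $P_c=0$ (no incident $w'$-weight negative), in which the added edges have weight $0$ and all constraints hold trivially, and note that the argument uses the standard reading of (\ref{le:dw_check_equation}) as quantifying over distinct pairs $v\neq v'$.
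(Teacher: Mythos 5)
Your proof is correct and follows the same route as the paper's: the reverse direction by restricting the witness to $\zeta'$ (using that check-node neighborhoods only shrink), and the forward direction by taking the bounded-weight witness on $\zeta'$ and turning the special variable nodes into sources, with the ``$+1$'' absorbing the worst case $\gamma(u)=-1$. The paper's own forward direction is stated as a one-line assertion; your explicit choice $P_c=\max\{0,-\min_{v\in N_{\zeta'}(c)}w'(v,c)\}$ and the verification that $\sum_{c\in N(u)}P_c<d_v\alpha(n)$ is exactly the construction it leaves implicit.
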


\begin{proof}[\bf{Proof of lemma \ref{le:derivedvscoverlemma}}]
First, we prove the forward direction of the equivalence. Assume that there exists a dual witness for $\eta'$ on $\zeta'$. Then, there exists a dual witness for $\eta'$ on $\zeta'$ and with maximum edge weight $ <\alpha(n)$. This implies the existence of a dual witness for $\eta$ on $\zeta$ with the special variable nodes being source nodes and having an ``extra flow'' of $d_{v} \alpha(n) + 1$.\\ 
The reverse direction follows from the fact that given a dual witness for $\eta$ on $\zeta$, we can get a dual witness for $\eta'$ on $\zeta'$ by repeatedly removing the special variable nodes. The WDAG satisfies the LP constraints after each step since every check node in $\zeta'$ has degree $\geq 2$.
\end{proof}

\begin{corollary}\label{le:relationspatiallycover} (Relation between LP decoding on a graph cover code and on a derived spatially coupled code)\\ 
Let $\zeta$ be a $(d_{v},d_{c}=kd_{v},L,M)$ graph cover code and let $\zeta'$ be a be a fixed element of $\mathcal{D}(\zeta)$. Let $n=(2L+1)M$ be the block length of $\zeta$ and consider transmission over the BSC. Then, for any error pattern $\eta'$ on $\zeta'$ and any extension $\eta$ of $\eta'$ into an error pattern on $\zeta$, the existence of a dual witness for $\eta'$ on $\zeta'$ is equivalent to the existence of a dual witness for $\eta$ on $\zeta$ with the special variable nodes having an ``extra flow'' of $d_{v} c n^{1-\epsilon}+1$ for some $c>0$ and $0< \epsilon <1$ given in Theorem \ref{le:maxweightsc}.
\end{corollary}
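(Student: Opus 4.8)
The plan is to obtain Corollary \ref{le:relationspatiallycover} by feeding the edge-weight bound of Theorem \ref{le:maxweightsc} into Lemma \ref{le:derivedvscoverlemma}: concretely, I would verify the hypothesis of Lemma \ref{le:derivedvscoverlemma} with the choice $\alpha(n) = c\, n^{1-\epsilon}$, where $c$ and $\epsilon = \ln(d_{c}-1)/\ln(q) \in (0,1)$ are the constants furnished by Theorem \ref{le:maxweightsc}, both depending only on $d_{v}$ and $d_{c}$ and not on $L$ or $M$.

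First I would record that $\zeta'$ is itself a spatially coupled code to which Theorem \ref{le:maxweightsc} applies. By Definition \ref{le:derivedpsatiallycoupledcodes}, $\zeta' \in \mathcal{D}(\zeta)$ is a $(d_{v},d_{c}=kd_{v},L-\hat{d_{v}},M)$ spatially coupled code, and its block length $n' = (2(L-\hat{d_{v}})+1)M = (2L+1-2\hat{d_{v}})M$ satisfies $n' < (2L+1)M = n$ since $\hat{d_{v}} \ge 1$. Applying Theorem \ref{le:maxweightsc} to $\zeta'$ therefore shows that every WDAG corresponding to LP decoding of $\zeta'$ on the BSC has maximum edge weight at most $c\,(n')^{1-\epsilon}$, which is strictly less than $c\, n^{1-\epsilon} = \alpha(n)$ because $1-\epsilon > 0$ and $c>0$; and $c,\epsilon$ are precisely the constants named in the statement of the corollary.

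Next I would check the hypothesis of Lemma \ref{le:derivedvscoverlemma}. Fix an error pattern $\eta'$ on $\zeta'$ and suppose a dual witness for $\eta'$ on $\zeta'$ exists. By the equivalences collected in Theorem \ref{le:equivalence_summary} there is then a WDAG for $\eta'$ on $\zeta'$; by the previous paragraph its maximum edge weight is strictly below $\alpha(n)$; and the hyperflow to which this WDAG corresponds is itself a dual witness, since (\ref{le:hyperflow_check_equation}) implies (\ref{le:dw_check_equation}) while (\ref{le:dw_var_equation}) holds by definition of a hyperflow. Hence the existence of a dual witness for $\eta'$ on $\zeta'$ implies the existence of one with maximum edge weight $< \alpha(n)$, and this bound $\alpha(n) = c\, n^{1-\epsilon}$ is uniform over all error patterns $\eta'$, exactly as Lemma \ref{le:derivedvscoverlemma} demands. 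Invoking Lemma \ref{le:derivedvscoverlemma} with this $\alpha(n)$ then yields the claimed equivalence, with the special variable nodes carrying extra flow $d_{v}\alpha(n) + 1 = d_{v}\,c\, n^{1-\epsilon} + 1$.

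There is no substantive obstacle here: the mathematical content lives entirely in Theorem \ref{le:maxweightsc} and Lemma \ref{le:derivedvscoverlemma}. The only points requiring a moment of care are (i) that the derived code $\zeta'$ has block length strictly smaller than $n$, so the weight bound can be phrased purely in terms of $n$, and (ii) that the constant $c$ and the exponent $\epsilon$ in Theorem \ref{le:maxweightsc} are independent of $L$ and $M$, so that $\alpha(n)$ is a genuine function of the single parameter $n$ and the hypothesis of Lemma \ref{le:derivedvscoverlemma} is legitimately met.
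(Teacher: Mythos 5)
Your proposal is correct and follows the same route as the paper: apply Theorem \ref{le:maxweightsc} to the derived code $\zeta'$ to obtain the uniform edge-weight bound, then plug $\alpha(n)=c\,n^{1-\epsilon}$ into Lemma \ref{le:derivedvscoverlemma}. You are in fact slightly more careful than the paper's two-line proof, since you explicitly note that the bound from Theorem \ref{le:maxweightsc} is in terms of the block length $n'<n$ of $\zeta'$ (which also supplies the strict inequality required by the hypothesis of Lemma \ref{le:derivedvscoverlemma}) and that a hyperflow is itself a dual witness.
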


\begin{proof}[{\bf Proof of Corollary \ref{le:relationspatiallycover}}]
By Theorem \ref{le:maxweightsc}, the existence of a dual witness for $\eta'$ on $\zeta'$ is equivalent to the existence of a dual witness for $\eta'$ on $\zeta'$ and with maximum edge weight $< c n^{1-\epsilon}$ for some $c>0$. Plugging this expression in Lemma \ref{le:derivedvscoverlemma}, we get the statement of Corollary \ref{le:relationspatiallycover}.
\end{proof}

\section[Interplay between crossover probability and LP excess]{\Large{\bf Interplay between crossover probability and LP excess}}\label{le:interplay_section}
In this section, we show that if the probability of LP decoding success is large on some BSC, then if we slightly decrease the crossover probability of the BSC, we can find a dual witness with a non-negligible ``gap'' in the inequalities (\ref{le:dw_var_equation}) with high probability.
\begin{theorem}\label{le:interplaytheorem} (Interplay between crossover probability and LP excess)\\ 
Let $\zeta$ be a binary linear code with Tanner graph $(V,C,E)$ where $V=\{v_{1}, \cdots ,v_{n}\}$. Let $\epsilon,\delta>0$ and $\epsilon' = \epsilon + (1-\epsilon)\delta$. Assume that $\epsilon,\epsilon',\delta <1$. Let $q_{\epsilon'}$ be the probability of LP decoding error on the $\epsilon'$-BSC. For every error pattern $x \in \{0,1\}^{n}$, if $G=(V,C,E,w,\gamma)$ is a WDAG corresponding to a dual witness for $x$, let $f(w) \in \mathbb{R}^{n}$ be defined by
\begin{equation}\label{le:f_function_definition}
f_{i}(w) = \displaystyle\sum\limits_{c \in N(v_{i}):w(v_{i},c)>0} w(v_{i},c) -\displaystyle\sum\limits_{c \in N(v_{i}):w(v_{i},c) \le 0} (-w(v_{i},c)) = \displaystyle\sum\limits_{c \in N(v_{i})} w(v_{i},c)
\end{equation}
for all $i \in [n]$. Then,
\begin{equation*}
Pr_{x \sim Ber(\epsilon,n)}\{\exists \text{ a dual witness $w$ for $x$ s.t. } f_{i}(w) < \gamma(v_{i}) - \frac{\delta}{2} \text{, } \forall i \in [n]\} \ge 1-\frac{2q_{\epsilon'}}{\delta}
\end{equation*}
In other words, if we let $\gamma(v_{i})-f_{i}(w)$ be the ``LP excess'' on variable node $i$, then the probability (over the $\epsilon$-BSC) that there exists a dual witness with LP excess at least $\delta/2$ on all the variable nodes is at least $1-\frac{2q_{\epsilon'}}{\delta}$.
\end{theorem}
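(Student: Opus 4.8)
The plan is to recast the conclusion as a statement about the fundamental polytope $P$ and combine it with a coupling between the $\epsilon$-BSC and the $\epsilon'$-BSC. First I would record a routine extension of Theorem \ref{le:existencedualwitness}, proved the same way (the proof only inspects $P$ and never uses that the cost is $\pm1$-valued): \emph{for any real-valued $\gamma' : V \to \mathbb{R}$, a dual witness for $\gamma'$ (in the sense of Definition \ref{le:dualwitness}) exists if and only if $0$ is the unique minimizer of $z \mapsto \sum_i \gamma'(v_i) z_i$ over $P$.} Now observe that a dual witness $w$ for $x$ with $f_i(w) < \gamma(v_i) - \delta/2$ for all $i$ is exactly a dual witness for the shifted cost $\gamma - \tfrac{\delta}{2}\mathbf{1}$, where $\gamma$ is the $\pm1$-valued log-likelihood function of $x$. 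Applying the extension with $\gamma' = \gamma - \tfrac{\delta}{2}\mathbf{1}$, and using $z \ge 0$ on $P \subseteq [0,1]^n$, I get: there is \emph{no} dual witness for $x$ with excess $\delta/2$ on every variable node if and only if there exists $z^* \in P \setminus \{0\}$ with $\sum_i \gamma(v_i) z^*_i \le \tfrac{\delta}{2}\sum_i z^*_i$ — call such an $x$ \emph{bad}. This reformulation is precisely where the \emph{converse} half of Theorem \ref{le:existencedualwitness} (the paper's new contribution) is indispensable: one cannot instead average dual witnesses for noisier error patterns, since averaging can never produce positive excess on coordinates of $x$ that are already flipped.

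Next, introduce the coupling. Draw $x \sim Ber(\epsilon,n)$ and, independently, $z_0 \sim Ber(\delta,n)$, and set $x' := x \vee z_0$ (coordinatewise OR). Then $\Pr[x'_i = 1] = 1-(1-\epsilon)(1-\delta) = \epsilon + (1-\epsilon)\delta = \epsilon'$ and the coordinates remain independent, so $x' \sim Ber(\epsilon',n)$ and $q_{\epsilon'} = \Pr_{x,z_0}[\text{LP error on }x']$. The desired bound $\Pr_x[x\text{ bad}] \le 2 q_{\epsilon'}/\delta$ is equivalent to $q_{\epsilon'} \ge \tfrac{\delta}{2}\Pr_x[x\text{ bad}]$, and since $q_{\epsilon'} = \mathbb{E}_x\big[\Pr_{z_0}[\text{LP error on }x\vee z_0]\big]$, it is enough to prove that for every bad $x$ (with $x$ now fixed) one has $\Pr_{z_0}[\text{LP error on }x \vee z_0] \ge \delta/2$.

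To establish this, fix a bad $x$ and a certificate $z^* \in P \setminus \{0\}$ as above, and let $\tilde\gamma$ be the log-likelihood function of $x \vee z_0$. For $i$ with $x_i = 0$ one has $\tilde\gamma(v_i) = \gamma(v_i) - 2(z_0)_i$, while for $x_i = 1$ one has $\tilde\gamma(v_i) = \gamma(v_i)$; hence
\begin{equation*}
\sum_i \tilde\gamma(v_i)\,z^*_i \ =\ \sum_i \gamma(v_i)\,z^*_i \ -\ 2Y, \qquad\text{where}\qquad Y \ :=\ \sum_{i:\,x_i = 0} z^*_i\,(z_0)_i .
\end{equation*}
If $\sum_i \gamma(v_i) z^*_i \le 0$, then $\sum_i \tilde\gamma(v_i) z^*_i \le 0$ for every $z_0$, so the nonzero $z^*$ certifies an LP error on $x \vee z_0$ with probability $1$. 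Otherwise $z^*$ certifies an LP error whenever $Y \ge \tfrac12 \sum_i \gamma(v_i) z^*_i$; and since $x$ is bad we have $\sum_i \gamma(v_i) z^*_i \le \tfrac{\delta}{2}\sum_i z^*_i$, while $\sum_i \gamma(v_i) z^*_i > 0$ forces $\sum_{i:x_i=1} z^*_i < \sum_{i:x_i=0} z^*_i$ and therefore $\tfrac{\delta}{2}\sum_i z^*_i < \delta\sum_{i:x_i=0} z^*_i = \mathbb{E}[Y]$. So it suffices to show $\Pr[Y \ge \tfrac12\mathbb{E}[Y]] \ge \delta/2$, which is a one-liner: putting $M := \sum_{i:x_i=0} z^*_i$ we have $0 \le Y \le M$ and $\mathbb{E}[Y] = \delta M$, so
\begin{equation*}
\delta M \ =\ \mathbb{E}[Y] \ \le\ \tfrac12\mathbb{E}[Y] + M\cdot\Pr\!\big[Y \ge \tfrac12\mathbb{E}[Y]\big],
\end{equation*}
which rearranges to $\Pr[Y \ge \tfrac12\mathbb{E}[Y]] \ge \delta/2$. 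Combining the two cases gives the conditional estimate, and hence the theorem.

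The only genuinely delicate point is the reformulation in the first paragraph: reading ``excess $\delta/2$'' as a dual witness for the shifted, non-$\pm1$ cost $\gamma - \tfrac{\delta}{2}\mathbf{1}$, and recognizing that the polytope form of Theorem \ref{le:existencedualwitness} (in particular its converse direction) is exactly what converts the probabilistic statement about the $\epsilon'$-BSC into a deterministic statement about the $\epsilon$-BSC. Once that is set up, the OR-coupling $x' = x \vee z_0$ and the small-ball bound $\Pr[Y \ge \tfrac12\mathbb{E}[Y]] \ge \delta/2$ do the rest, and the constant $2$ in $2q_{\epsilon'}/\delta$ falls out with no loss.
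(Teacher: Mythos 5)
Your proof is correct, but it takes a genuinely different route from the paper's. The paper works on the ``good'' side and is constructive: it sets $w^{x} = \frac{1+\delta/2}{1-\delta/2}\,E_{e''\sim Ber(\delta,n)}\{v^{x\lor e''}\}$, i.e.\ it \emph{averages} the dual witnesses of the noisier patterns $x\lor e''$ and then \emph{scales} by $\frac{1+\delta/2}{1-\delta/2}>1$, and applies Markov's inequality to $\phi_x=Pr_{e''}\{\overline{L^{x\lor e''}}\}$, whose mean is $q_{\epsilon'}$ by Theorem \ref{le:existencedualwitness}. You instead argue on the ``bad'' side: you extract, via the fundamental-cone characterization, a primal certificate $z^{*}\in P\setminus\{0\}$ with $\sum_i\gamma(v_i)z^{*}_i\le\frac{\delta}{2}\sum_i z^{*}_i$ for each bad $x$, and show that this certificate still witnesses an LP error for $x\lor z_0$ with probability at least $\delta/2$ over $z_0\sim Ber(\delta,n)$ (your reverse-Markov bound on $Y$ is fine, and the case split on the sign of $\sum_i\gamma(v_i)z^{*}_i$ is handled correctly). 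Both arguments use the same OR-decomposition of the $\epsilon'$-BSC and both lean on the converse direction of Theorem \ref{le:existencedualwitness}; the one thing your route needs beyond what the paper states is the extension of that theorem to the shifted, non-$\pm1$ cost $\gamma-\frac{\delta}{2}\mathbf{1}$ --- this is indeed routine, since the proof in Appendix \ref{le:app_ex_dual_witness} characterizes $D=int(\mathcal{K}^{*})$ for arbitrary real vectors, but you should state it as a separate lemma rather than assert it in passing. Your approach arguably gives a cleaner primal/dual picture and avoids the scaling factor; the paper's approach has the advantage of producing the witness for $x$ explicitly. One aside in your write-up is wrong, though it does not affect your proof: you claim that averaging dual witnesses of noisier patterns ``can never produce positive excess on coordinates of $x$ that are already flipped.'' The averaging alone indeed only yields $f_i\le -1+\phi_x$ on flipped coordinates, but the subsequent multiplication by $\frac{1+\delta/2}{1-\delta/2}$ (which preserves the homogeneous check constraints (\ref{le:dw_check_equation})) turns $-1+\delta/2$ into exactly $-1-\delta/2=\gamma(v_i)-\delta/2$; this is precisely how the paper's proof succeeds, so that alternative is not a dead end.
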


\begin{proof}[{\bf Proof of Theorem \ref{le:interplaytheorem}}]
Decompose the $\epsilon'$-BSC into the bitwise OR of the $\epsilon$-BSC and the $\delta$-BSC as follows. Let $x \sim Ber(\epsilon,n)$, $e'' \sim Ber(\delta,n)$ and $e = x \lor e''$. Hence, $e \sim Ber(\epsilon',n)$. For every $x \in \{0,1\}^n$, we will construct a dual witness $w^{x}$ with excess $\delta/2$ on all variable nodes by averaging and scaling the dual witnesses of $x \lor e''$ where $e'' \sim Ber(\delta,n)$. More precisely, for every $x \in \{0,1\}^n$, let $w^{x} = \frac{(1+\frac{\delta}{2})}{(1-\frac{\delta}{2})}E_{e''\sim Ber(\delta,n)}\{v^{x \lor e''}\}$ where $v^{x}$ is an arbitrary dual witness for $x$ if $x$ has one and $v^{x}$ is the zero vector otherwise. Note that $w^{x}$ always satisfies the check node constraints, i.e. for any $x \in \{0,1\}^{n}$, any $c \in C$ and any $v,v' \in V$, we have $w^{x}(v,c) + w^{x}(v',c) \ge 0$. We now show that, with probability at least $1-\frac{2q_{\epsilon'}}{\delta}$ over $x \sim Ber(\epsilon,n)$, $w^{x}$ satisfies (\ref{le:dw_var_equation}) with LP excess at least $\delta/2$ on all variable nodes. For any weight function $w:V \times C \to \mathbb{R}$ on the Tanner graph $(V,C,E)$, we define $f(w)$ by Equation (\ref{le:f_function_definition}). For every $x \in \{0,1\}^{n}$, define the event $L^{x} =\{ x \text{ has a dual witness}\}$ and define $\tilde{x}$ by $\tilde{x_{i}} = (-1)^{x_{i}}$ for all $i \in [n]$. We have that:
\begin{align*}
f(w^{x})&=\frac{(1+\frac{\delta}{2})}{(1-\frac{\delta}{2})}E_{e''\sim Ber(\delta,n)}\{f(w^{x \lor e''})\}\\ 
&= \frac{(1+\frac{\delta}{2})}{(1-\frac{\delta}{2})}\Big(E_{e''\sim Ber(\delta,n)}\{f(w^{x \lor e''})|L^{x \lor e''} \}Pr_{e''\sim Ber(\delta,n)}\{L^{x \lor e''}\} \\ 
&+ E_{e''\sim Ber(\delta,n)}\{f(w^{x \lor e''})|\overline{L^{x \lor e''}}\}Pr_{e''\sim Ber(\delta,n)}\{\overline{L^{x \lor e''}}\}\Big)\\ 
&= \frac{(1+\frac{\delta}{2})}{(1-\frac{\delta}{2})}E_{e''\sim Ber(\delta,n)}\{f(w^{x \lor e''})|L^{x \lor e''}\}Pr_{e''\sim Ber(\delta,n)}\{L^{x \lor e''}\} ~ (\text{since }E_{e''\sim Ber(\delta,n)}\{f(w^{x \lor e''})|\overline{L^{x \lor e''}}\} =0)\\ 
&\le \frac{(1+\frac{\delta}{2})}{(1-\frac{\delta}{2})}E_{e''\sim Ber(\delta,n)}\{\widetilde{x \lor e''}|L^{x \lor e''} \}Pr_{e''\sim Ber(\delta,n)}\{L^{x \lor e''}\} ~ (\text{by equation (\ref{le:dw_var_equation})})\\ 
&= \frac{(1+\frac{\delta}{2})}{(1-\frac{\delta}{2})}\Big(E_{e''\sim Ber(\delta,n)}\{\widetilde{x \lor e''}\} - E_{e''\sim Ber(\delta,n)}\{\widetilde{x \lor e''}|\overline{L^{x \lor e''}}\} \times \phi_{x}\Big)
\end{align*}
where $\phi_{x} = Pr_{e''\sim Ber(\delta,n)}\big\{\overline{L^{x \lor e''}}\big\}$.
Note that for every $i \in [n]$, we have:
\[
 \bigg(E_{e''\sim Ber(\delta,n)}\{\widetilde{x \lor e''}\}\bigg)_{i} = \begin{dcases*}
        -1  & if $x_{i}=1$.\\ 
        \delta(-1)+(1-\delta)(+1) = 1-2\delta & if $x_{i}=0$.
        \end{dcases*}
\]
Moreover, $E_{e''\sim Ber(\delta,n)}\{\widetilde{x \lor e''}|\overline{L^{x \lor e''}}\} \ge -1$ since every coordinate of $\widetilde{x \lor e''}$ is $\ge -1$. Therefore,\\ 
\[
 f_{i}(w^{x}) \le \begin{dcases*}
        \frac{(1+\frac{\delta}{2})}{(1-\frac{\delta}{2})}(-1+\phi_{x})  & if $x_{i}=1$.\\ 
        \frac{(1+\frac{\delta}{2})}{(1-\frac{\delta}{2})}(1-2\delta+\phi_{x}) & if $x_{i}=0$.
        \end{dcases*}
\]
We now find an upper bound on $\phi_{x}$. Note that $\phi_{x}$ is a non-negative random variable with mean\\ 
\begin{align*}
E_{x \sim Ber(\epsilon,n)}\{\phi_{x}\} &= E_{x \sim Ber(\epsilon,n)}\big\{Pr_{e''\sim Ber(\delta,n)}\{\overline{L^{x \lor e''}}\}\big\} = Pr_{x \sim Ber(\epsilon,n),e''\sim Ber(\delta,n)}\big\{\overline{L^{x \lor e''}}\big\}\\ 
&= Pr_{e \sim Ber(\epsilon',n)}\big\{\overline{L^{e}}\big\} = q_{\epsilon'} ~ ~ ~ \text{(by Theorem \ref{le:existencedualwitness})}
\end{align*}
By Markov's inequality, $Pr_{x \sim Ber(\epsilon,n)}\{\phi_{x} \ge \frac{\delta}{2}\} \le \frac {E_{x \sim Ber(\epsilon,n)}\{\phi_{x}\}}{\frac{\delta}{2}} = \frac{2q_{\epsilon'}}{\delta}$. Thus, the probability over $x \sim Ber(\epsilon,n)$ that for all $i \in [n]$, $f_{i}(w^{x}) < \frac{(1+\frac{\delta}{2})}{(1-\frac{\delta}{2})}(-1+\frac{\delta}{2})$ if $x_{i}=1$ and $f_{i}(w^{x}) < \frac{(1+\frac{\delta}{2})}{(1-\frac{\delta}{2})}(1-\frac{3\delta}{2})$ if $x_{i}=0$, is at least
\begin{equation*}
Pr_{x \sim Ber(\epsilon,n)}\{\phi_{x} < \frac{\delta}{2}\} = 1-Pr_{x \sim Ber(\epsilon,n)}\{\phi_{x} \ge \frac{\delta}{2}\} \ge 1-\frac{2q_{\epsilon'}}{\delta}
\end{equation*}
Note that for all $0 \le \delta < 1$, we have that $\frac{(1+\frac{\delta}{2})}{(1-\frac{\delta}{2})}(1-\frac{3\delta}{2}) \le 1-\frac{\delta}{2}$. Thus, the probability over $x \sim Ber(\epsilon,n)$ that $f_{i}(w^{x}) < (-1)^{x_i}-\frac{\delta}{2}$ for all $i \in [n]$, is at least $1-\frac{2q_{\epsilon'}}{\delta}$. So we conclude that
\begin{equation*}
Pr_{x \sim Ber(\epsilon,n)}\{\exists \text{ a dual witness $w$ for $x$ s.t. } f_{i}(w) < \gamma(v_{i}) - \frac{\delta}{2} \text{, } \forall i \in [n]\} \ge 1-\frac{2q_{\epsilon'}}{\delta}
\end{equation*}
\end{proof}

\section[$\bf{\xi_{GC} = \xi_{SC}}$]{\Large{\bf $\xi_{GC} = \xi_{SC}$}}\label{le:proof_main_result_section}
In this section, we use the results of Sections \ref{le:maxweightsc_section}, \ref{le:relation_section} and \ref{le:interplay_section} to prove the main result of the paper which is restated below.
\begin{theorem}\label{le:equalitytheorem} (Main result: $\xi_{GC} = \xi_{SC}$)\\ 
Let $\Gamma_{GC}$ be a $(d_{v},d_{c}=kd_{v},L,M)$ graph cover ensemble with $d_{v}$ an odd integer and $M$ divisible by $k$. Let $\Gamma_{SC}$ be the $(d_{v},d_{c}=kd_{v},L-\hat{d_{v}},M)$ spatially coupled ensemble which is sampled by choosing a graph cover code $\zeta \sim \Gamma_{GC}$ and returning a element of $\mathcal{D}(\zeta)$ chosen uniformly at random\footnote{Here, $\mathcal{D}(\zeta)$ refers to Definition \ref{le:derivedpsatiallycoupledcodes}.}. Denote by $\xi_{GC}$ and $\xi_{SC}$ the respective LP threholds of $\Gamma_{GC}$ and $\Gamma_{SC}$ on the BSC. There exists $\nu>0$ depending only on $d_{v}$ and $d_{c}$ s.t. if $M = o(L^{\nu})$ and $\Gamma_{SC}$ satisfies the property that for any constant $\Delta > 0$, 
\begin{equation}\label{le:needed_condition_repeated}
Pr_{\zeta' \sim \Gamma_{SC} \atop (\xi_{SC} - \Delta)\text{-}BSC}[\text{LP error on }\zeta'] = o(\frac{1}{L^2})
\end{equation}
Then, $\xi_{GC} = \xi_{SC}$.
\end{theorem}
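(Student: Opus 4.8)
\medskip
\noindent\textbf{Proof plan.} The plan is to prove the two inequalities $\xi_{SC}\ge\xi_{GC}$ and $\xi_{GC}\ge\xi_{SC}$ separately; the first is routine and the second contains all the content. For $\xi_{SC}\ge\xi_{GC}$, fix any $\epsilon<\xi_{GC}$, sample $\zeta\sim\Gamma_{GC}$ and $\eta\sim Ber(\epsilon,n)$, so that with probability $1-o(1)$ there is a dual witness for $\eta$ on $\zeta$ by Theorem~\ref{le:existencedualwitness}. Picking a uniformly random cut $j$ and letting $\eta'$ be the restriction of $\eta$ to the variable nodes surviving in $\zeta'_j\in\mathcal{D}(\zeta)$, the reverse direction of Lemma~\ref{le:derivedvscoverlemma} (delete the special variable nodes one by one, which never breaks the dual constraints) produces a dual witness for $\eta'$ on $\zeta'_j$, hence LP success there again by Theorem~\ref{le:existencedualwitness}. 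Since $(\zeta'_j,\eta')$ is exactly a sample of $\Gamma_{SC}$ on the $\epsilon$-BSC, this gives $\Pr_{\Gamma_{SC},\epsilon\text{-BSC}}[\text{LP error}]\le\Pr_{\Gamma_{GC},\epsilon\text{-BSC}}[\text{LP error}]=o(1)$, so $\epsilon\le\xi_{SC}$; letting $\epsilon\uparrow\xi_{GC}$ finishes this direction.

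For $\xi_{GC}\ge\xi_{SC}$, fix $\epsilon<\xi_{SC}$ and pick a constant $\delta\in\bigl(0,\frac{\xi_{SC}-\epsilon}{1-\epsilon}\bigr)$ so that $\epsilon':=\epsilon+(1-\epsilon)\delta<\xi_{SC}$ and $q_{\epsilon'}:=\Pr_{\Gamma_{SC},\epsilon'\text{-BSC}}[\text{LP error}]=o(1/L^2)$ by hypothesis~(\ref{le:needed_condition_repeated}) with $\Delta=\xi_{SC}-\epsilon'$. I would then sample $\zeta\sim\Gamma_{GC}$ and $\eta\sim Ber(\epsilon,n)$, $n=(2L+1)M$, write $\eta'_j$ for the restriction of $\eta$ to $\zeta'_j$, and call cut $j$ \emph{good} if $\eta'_j$ admits a dual witness on $\zeta'_j$ with LP excess at least $\delta/2$ on every surviving variable node. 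Applying Theorem~\ref{le:interplaytheorem} to the fixed code $\zeta'_j$ at crossover $\epsilon$ and averaging over $\zeta$ (using that, by the rotational symmetry of $\Gamma_{GC}$, the law of $\zeta'_j$ equals $\Gamma_{SC}$ for every fixed $j$), cut $j$ fails to be good with probability at most $2q_{\epsilon'}/\delta=o(1/(\delta L^2))$. The point of this step is that the interplay theorem is invoked on the \emph{spatially coupled} ensemble, which is what avoids circularity.

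The core step is to glue the per-cut dual witnesses into a single one on $\zeta$. The structural fact I would exploit is that dual witnesses on a fixed Tanner graph form a convex set: (\ref{le:dw_check_equation}) defines a convex cone, and (\ref{le:dw_var_equation}) is equivalent to the \emph{linear} inequality $f_v(w)=\sum_{c\in N(v)}w(v,c)<\gamma(v)$ from (\ref{le:f_function_definition}), both preserved under convex combinations. Fix a set $S$ of cuts whose special regions (the $d_v-1$ deleted positions of each cut) are pairwise disjoint; one can take $|S|=\lfloor(2L+1)/(d_v-1)\rfloor=\Theta(L)$, and a union bound over $S$ shows all cuts of $S$ are good with probability $1-o(1)$. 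For $j\in S$, take a good dual witness $w^{(j)}$ on $\zeta'_j$; Theorem~\ref{le:maxweightsc} bounds the maximum edge weight of any hyperflow on a spatially coupled code by $cn^{1-\epsilon_0}$ (here $\epsilon_0$ denotes the constant $\epsilon$ of Theorem~\ref{le:maxweightsc}, and one may first convert $w^{(j)}$ to a hyperflow via Algorithm~\ref{le:cycleremoval}, which preserves every $f_v$ and hence the excess), so the forward direction of Lemma~\ref{le:derivedvscoverlemma} extends it to a dual witness $\hat w^{(j)}$ on $\zeta$ with $f_v(\hat w^{(j)})<\gamma(v)-\delta/2$ at all variable nodes not special for $j$ and $f_v(\hat w^{(j)})<\gamma(v)+D_0$, $D_0:=d_v c n^{1-\epsilon_0}+1$, at the $(d_v-1)M$ special ones. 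Averaging, $\bar w:=\frac{1}{|S|}\sum_{j\in S}\hat w^{(j)}$ keeps the cone constraint, and at a variable node $v$ it gives $f_v(\bar w)<\gamma(v)-\delta/2$ when $v$ is special for no cut of $S$, and $f_v(\bar w)<\gamma(v)+\frac{D_0}{|S|}-\frac{(|S|-1)\delta}{2|S|}$ when $v$ is special for the unique $j_0\in S$; the latter is $<\gamma(v)$ once $|S|\ge 2D_0/\delta+1$. Hence $\bar w$ is a genuine dual witness for $\eta$ on $\zeta$, and LP succeeds on $\zeta$ by Theorem~\ref{le:existencedualwitness}. The feasibility requirement $2D_0/\delta+1\le|S|=\Theta(L)$ is exactly $n^{1-\epsilon_0}=O(L)$, i.e.\ $((2L+1)M)^{1-\epsilon_0}=O(L)$, which for large $L$ is implied by $M=o(L^{\nu})$ with $\nu:=\epsilon_0/(1-\epsilon_0)$, a constant depending only on $d_v,d_c$. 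Therefore $\Pr_{\Gamma_{GC},\epsilon\text{-BSC}}[\text{LP error}]=o(1)$, so $\epsilon\le\xi_{GC}$; letting $\epsilon\uparrow\xi_{SC}$ gives $\xi_{GC}\ge\xi_{SC}$, and with the first part, $\xi_{GC}=\xi_{SC}$.

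The hard part is the last paragraph: recognizing that a dual witness coming from one spatially coupled cut is already a dual witness for the whole graph cover code up to a \emph{sublinear} deficit supported on only $(d_v-1)M$ nodes, that averaging over $\Theta(L)$ cuts with disjoint special regions dilutes that deficit by a factor $\Theta(L)$, and that the constant LP excess $\delta/2$ furnished by Theorem~\ref{le:interplaytheorem} at the slightly raised crossover $\epsilon'$ then suffices to absorb it --- the arithmetic matching $n^{1-\epsilon_0}$ against $\Theta(L)$ being precisely what forces the hypothesis $M=o(L^{\nu})$.
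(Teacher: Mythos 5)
Your proposal is correct and follows essentially the same route as the paper: the easy inequality by deleting special variable nodes, and the hard inequality by combining Theorem \ref{le:interplaytheorem} (constant LP excess at a slightly raised crossover), Theorem \ref{le:maxweightsc} (sublinear edge weights, hence a sublinear deficit at the special nodes via Lemma \ref{le:derivedvscoverlemma}), and an averaging of the per-cut dual witnesses, with the same arithmetic yielding $\nu=\epsilon_0/(1-\epsilon_0)$. The only differences are organizational --- you argue directly rather than by contradiction, average over $\Theta(L)$ cuts with disjoint special regions instead of all $2L+1$ cuts, and work with the joint distribution plus a union bound where the paper first extracts a single good code via Markov's inequality (Lemma \ref{le:ensembles_lemma}) --- none of which changes the substance.
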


\begin{lemma}\label{le:ensembles_lemma}
Assume that the ensemble $\Gamma_{SC}$ satisfies the property (\ref{le:needed_condition_repeated}) for every constant $\Delta > 0$. Then, for all constants $\Delta_{1},\Delta_{2}, \alpha, \beta> 0$, there exists a graph cover code $\zeta \in \Gamma_{GC}$, with derived spatially coupled codes $\zeta'_{-L}, \dots ,\zeta'_{L}$, satisfying the following two properties for sufficiently large $L$:
\begin{enumerate}
\item $Pr_{(\xi_{GC}+\Delta_{2})\text{-}BSC}[\text{LP decoding success on }\zeta] \le \alpha$.
\item For all $i \in [-L:L]$, $Pr_{(\xi_{SC}-\Delta_{1})\text{-}BSC}[\text{LP decoding error on }\zeta'_{i}] \le \beta/(2L+1)$.
\end{enumerate}
\end{lemma}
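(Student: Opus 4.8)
The plan is a first-moment argument over the random choice of a graph cover code $\zeta \sim \Gamma_{GC}$: I will show that each of the two required properties fails for only a small fraction of $\zeta$, so that some $\zeta$ satisfies both simultaneously for all sufficiently large $L$.

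\emph{Property 2.} By the definition of $\Gamma_{SC}$, sampling $\zeta' \sim \Gamma_{SC}$ amounts to sampling $\zeta \sim \Gamma_{GC}$, then an index $i$ uniformly from $[-L:L]$, and outputting $\zeta'_i$. Hence, writing $p_i(\zeta) = Pr_{(\xi_{SC}-\Delta_1)\text{-}BSC}[\text{LP error on }\zeta'_i]$,
\[
Pr_{\zeta' \sim \Gamma_{SC},\,(\xi_{SC}-\Delta_1)\text{-}BSC}[\text{LP error on }\zeta'] \;=\; E_{\zeta \sim \Gamma_{GC}}\Big[\tfrac{1}{2L+1}\textstyle\sum_{i=-L}^{L} p_i(\zeta)\Big].
\]
Applying the hypothesis (\ref{le:needed_condition_repeated}) with $\Delta = \Delta_1$, the left-hand side is $o(1/L^2)$, so $E_{\zeta}\big[\sum_i p_i(\zeta)\big] = (2L+1)\cdot o(1/L^2) = o(1/L)$. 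Since $\sum_i p_i(\zeta) \ge \max_i p_i(\zeta)$, Markov's inequality gives
\[
Pr_{\zeta \sim \Gamma_{GC}}\big[\exists\, i \in [-L:L]:\; p_i(\zeta) > \tfrac{\beta}{2L+1}\big] \;\le\; \frac{2L+1}{\beta}\, E_{\zeta}\big[\textstyle\sum_i p_i(\zeta)\big] \;=\; o(1),
\]
so Property 2 holds for a $1 - o(1)$ fraction of $\zeta \sim \Gamma_{GC}$. This is the step that actually uses the strength ($o(1/L^2)$) of the hypothesis.

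\emph{Property 1 and conclusion.} Since $\xi_{GC} + \Delta_2 > \xi_{GC}$ and $\xi_{GC}$ is the LP threshold of $\Gamma_{GC}$, the ensemble-averaged error probability $Pr_{\zeta\sim\Gamma_{GC},(\xi_{GC}+\Delta_2)\text{-}BSC}[\text{LP error on }\zeta]$ is not $o(1)$. I would upgrade this to $1 - o(1)$ by a sharp-threshold argument: LP-decoding failure is a \emph{monotone} event in the vector of flipped bits (by Theorem \ref{le:existencedualwitness}, a dual witness for a received word remains a dual witness when fewer bits are flipped), and the ensemble-averaged failure probability, viewed as a function of the noise vector, is invariant under a transitive group of coordinate permutations — independent relabelings of the $M$ variable nodes within each position composed with cyclic shifts of the $2L+1$ positions — since $\Gamma_{GC}$ is exchangeable under this group; a Friedgut--Kalai type bound then confines the transition from $o(1)$ to $1-o(1)$ to a window of width $o(1)$ in the crossover probability, so at the fixed crossover $\xi_{GC}+\Delta_2$ the error probability is $1-o(1)$. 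Consequently $E_{\zeta}\big[Pr_{(\xi_{GC}+\Delta_2)\text{-}BSC}[\text{LP decoding success on }\zeta]\big] = o(1) \le \alpha/2$ for large $L$, so by Markov $Pr_{\zeta}\big[Pr_{(\xi_{GC}+\Delta_2)\text{-}BSC}[\text{LP decoding success on }\zeta] > \alpha\big] \le 1/2$; hence Property 1 holds for at least a $1/2$ fraction of $\zeta$. A union bound with the $1-o(1)$ bound from Property 2 shows that a fraction $\ge 1/2 - o(1) > 0$ of $\zeta \sim \Gamma_{GC}$ satisfies both properties, so in particular such a $\zeta$ exists.

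The main obstacle is precisely this amplification in Property 1 — converting ``the ensemble error probability does not vanish just above $\xi_{GC}$'' into ``it is $1 - o(1)$ at $\xi_{GC}+\Delta_2$'', for which one must verify that the ensemble is genuinely symmetric under a transitive group action and that the threshold machinery applies to the ensemble-averaged (real-valued) failure indicator. I note, however, that in the downstream contradiction of Section \ref{le:proof_main_result_section} only a fixed constant $\alpha$ is required, in which case one may bypass the sharp-threshold input entirely: Markov's inequality applied directly to the non-vanishing ensemble error already produces a constant fraction of codes $\zeta$ with $Pr_{(\xi_{GC}+\Delta_2)\text{-}BSC}[\text{LP decoding success on }\zeta]$ bounded away from $1$, which is enough for the intersection argument above to go through with that value of $\alpha$.
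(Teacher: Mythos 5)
Your core argument coincides with the paper's: a Markov/first-moment bound over $\zeta \sim \Gamma_{GC}$ for each property followed by a union bound. Your treatment of Property 2 is exactly the paper's (the $o(1/L^2)$ hypothesis absorbs the factor $(2L+1)^2$ coming from Markov plus the union over the $2L+1$ derived codes), and your observation that sampling $\zeta'\sim\Gamma_{SC}$ is the same as averaging $p_i(\zeta)$ over $\zeta$ and uniform $i$ is the step the paper uses implicitly. The genuine point of divergence is Property 1. The paper's own proof simply writes
\begin{equation*}
\tfrac{1}{\alpha}\,Pr_{\zeta \sim \Gamma_{GC} \atop (\xi_{GC}+\Delta_{2})\text{-}BSC}[\text{LP decoding success on }\zeta] = o(1),
\end{equation*}
i.e.\ it tacitly assumes that above the threshold the ensemble error probability tends to $1$, whereas the definition of $\xi_{GC}$ only guarantees that it does not tend to $0$. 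You are right to single this out as the real content of the step, and your monotonicity observation (a dual witness for $\eta$ remains one after unflipping bits, since only the right-hand side of (\ref{le:dw_var_equation}) changes, and it increases) is correct.

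That said, your proposed fix is not yet a proof. The Friedgut--Kalai machinery applies to a single monotone event that is invariant under a transitive group; here the symmetry you invoke is a symmetry of the \emph{ensemble}, not of a fixed code, so it does not apply verbatim to the averaged quantity $E_{\zeta}\big[Pr_{p}[\text{failure on }\zeta]\big]$ --- one must either work in the joint (code, noise) space and redo the Russo--Margulis/KKL computation there, or control the maximum influence for each fixed code; this is genuine additional work your sketch does not carry out. Your fallback (Markov applied to the non-vanishing error probability) only yields Property 1 for $\alpha$ close to $1$ and only along a subsequence of $L$, so it does not establish the lemma as stated (for every $\alpha>0$ and all sufficiently large $L$); as you note, it does suffice for the contradiction in Lemma \ref{le:greaterthanorequallemma}, where $\beta$ is a free parameter, so the downstream result survives. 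Net assessment: Property 2 and the combination step are correct and identical to the paper's; for Property 1 you have correctly located a gap that the paper's proof leaves unaddressed, but neither your sketch nor the paper fully closes it.
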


\begin{proof}[{\bf Proof of lemma \ref{le:ensembles_lemma}}]
Note that a random code $\zeta \sim \Gamma_{GC}$ satisfies the $2$ properties above with high probability:
\begin{align*}
&Pr_{\zeta \sim \Gamma_{GC}}\big[Pr_{(\xi_{GC}+\Delta_{2})\text{-}BSC}[\text{Success on }\zeta] > \alpha \text{ or }\exists i \in [-L:L] \text{ s.t. }Pr_{(\xi_{SC}-\Delta_{1})\text{-}BSC}[\text{Error on }\zeta'_{i}] > \beta(2L+1)\big]\\ 
&\le \frac{1}{\alpha}Pr_{\zeta \sim \Gamma_{GC} \atop (\xi_{GC}+\Delta_{2})\text{-}BSC}[\text{LP decoding success on }\zeta] + \frac{(2L+1)^2}{\beta}Pr_{\zeta' \sim \Gamma_{SC} \atop (\xi_{SC}-\Delta_{1})\text{-}BSC}[\text{LP decoding error on }\zeta']\\ 
&=o(1)
\end{align*}
Note that the inequality above follows from Markov's inequality and the union bound. We conclude that there exists a graph cover code $\zeta \in \Gamma_{GC}$ satisfying the $2$ properties above.
\end{proof}

\begin{lemma}\label{le:greaterthanorequallemma}
$\xi_{GC} \ge \xi_{SC}$
\end{lemma}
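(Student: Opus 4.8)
The plan is to show that whenever LP decoding succeeds with high probability on the spatially coupled ensemble $\Gamma_{SC}$ at crossover probability $\epsilon$, it also succeeds with high probability on the graph cover ensemble $\Gamma_{GC}$ at the same $\epsilon$, up to an arbitrarily small loss. Concretely, fix a constant $\Delta > 0$; I want to prove that $\Gamma_{GC}$ has vanishing LP error probability at crossover probability $\xi_{SC} - \Delta$, which gives $\xi_{GC} \ge \xi_{SC} - \Delta$, and then let $\Delta \to 0$.

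The key tool is Corollary \ref{le:relationspatiallycover} (and its parent Lemma \ref{le:derivedvscoverlemma}), which relates a dual witness on a derived spatially coupled code $\zeta'_i$ to a dual witness on the graph cover code $\zeta$ in which the ``special variable nodes'' carry an extra flow of $d_v c n^{1-\epsilon} + 1$. The first step is to transmit over the $(\xi_{SC} - \Delta)$-BSC on $\zeta$ and condition on the error pattern $\eta$. For each cut point $i \in [-L:L]$, $\eta$ restricts to an error pattern $\eta'_i$ on $\zeta'_i$. If LP decoding succeeds on $\zeta'_i$ for $\eta'_i$, then by Theorem \ref{le:existencedualwitness} there is a dual witness on $\zeta'_i$, and by Corollary \ref{le:relationspatiallycover} this extends to a dual witness on $\zeta$ in which the $2\hat{d_v}M$ special variable nodes at positions $i, i+1, \dots, i + 2\hat{d_v} - 1$ are allowed to be sources supplying extra flow $O(n^{1-\epsilon})$. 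The second step is the averaging/combining idea: average these dual witnesses over all $2L+1$ cut points $i$ (or more precisely over the $\approx L/\hat{d_v}$ disjoint blocks of special positions). Since each variable node of $\zeta$ is a ``special'' node for only $2\hat{d_v}$ of the cut points, averaging dilutes the extra flow on any given variable node to roughly $\frac{2\hat{d_v}}{2L+1} \cdot O(n^{1-\epsilon}) = O(M^2 L^{-\epsilon})$ per node (using $n = (2L+1)M$). The dual witness check constraints (\ref{le:dw_check_equation}) are preserved under averaging, and the variable constraints (\ref{le:dw_var_equation}) now have slack $\gamma(v) - f(w)$ at every node bounded below by a quantity of the form $1 - O(M^2 L^{-\epsilon})$.

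The third step is to absorb this small residual excess using Theorem \ref{le:interplaytheorem}: run the code at the slightly smaller crossover probability $\epsilon = \xi_{SC} - \Delta$ versus $\epsilon' = \xi_{SC} - \Delta/2$ (with a suitable $\delta$), so that with probability $1 - 2q_{\epsilon'}/\delta$ there is a dual witness on $\zeta'_i$ already equipped with LP excess $\delta/2$ on all variable nodes. This ``pre-excess'' is what pays for the diluted extra flow $O(M^2 L^{-\epsilon})$, provided $M^2 L^{-\epsilon} = o(1)$, i.e. $M = o(L^{\epsilon/2})$, which is where the hypothesis $M = o(L^\nu)$ comes in (take $\nu$ below $\epsilon/2$ with $\epsilon$ the exponent from Theorem \ref{le:maxweightsc}). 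Putting it together: by condition (\ref{le:needed_condition}), the probability (over $\zeta' \sim \Gamma_{SC}$ and the $(\xi_{SC}-\Delta/2)$-BSC) of LP error is $o(1/L^2)$, so by Lemma \ref{le:ensembles_lemma} we can fix a single graph cover code $\zeta$ for which all $2L+1$ derived codes $\zeta'_i$ simultaneously have error probability $o(1/L)$; a union bound over the $O(L)$ blocks then shows that, with probability $1-o(1)$ over the channel, every $\zeta'_i$ admits the excess-equipped dual witness, and the averaged-and-rescaled object is a genuine dual witness on $\zeta$. Hence LP decoding succeeds on $\zeta$ with probability $1 - o(1)$ at crossover $\xi_{SC} - \Delta$, giving $\xi_{GC} \ge \xi_{SC} - \Delta$ for all $\Delta > 0$.

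The main obstacle, I expect, is making the averaging argument rigorous: one must carefully account for which variable nodes of $\zeta$ are ``special'' for which cut points, verify that the number of cut points making a given node special is exactly $2\hat{d_v}$ (independent of the node), check that dual witnesses from different cut points can be summed without violating (\ref{le:dw_check_equation}) at the boundary check nodes, and confirm that the bookkeeping of extra flow versus pre-existing LP excess closes with the stated constraint on $M$ and $L$. The interplay between the exponent $\epsilon$ of Theorem \ref{le:maxweightsc}, the dilution factor, and the excess $\delta/2$ from Theorem \ref{le:interplaytheorem} is the delicate quantitative heart of the argument; everything else is assembling already-proved pieces.
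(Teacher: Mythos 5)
Your proposal follows essentially the same route as the paper's proof: extend a dual witness on each derived code $\zeta'_i$ to one on $\zeta$ with extra flow $O(n^{1-\epsilon})$ on the special nodes (Corollary \ref{le:relationspatiallycover}), average over the $2L+1$ cut points so that each node is special for only $d_v-1$ of them, and absorb the diluted residue using the excess $\delta/2$ supplied by Theorem \ref{le:interplaytheorem} together with condition (\ref{le:needed_condition}) and a Markov/union bound to fix a good $\zeta$ --- the paper merely packages this as a contradiction with $\delta=(\xi_{SC}-\xi_{GC})/2$ instead of your direct $\Delta\to 0$ argument. Two small points to tighten: the diluted extra flow per node is $O(M^{1-\epsilon}L^{-\epsilon})$ (whence the paper's $\nu=\epsilon/(1-\epsilon)$), not $O(M^{2}L^{-\epsilon})$, and your direct framing needs the good-code selection of Lemma \ref{le:ensembles_lemma} with a failure parameter vanishing in $L$ (e.g.\ Markov threshold $1/\sqrt{L}$, which condition (\ref{le:needed_condition})'s $o(1/L^2)$ comfortably affords) rather than the fixed constants $\alpha,\beta$ that suffice for the contradiction.
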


\begin{proof}[{\bf Proof of lemma \ref{le:greaterthanorequallemma}}]
We proceed by contradiction. Assume that $\xi_{GC} < \xi_{SC}$. Let:
\begin{align*}
&\delta = (\xi_{SC}-\xi_{GC})/2\\ 
&\eta = \xi_{SC} - \delta\\ 
&\lambda = \eta - \delta/2 = \xi_{GC} + \delta/2
\end{align*}
Note that $\eta > \lambda + (1-\lambda)\delta/2$. Let $\zeta$ be one of the graph cover codes whose existence is guaranteed by Lemma \ref{le:ensembles_lemma} with $\Delta_{1} = \delta$, $\Delta_{2} = \delta/2$ and $\alpha,\beta >0$ with $\alpha < 1 - 2\beta/\delta$ and let $\zeta'_{-L}, \dots ,\zeta'_{L}$ be the spatially coupled codes that are derived from $\zeta$. Let $\mu$ be an error pattern on $\zeta$ and let $\mu_{i}$ be the restriction of $\mu$ to $\zeta'_{i}$ for every $i \in [-L:L]$. Define the event:
\begin{equation*}
E_1 = \{ \forall i \in [-L:L], \exists\text{ a dual witness for $\mu_{i}$ on $\zeta^{'}_{i}$ with excess $\delta/2$ on all variable nodes}\}
\end{equation*}
Then,
\begin{equation*}
\overline{E_1} = \{ \exists i \in [-L:L]\text{ s.t. } \nexists \text{ a dual witness for $\mu_{i}$ on $\zeta^{'}_{i}$ with excess $\delta/2$ on all variable nodes}\}
\end{equation*} 
Thus, \begin{align*} Pr_{\lambda\text{-}BSC}\{ \overline{E_1}\} &\le \displaystyle\sum\limits_{i=-L}^L Pr_{\lambda\text{-}BSC} \{ \nexists \text{ a dual witness for $\zeta^{'}_{i}$ with excess $\delta/2$ on all variable nodes}\}\\ &\le \displaystyle\sum\limits_{i=-L}^L \frac{2}{\delta} Pr_{\eta\text{-}BSC} \{ \text{LP decoding error on $\zeta^{'}_{i}$}\} \text{ (by Theorem \ref{le:interplaytheorem})}\\ &\le \displaystyle\sum\limits_{i=-L}^L \frac{2}{\delta} \times \frac{\beta}{2L+1} = \frac{2 \beta}{\delta} \end{align*}
If event $E_1$ is true, then by Corollary \ref{le:relationspatiallycover}, for every $l \in [-L:L]$, there exists a dual witness $\{\tau_{ij}^l ~ | ~ i \in V, j \in C \}$ for $\mu$ on $\zeta$ with the special variable nodes being at positions $[l,l+2\hat{d_{v}}-1]$ and having an ``extra flow'' of $d_{v} c n^{1-\epsilon}+1$ with $c>0$ and $\epsilon > 0$ given in Theorem \ref{le:maxweightsc} and with the non-special variable nodes having excess $\frac{\delta}{2}$. Then, we can construct a dual witness for $\mu$ on the graph cover code $\zeta$ (with no extra flows) by averaging the above $2L+1$ dual witnesses as follows. For every $i \in V$ and every $j \in C$, let:
\begin{equation*}
\tau_{ij}^{avg} = \frac{1}{2L+1}\displaystyle\sum\limits_{l=-L}^L \tau_{ij}^l
\end{equation*}
We claim that $\{\tau_{ij}^{avg}\}_{i,j}$ forms a dual witness for $\mu$ on $\zeta$. In fact, for each $i \in V$, $j \in C$ and $l \in [-L:L]$, $\tau_{ij}^l + \tau_{i'j}^l \geq 0$ which implies that:
\begin{equation*}
\tau_{ij}^{avg} + \tau_{i'j}^{avg} = \frac{1}{2L+1}\displaystyle\sum\limits_{l=-L}^L (\tau_{ij}^l + \tau_{i'j}^l) \ge 0
\end{equation*}
Moreover, for all $i \in V$, we have that:
\begin{align*} \displaystyle\sum\limits_{j \in N(i)} \tau_{ij}^{avg} &= \displaystyle\sum\limits_{j \in N(i)} \Big(\frac{1}{2L+1}\displaystyle\sum\limits_{l=-L}^L \tau_{ij}^l \Big)\\  &= \frac{1}{2L+1}\displaystyle\sum\limits_{l=-L}^L \Big(\displaystyle\sum\limits_{j \in N(i)} \tau_{ij}^l \Big)\\ &< \frac{1}{2L+1} \Big( (d_{v}-1)(d_v c (M(2L+1))^{1-\epsilon} + 1 + \gamma_i) + (2L+1 - (d_{v}-1))(\gamma_i - \frac{\delta}{2}) \Big)\\ &=\gamma_i + (d_{v}-1)d_v c \frac{(M(2L+1))^{1-\epsilon}}{2L+1} + \frac{(d_{v}-1)\delta}{2(2L+1)} + \frac{d_{v}-1}{2L+1} -\frac{\delta}{2} \\ &< \gamma_i \text{ if }M=o(L^{\nu}) \text{, $L$ sufficiently large and }\nu = \epsilon/(1-\epsilon) \end{align*}
Since $Pr_{\lambda\text{-}BSC}\{ \text{LP decoding success on }\zeta\} \ge Pr_{\lambda\text{-}BSC}\{ E_1\} = 1 - Pr_{\lambda\text{-}BSC}\{ \overline{E_1}\}$, then,
\begin{align*}
Pr_{\lambda\text{-}BSC}\{ \text{LP decoding success on }\zeta\} \geq 1-\frac{2 \beta}{\delta}
\end{align*}
which contradicts the fact that:
\begin{equation*}
Pr_{\lambda\text{-}BSC}[\text{LP decoding success on }\zeta] = Pr_{(\xi_{GC}+\Delta_{2})\text{-}BSC}[\text{LP decoding success on }\zeta] \le \alpha < 1-\frac{2 \beta}{\delta}
\end{equation*}

\end{proof}

\begin{lemma}\label{le:lessthanorequal}
$\xi_{GC} \le \xi_{SC}$
\end{lemma}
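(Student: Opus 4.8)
The plan is to prove the stronger, hypothesis-free inequality that for every crossover probability $\epsilon>0$,
\[
Pr_{\zeta' \sim \Gamma_{SC} \atop \epsilon\text{-}BSC}[\text{LP error on }\zeta'] \le Pr_{\zeta \sim \Gamma_{GC} \atop \epsilon\text{-}BSC}[\text{LP error on }\zeta].
\]
Once this is established, every $\epsilon$ for which the right-hand side is $o(1)$ also makes the left-hand side $o(1)$, so the set of crossover probabilities defining $\xi_{GC}$ is contained in the one defining $\xi_{SC}$; taking suprema gives $\xi_{GC}\le\xi_{SC}$ at once, with no need for $M=o(L^{\nu})$ or for (\ref{le:needed_condition_repeated}), which are used only in Lemma \ref{le:greaterthanorequallemma}.

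To obtain the displayed bound I would first exploit the definition of $\Gamma_{SC}$ to set up a coupling: draw $\zeta\sim\Gamma_{GC}$, then an error pattern $\eta\sim Ber(\epsilon,(2L+1)M)$ on the variable nodes of $\zeta$, then an index $i$ uniformly from $[-L:L]$, and let $\eta|_{\zeta'_i}$ be the restriction of $\eta$ to the variable nodes that survive in $\zeta'_i$. By Definition \ref{le:derivedpsatiallycoupledcodes}, $\zeta'_i$ is obtained from $\zeta$ by deleting variable nodes together with their incident edges while keeping every check node, so conditionally on $(\zeta,i)$ the vector $\eta|_{\zeta'_i}$ is still an i.i.d.\ $Ber(\epsilon)$ vector over the variable nodes of $\zeta'_i$, i.e.\ an $\epsilon$-BSC error pattern on $\zeta'_i$. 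Hence $(\zeta'_i,\eta|_{\zeta'_i})$ has exactly the joint law of a code drawn from $\Gamma_{SC}$ together with an $\epsilon$-BSC error pattern on it, and therefore
\[
Pr_{\zeta' \sim \Gamma_{SC} \atop \epsilon\text{-}BSC}[\text{LP error on }\zeta'] = E_{\zeta,\eta}\Big[\tfrac{1}{2L+1}\textstyle\sum_{i=-L}^{L}\mathbf{1}\{\text{LP error on }(\zeta'_i,\eta|_{\zeta'_i})\}\Big].
\]

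The second step is a deterministic, per-instance monotonicity claim: for each fixed $\zeta$ and $\eta$, if LP decoding succeeds on $(\zeta,\eta)$ then it succeeds on $(\zeta'_i,\eta|_{\zeta'_i})$ for every $i$. By Theorem \ref{le:existencedualwitness}, LP success on $(\zeta,\eta)$ yields a dual witness $w$ for $\eta$ on $\zeta$; restricting $w$ to the edges present in $\zeta'_i$ leaves equation (\ref{le:dw_var_equation}) literally unchanged for every surviving variable node (deleting variable nodes alters neither the check-neighborhood nor the $\gamma$-value of a surviving node) and only removes, from equation (\ref{le:dw_check_equation}), pairs involving a deleted neighbor of a check node, so both dual-witness conditions are inherited; this is precisely the ``reverse direction'' worked out in the proof of Lemma \ref{le:derivedvscoverlemma}. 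Applying Theorem \ref{le:existencedualwitness} once more gives LP success on $(\zeta'_i,\eta|_{\zeta'_i})$. Consequently $\tfrac{1}{2L+1}\sum_{i=-L}^{L}\mathbf{1}\{\text{LP error on }(\zeta'_i,\eta|_{\zeta'_i})\}\le\mathbf{1}\{\text{LP error on }(\zeta,\eta)\}$ pointwise in $(\zeta,\eta)$ (the left side is $0$ when the right side is $0$, and at most $1$ always), and taking expectations in the identity above produces the displayed inequality.

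I do not expect a genuine obstacle here: unlike Lemma \ref{le:greaterthanorequallemma}, which needed the sublinear edge-weight bound of Theorem \ref{le:maxweightsc} and the averaging argument of Theorem \ref{le:interplaytheorem}, this direction only uses that a sub-Tanner-graph is never harder to LP-decode. The one place deserving a small check is that restricting a dual witness to $\zeta'_i$ cannot violate (\ref{le:dw_check_equation}) at a check node whose degree has dropped (possibly to $1$), but this is immediate since discarding neighbors only removes constraints; the only modeling point to state cleanly is the coupling identity, namely that restricting an $\epsilon$-BSC pattern on $\zeta$ to the survivors of a uniformly random $\zeta'_i$ reproduces the $\Gamma_{SC}$/$\epsilon$-BSC law.
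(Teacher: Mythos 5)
Your proposal is correct and follows essentially the same route as the paper: the key step in both is that a dual witness for $\eta$ on $\zeta$ restricts to a dual witness for $\eta'$ on any derived $\zeta'_i$ (the ``reverse direction'' of Lemma \ref{le:derivedvscoverlemma}, using that every check node of $\zeta'_i$ retains degree $\geq 2$), so by Theorem \ref{le:existencedualwitness} LP success is inherited, and averaging over the ensemble gives the probability inequality and hence $\xi_{GC}\le\xi_{SC}$. Your explicit coupling of the error pattern with the uniformly chosen cut index $i$ just spells out the ensemble-averaging step that the paper states more tersely.
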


\begin{proof}[{\bf Proof of Lemma \ref{le:lessthanorequal}}]
Let $\zeta$ be a graph cover code and $D(\zeta)$ be the set of all derived spatially coupled codes of $\zeta$. Let $\mu$ be an error pattern on $\zeta$ and $\mu'$ be the restriction of $\mu$ to $\zeta'$ for some $\zeta' \in D(\zeta)$. Given a dual witness for $\mu$ on $\zeta$, we can get a dual witness for $\mu'$ on $\zeta'$ by repeatedly removing the special variable nodes of $\zeta$. Note that the dual witness is maintained after each step since every check node in $\zeta'$ has degree $\geq 2$. So if there is LP decoding success for $\eta$ on $\zeta$, then for every $\zeta' \in D(\zeta)$, there is LP decoding success for $\eta'$ on $\zeta'$, where $\eta'$ is the restriction of $\eta$ to $\zeta'$. Therefore, for every $\epsilon >0$ and every $\zeta' \in D(\zeta)$, we have that:
\begin{equation*}
Pr_{\epsilon\text{-}BSC}[\text{LP decoding error on }\zeta'] \le Pr_{\epsilon\text{-}BSC}[\text{LP decoding error on }\zeta]
\end{equation*}
This implies that for every $\epsilon >0$, we have that:
\begin{equation*}
Pr_{\zeta' \sim \Gamma_{SC} \atop \epsilon\text{-}BSC}[\text{LP decoding error on }\zeta'] \le Pr_{\zeta \sim \Gamma_{GC} \atop \epsilon\text{-}BSC}[\text{LP decoding error on }\zeta]
\end{equation*}
So we conclude that $\xi_{GC} \le \xi_{SC}$.
\end{proof}

\begin{proof}[{\bf Proof of Theorem \ref{le:equalitytheorem}}]
Theorem \ref{le:equalitytheorem} follows from Lemma \ref{le:greaterthanorequallemma} and Lemma \ref{le:lessthanorequal}.
\end{proof}

\appendix

\section[Appendix]{\Large{\bf Appendix}}

\subsection{Proof of Theorem \ref{le:existencedualwitness}}\label{le:app_ex_dual_witness}

The goal of this section is to prove Theorem \ref{le:existencedualwitness} which is restated below.

\begin{reptheorem}{le:existencedualwitness} (Existence of a dual witness and LP decoding success) \\ 
Let $\mathcal{T}=(V,C,E)$ be a Tanner graph of a binary linear code with block length $n$ and let $\eta \in \{0,1\}^{n}$ be any error pattern. Then, there is LP decoding success for $\eta$ on $\mathcal{T}$ if and only if there is a dual witness for $\eta$ on $\mathcal{T}$.
\end{reptheorem}
\noindent Note that the ``if'' part of the statement was proved in \cite{feldman2007lp}. The argument below establishes both directions.
We first state some definitions and prove some facts from convex geometry that will be central to the proof of Theorem \ref{le:existencedualwitness}.
\begin{definition}\label{le:topologydefinitions}
Let $S$ be a subset of $\mathbb{R}^{n}$. The convex span of $S$ is defined to be $conv(S) = \{ \alpha x + (1-\alpha)y ~ | ~ x,y \in S \text{ and } \alpha \in [0,1] \}$. The conic span of $S$ is defined to be $cone(S) = \{ \alpha x + \beta y ~ | ~ x,y \in S \text{ and } \alpha,\beta \in \mathbb{R}_{\geq 0} \}$. The set $S$ is said to be convex if $S = conv(S)$ and $S$ is said to be a cone if $S = cone(S)$. Also, $S$ is said to be a convex polyhedron if $S = \{ x \in \mathbb{R}^{n} ~ | ~ Ax \geq b\}$ for some matrix $A \in \mathbb{R}^{m \times n}$ and some $b \in \mathbb{R}^{n}$ and $S$ is said to be a polyhedral cone if $S$ is both a convex polyhedron and a cone. The interior of $S$ is denoted by $int(S)$ and the closure of $S$ is denoted by $cl(S)$.\\ 
\comments{
\item For any point $x \in \mathbb{R}^{n}$ and any $r > 0$, the neighborhood $N_{r}(x)$ is defined to be $N_{r}(x) = \{ y \in \mathbb{R}^{n} ~ | ~ \| x-y \|_{2} < r \}$ where $\|.\|_{2}$ denotes the Eucledian norm\footnote{The Euclidean norm of a vector $u \in \mathbb{R}^{n}$ is given by $\|u\|_{2} = \sqrt{\displaystyle\sum\limits_{i = 1}^{n} u_{i}^2}$ }.
\item The interior of $S$ is defined to be $int(S) = \{ x \in S ~ | ~ \exists r > 0 \text{ s.t. } N_{r}(x) \subseteq S \}$.
\item The closure of $S$ is defined to be $cl(S) = S \cup S'$ where $S'$ is the set of all limit points of $S$ which is given by $S' = \{x \in \mathbb{R}^{n} ~ | ~ \forall r > 0, \exists y \in $S$ \text{ s.t. } y \neq x \text{ and } y \in N_{r}(x) \}$.
}
Let $K$ be a polyhedral cone of the form $K = \{x \in \mathbb{R}^{n} ~ | ~ Ax \geq 0 \}$ for some matrix $A \in \mathbb{R}^{m \times n}$. For any $x \in K$ s.t. $x \neq 0$, the ray of $K$ in the direction of $x$ is defined to be the set $R(x) = \{ \lambda x ~ | ~ \lambda \geq 0 \}$. A ray $R(x)$ of $K$ is said to be an extreme ray of $K$ if for any $y,z \in \mathbb{R}^{n}$ and any $\alpha, \beta \geq 0$, $R(x) = \alpha R(y) + \beta R(z)$ implies that $y,z \in R(x)$.
\end{definition}

\begin{lemma}\label{le:topologyneededclaim}
If $S$ is a convex subset of $\mathbb{R}^{n}$, then $int\big(({\mathbb{R}_{\geq 0}})^{n} + S\big) = ({\mathbb{R}_{> 0}})^{n} + S$.
\end{lemma}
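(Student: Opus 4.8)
The plan is to prove the asserted equality by establishing the two inclusions separately. Write $K=(\mathbb{R}_{\geq 0})^n+S$ for the Minkowski sum, let $\mathbf{1}=(1,\dots,1)\in\mathbb{R}^n$, and for $x\in\mathbb{R}^n$, $r>0$ let $B(x,r)=\{y\in\mathbb{R}^n:\|x-y\|_2<r\}$ denote the open Euclidean ball; I will use the elementary translation identity $B(x,r)+\{s\}=B(x+s,r)$. I remark that convexity of $S$ is not actually needed for either inclusion; it is recorded in the hypothesis only because that is the setting in which the lemma is applied (to the fundamental polytope $P$), and the argument below works for an arbitrary $S\subseteq\mathbb{R}^n$.

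\emph{Inclusion $(\mathbb{R}_{>0})^n+S\subseteq\mathrm{int}(K)$.} Let $x=p+s$ with $p\in(\mathbb{R}_{>0})^n$ and $s\in S$. Since $(\mathbb{R}_{>0})^n$ is open, choose $r>0$ with $B(p,r)\subseteq(\mathbb{R}_{>0})^n\subseteq(\mathbb{R}_{\geq 0})^n$ (concretely, $r=\min_i p_i$ works). Then $B(x,r)=B(p,r)+\{s\}\subseteq(\mathbb{R}_{\geq 0})^n+S=K$, so $x\in\mathrm{int}(K)$.

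\emph{Inclusion $\mathrm{int}(K)\subseteq(\mathbb{R}_{>0})^n+S$.} Let $x\in\mathrm{int}(K)$ and pick $r>0$ with $B(x,r)\subseteq K$. Set $\epsilon=r/(2\sqrt{n})$ and $x'=x-\epsilon\mathbf{1}$, so $\|x-x'\|_2=\epsilon\sqrt{n}=r/2<r$, hence $x'\in K$. Write $x'=q+s$ with $q\in(\mathbb{R}_{\geq 0})^n$ and $s\in S$. Then $x=x'+\epsilon\mathbf{1}=(q+\epsilon\mathbf{1})+s$, and every coordinate of $q+\epsilon\mathbf{1}$ is at least $\epsilon>0$, so $q+\epsilon\mathbf{1}\in(\mathbb{R}_{>0})^n$ and $x\in(\mathbb{R}_{>0})^n+S$. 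Combining the two inclusions gives the claim.

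The main (and essentially only) point requiring a small idea is the ``push in the $-\mathbf{1}$ direction'' step in the second inclusion: it converts the slack guaranteed by $x$ being an interior point of $K$ into strict positivity of the nonnegative summand in a decomposition of $x$. Everything else is bookkeeping with balls and translations. If one prefers to avoid Euclidean balls, the identical argument goes through with sup-norm boxes, replacing the factor $\sqrt{n}$ by $1$.
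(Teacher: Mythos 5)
Your proof is correct. The first inclusion is essentially identical to the paper's (a ball of radius $\min_i p_i$ around the positive summand, translated by $s$). The reverse inclusion, however, takes a genuinely different route. The paper picks $u\in(\mathbb{R}_{>0})^{n}$ small enough that both $\alpha+u$ and $\alpha-u$ lie in $(\mathbb{R}_{\geq 0})^{n}+S$, decomposes each, and averages: $\alpha=\frac{r+u+r'}{2}+\frac{s+s'}{2}$, where the midpoint $\frac{s+s'}{2}$ lies in $S$ only because $S$ is convex. You instead perturb $x$ once in the $-\mathbf{1}$ direction to $x'=x-\epsilon\mathbf{1}$, which still lies in the Minkowski sum by interiority, decompose $x'=q+s$ a single time, and absorb the shift into the nonnegative part as $q+\epsilon\mathbf{1}\in(\mathbb{R}_{>0})^{n}$. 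Your observation that this makes convexity of $S$ unnecessary is accurate: both of your inclusions hold for an arbitrary subset $S\subseteq\mathbb{R}^{n}$, so your argument proves a slightly more general statement than the paper's, and with one decomposition instead of two. The paper's version buys nothing extra here except that it stays within the convex-geometric toolkit used throughout the appendix; yours is the more economical argument.
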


\begin{proof}[{\bf Proof of Lemma \ref{le:topologyneededclaim}}]
For all $\alpha \in ({\mathbb{R}_{> 0}})^{n} + S$, $\alpha = r + s$ where $r \in ({\mathbb{R}_{> 0}})^{n}$ and $s \in S$. Thus, the ball centered at $\alpha$ and of radius $\min_{i \in [n]} {r_{i}} > 0$ is contained in $\big(({\mathbb{R}_{\geq 0}})^{n} + S\big)$. Hence, $\alpha \in int\big(({\mathbb{R}_{\geq 0}})^{n} + S\big)$. Therefore, $({\mathbb{R}_{> 0}})^{n} + S \subseteq int\big(({\mathbb{R}_{\geq 0}})^{n} + S\big)$.\\ 
Conversely, for all $\alpha \in int\big(({\mathbb{R}_{\geq 0}})^{n} + S\big)$, $\alpha = r + s$ where $r \in ({\mathbb{R}_{\geq 0}})^{n}$ and $s \in S$. Moreover, since $\alpha \in int\big(({\mathbb{R}_{\geq 0}})^{n} + S\big)$, there exists $u \in ({\mathbb{R}_{> 0}})^{n}$ s.t. $\alpha + u \in \big(({\mathbb{R}_{\geq 0}})^{n} + S\big)$ and $\alpha - u \in \big(({\mathbb{R}_{\geq 0}})^{n} + S\big)$. Note that $\alpha + u = r + u + s$ and that $\alpha - u = r' + s'$ for some $r' \in ({\mathbb{R}_{\geq 0}})^{n}$ and $s' \in S$. Thus, $\alpha = \frac{(\alpha+u)+(\alpha-u)}{2} = \frac{r+u+r'}{2} + \frac{s+s'}{2} = r'' + s''$ where $r'' = \frac{r+u+r'}{2} \in ({\mathbb{R}_{> 0}})^{n}$ and $s'' = \frac{s+s'}{2} \in S$ since $S$ is a convex set. Hence, $int\big(({\mathbb{R}_{\geq 0}})^{n} + S\big) \subseteq ({\mathbb{R}_{> 0}})^{n} + S$.\\ 
Therefore, $int\big(({\mathbb{R}_{\geq 0}})^{n} + S\big) = ({\mathbb{R}_{> 0}})^{n} + S$.
\end{proof}

\begin{lemma}\label{le:intermediatelemma}
Let $S_{1},..,S_{p}$ be finite subsets of ${\mathbb{R}}^{n}$ each containing the zero vector. Then, \begin{equation*} cone\big(\displaystyle\bigcap\limits_{j=1}^{p} conv(S_{j})\big) = \displaystyle\bigcap\limits_{j=1}^{p} cone(S_{j}). \end{equation*}
\end{lemma}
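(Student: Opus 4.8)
The plan is to establish the two inclusions separately, the first being essentially formal and the second carrying the content. For $cone\big(\bigcap_{j=1}^{p} conv(S_{j})\big) \subseteq \bigcap_{j=1}^{p} cone(S_{j})$, I would fix $k \in [p]$ and observe that $\bigcap_{j=1}^{p} conv(S_{j}) \subseteq conv(S_{k}) \subseteq cone(S_{k})$; since $cone(S_{k})$ is a cone (closed under nonnegative combinations), taking the conic span of a subset of it keeps us inside $cone(S_{k})$, and intersecting over $k$ finishes this direction.

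For the reverse inclusion $\bigcap_{j=1}^{p} cone(S_{j}) \subseteq cone\big(\bigcap_{j=1}^{p} conv(S_{j})\big)$, the idea is: given $x$ in the left-hand set, exhibit a \emph{single} scalar $\Lambda \geq 0$ such that $x/\Lambda$ lies in every $conv(S_{j})$ at once, so that $x = \Lambda\cdot(x/\Lambda) \in cone\big(\bigcap_j conv(S_j)\big)$; the case $x = 0$ is handled separately. If $x = 0$, then since $0 \in S_{j}$ we have $0 \in conv(S_{j})$ for all $j$, hence $0 \in \bigcap_j conv(S_j) \subseteq cone\big(\bigcap_j conv(S_j)\big)$. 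If $x \neq 0$, for each $j$ the membership $x \in cone(S_{j})$ provides a representation of $x$ as a nonnegative combination of elements of $S_{j}$; letting $\Lambda_{j}$ be the sum of the coefficients of that representation, I would note $\Lambda_{j} > 0$ (otherwise every coefficient vanishes and $x = 0$), so that dividing the representation by $\Lambda_{j}$ turns it into a convex combination and gives $x/\Lambda_{j} \in conv(S_{j})$.

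The crux — and where both hypotheses ``$0 \in S_{j}$'' and ``$p$ finite'' are used — is to merge these $p$ different normalizations into one. Because $0 \in S_{j}$, the set $conv(S_{j})$ is star-shaped at the origin: if $y \in conv(S_{j})$ and $t \in [0,1]$ then $ty = t\,y + (1-t)\,0 \in conv(S_{j})$. Setting $\Lambda = \max_{1 \leq j \leq p} \Lambda_{j}$ (finite since $p < \infty$, positive since each $\Lambda_j > 0$), for every $j$ we have $\Lambda_{j}/\Lambda \in [0,1]$, hence $x/\Lambda = (\Lambda_{j}/\Lambda)\,(x/\Lambda_{j}) \in conv(S_{j})$ by star-shapedness. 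Thus $x/\Lambda \in \bigcap_{j=1}^{p} conv(S_{j})$ and $x = \Lambda\,(x/\Lambda) \in cone\big(\bigcap_{j=1}^{p} conv(S_{j})\big)$, which completes the argument.

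I expect the only genuine obstacle to be this ``common rescaling'' step: each constraint $x/\Lambda \in conv(S_j)$ is individually trivial, but making one point satisfy all $p$ of them simultaneously forces the rescaling by the largest coefficient-sum and the use of the origin being in each $S_j$. The degenerate cases ($x = 0$, or a vanishing coefficient-sum) must be dispatched first so the divisions are legitimate; beyond that the proof needs no topology (no closures or interiors), in contrast to Lemma~\ref{le:topologyneededclaim}.
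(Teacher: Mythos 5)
Your proof is correct and takes essentially the same route as the paper's: both handle $x=0$ separately and, for nonzero $x$, rescale by the maximum of the coefficient sums ($\Lambda = \max_j \Lambda_j$, which is the paper's $D$), using $0 \in S_j$ to absorb the slack $1-\Lambda_j/\Lambda$ into a convex combination. Your ``star-shapedness at the origin'' phrasing is just the paper's explicit padding of the combination with the zero vector carrying coefficient $1-\sum_s a_{s,j}/D$, so the two arguments coincide.
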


\begin{proof}[{\bf Proof of Lemma \ref{le:intermediatelemma}}]
Clearly, $cone\big(\displaystyle\bigcap\limits_{j=1}^{p} conv(S_{j})\big) \subseteq \displaystyle\bigcap\limits_{j=1}^{p} cone(S_{j})$. To prove the other direction, we first note that $0 \in cone\big(\displaystyle\bigcap\limits_{j=1}^{p} conv(S_{j})\big)$. For any non-zero $x \in \displaystyle\bigcap\limits_{j=1}^{p} cone(S_{j})$, we have that for all $j \in [p]$, $x = \displaystyle\sum\limits_{s \in S_j} a_{s,j}s$ where for any $s \in S_{j}, ~ a_{s,j} \geq 0$. Let $j_{max} = \underset{j \in [p]}{\operatorname{argmax}}{~ \displaystyle\sum\limits_{s \in S_j} a_{s,j}}$. Since $x \neq 0, ~ D = \displaystyle\sum\limits_{s \in S_{j_{max}}} a_{s,j_{max}} > 0$. Thus, for any $j \in [p]$, we have $\frac{x}{D} = \displaystyle\sum\limits_{s \in S_j} \big( \frac{a_{s,j}}{D} \big) s + \big(1 - \displaystyle\sum\limits_{s \in S_j} \frac{a_{s,j}}{D} \big)0$. Since for all $j \in [p]$, $0 \leq \displaystyle\sum\limits_{s \in S_j} a_{s,j} \leq D$ and $0 \in S_{j}$, we conclude that $\frac{x}{D} \in conv(S_{j})$ for all $j \in [p]$. Hence, $x \in cone\big(\displaystyle\bigcap\limits_{j=1}^{p} conv(S_{j})\big)$. Therefore, $\displaystyle\bigcap\limits_{j=1}^{p} cone(S_{j}) \subseteq cone\big(\displaystyle\bigcap\limits_{j=1}^{p} conv(S_{j})\big)$.
\end{proof}


\begin{lemma}\label{le:rayslemma}
Let $K$ be a polyhedral cone of the form $K = \{x \in \mathbb{R}^{m} ~ | ~ Ax \geq 0 \}$ for some matrix $A \in \mathbb{R}^{l \times m}$ of rank $m$. For any $x \in K$ s.t. $x \neq 0$, we have:
\begin{enumerate}
\item\label{le:first_part_rays} If $R(x)$ is an extreme ray of $K$, then there exists an $(m-1)\times m$ submatrix $A'$ of $A$ s.t. the rows of $A'$ are linearly independent and $A' x =0$.
\item\label{le:second_part_rays} $K = cone(R)$ where $R = \underset{\text{extreme rays }R(x)\text{ of }K} \bigcup R(x)$.
\end{enumerate}
\end{lemma}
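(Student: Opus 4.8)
The plan is to prove parts \ref{le:first_part_rays} and \ref{le:second_part_rays} by elementary polyhedral arguments; together they form the ``Weyl'' direction of the Minkowski--Weyl theorem for the cone $K$, with the hypothesis $\mathrm{rank}(A)=m$ playing the role of pointedness. Throughout, for a nonzero $x\in K$ I would write $I(x)$ for the set of indices of rows $a_i$ of $A$ that are tight at $x$ (i.e. $a_i x=0$), let $A_x$ denote the corresponding submatrix, and set $r(x)=m-\mathrm{rank}(A_x)$; since $x\in\ker A_x\setminus\{0\}$ one has $1\le r(x)\le m$.

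For part \ref{le:first_part_rays}: as $A_x x=0$ and $x\neq 0$, we get $\mathrm{rank}(A_x)\le m-1$, so it only remains to rule out $\mathrm{rank}(A_x)\le m-2$. In that case $\ker A_x$ contains a vector $y$ with $\{x,y\}$ linearly independent, and for $\varepsilon>0$ small enough that all the finitely many constraints slack at $x$ stay strict at $x\pm\varepsilon y$, both $x\pm\varepsilon y$ lie in $K$, neither lies on $R(x)$, and $x=\tfrac12(x+\varepsilon y)+\tfrac12(x-\varepsilon y)$ --- contradicting that $R(x)$ is extreme. Hence $\mathrm{rank}(A_x)=m-1$, and any $m-1$ independent rows of $A_x$ give the required submatrix $A'$ with $A'x=0$.

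For part \ref{le:second_part_rays}, the inclusion $\mathrm{cone}(R)\subseteq K$ is immediate since $K$ is a convex cone containing each $R(x)$. For the reverse I would first observe that $\mathrm{rank}(A)=m$ forces $K$ to be pointed, hence to contain no affine line, and then prove $x\in\mathrm{cone}(R)$ for every nonzero $x\in K$ by induction on $r(x)$. When $r(x)=1$ we have $\ker A_x=\mathrm{span}(x)$, and $R(x)$ is then seen to be an extreme ray (a decomposition $x=u+v$ with $u,v\in K$ forces $a_i u=a_i v=0$ on each tight row, hence $u,v\in\mathrm{span}(x)$, hence --- using membership in $K$ --- nonnegative multiples of $x$), so $x\in R\subseteq\mathrm{cone}(R)$. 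When $r(x)\ge 2$, the subspace $W=\ker A_x$ has dimension $\ge 2$ and $F=K\cap W$ is a pointed cone in $W$; a connectedness argument on the unit sphere of $W$ (which cannot be partitioned into the disjoint closed sets $F$ and $-F$) produces $y\in W$ with $\{x,y\}$ independent and $y,-y\notin K$, so the line $t\mapsto x+ty$ leaves $K$ on both sides. The two boundary points $x^{\pm}=x+t^{\pm}y$ retain every tight row of $x$ and acquire a further tight row $a_j$ with $a_j y\neq 0$ (hence $a_j$ independent of the previously tight rows), so $r(x^{\pm})<r(x)$; they are nonzero, and $x$ is a convex combination of $x^+$ and $x^-$, so the induction hypothesis finishes the argument.

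I expect the main obstacle to be the ``slide to the boundary'' step in part \ref{le:second_part_rays}: one must choose the direction $y$ so that the line $x+ty$ genuinely exits $K$ on \emph{both} sides --- precisely where pointedness, via the connectedness of the sphere in $W$, is used --- and one must certify that each boundary point lies on a strictly smaller face, the crucial point being that the newly saturated constraint is linearly independent of the previously saturated ones because it does not annihilate $y$. The remaining steps are routine; the only real effort is keeping the proof self-contained instead of quoting Minkowski--Weyl directly.
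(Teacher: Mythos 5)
Your proposal is correct, and it is genuinely more than what the paper does: the paper offers no proof at all for Lemma \ref{le:rayslemma}, simply citing Section 8.8 of Schrijver, whereas you give a self-contained rendering of the standard Minkowski--Weyl argument for pointed polyhedral cones. Your part \ref{le:first_part_rays} (rank of the tight submatrix must be exactly $m-1$, else perturb along a second kernel direction and decompose $x$) and your part \ref{le:second_part_rays} (induction on the corank of the tight submatrix, sliding along a direction in $\ker A_x$ that exits $K$ on both sides, with the newly saturated row independent of the old ones because it does not annihilate $y$) are exactly the textbook proof, and all the places where care is needed check out: $K$ is closed, so the sup of admissible $t$ is attained; $\mathrm{rank}(A)=m$ gives $K\cap(-K)=\ker A=\{0\}$, which is what makes the sphere-connectedness step and the base case ($u=\lambda x\in K\Rightarrow\lambda\ge 0$) work; and the two boundary points are nonzero since $\{x,y\}$ is independent. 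The one discrepancy worth flagging is definitional: the paper's Definition \ref{le:topologydefinitions} of ``extreme ray'' ($R(x)=\alpha R(y)+\beta R(z)\Rightarrow y,z\in R(x)$) is nonstandard and, read literally, is vacuous or unsatisfiable (take $\alpha=0$ and $y$ arbitrary), whereas you silently use the standard indecomposability definition ($x=u+v$ with $u,v\in K$ forces $u,v\in R(x)$). That is the definition under which both parts of the lemma are true and under which the lemma is applied in the proof of Lemma \ref{le:equalitylemma}, so your choice is the right one; if you wanted your write-up to slot into the paper you would also need to repair that definition.
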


\begin{proof}[{\bf Proof of Lemma \ref{le:rayslemma}}]
See Section $8.8$ of \cite{schrijver1998theory}.
\end{proof}

\begin{lemma}\label{le:equalitylemma}
For all $m \ge 2$, we have that
\begin{equation*}
\big\{ y \in ({\mathbb{R}_{\geq 0}})^{m} ~ | ~ \displaystyle\sum\limits_{i = 1,~ i \neq i_{0}}^{m} y_{i} \ge y_{i_{0}}, \forall i_{0} \in [m] \big\} = cone \{ z \in \{0,1\}^{m} ~ | ~ w(z)=2\}
\end{equation*}
\end{lemma}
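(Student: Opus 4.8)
I would identify both sides as the same polyhedral cone through its extreme rays, invoking Lemma~\ref{le:rayslemma}. Let $e_1,\dots,e_m$ be the standard basis of $\mathbb{R}^m$ and $\mathbf{1}=\sum_i e_i$, and write the left-hand side as $K=\{y\in\mathbb{R}^m:Ay\ge 0\}$, where $A$ is the $2m\times m$ matrix with rows $e_1^T,\dots,e_m^T$ (encoding $y_i\ge 0$) and $(\mathbf{1}-2e_1)^T,\dots,(\mathbf{1}-2e_m)^T$ (encoding $\sum_{i\ne i_0}y_i-y_{i_0}=\sum_i y_i-2y_{i_0}\ge 0$). Since the first $m$ rows of $A$ already form the identity, $\mathrm{rank}(A)=m$, so Lemma~\ref{le:rayslemma} applies.

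First I would dispose of the inclusion $\mathrm{cone}\{z\in\{0,1\}^m:w(z)=2\}\subseteq K$: any $z$ with exactly two ones, say in positions $i<j$, is nonnegative and satisfies every one of the inequalities (which read $1\ge 1$ for $i_0\in\{i,j\}$ and $2\ge 0$ otherwise), and $K$ is a convex cone, so it contains the conic hull of all such $z$. For the reverse inclusion, by part~\ref{le:second_part_rays} of Lemma~\ref{le:rayslemma} we have $K=\mathrm{cone}(R)$ with $R$ the union of the extreme rays of $K$, so it suffices to show that every extreme ray is generated by some $e_i+e_j$.

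So let $R(x)$ be an extreme ray with $x\ne 0$; then $x\in K$, hence $x\ge 0$, and $\abs{\supp(x)}\ge 2$ (if $x=te_a$ with $t>0$ the constraint for $i_0=a$ reads $-t\ge 0$). The rows of $A$ vanishing at $x$ are the $m-\abs{\supp(x)}$ rows $e_i^T$ with $i\notin\supp(x)$, together with the rows $(\mathbf{1}-2e_{i_0})^T$ for which $x_{i_0}=\tfrac{1}{2}\sum_i x_i$. If two such indices $i_0\ne i_1$ exist, then $x_{i_0}+x_{i_1}=\sum_i x_i$ forces $\supp(x)=\{i_0,i_1\}$ and $x_{i_0}=x_{i_1}>0$, so $x\in\mathbb{R}_{\ge 0}(e_{i_0}+e_{i_1})$ and we are done; otherwise at most one row of the second block vanishes at $x$, so at most $m-\abs{\supp(x)}+1$ rows of $A$ vanish at $x$ in all. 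But part~\ref{le:first_part_rays} of Lemma~\ref{le:rayslemma} supplies $m-1$ linearly independent rows of $A$ that all vanish at $x$, forcing $m-1\le m-\abs{\supp(x)}+1$, i.e. $\abs{\supp(x)}\le 2$, hence $=2$; writing $\supp(x)=\{i,j\}$, the constraints for $i_0=i$ and $i_0=j$ give $x_j\ge x_i$ and $x_i\ge x_j$, so $x_i=x_j>0$ and $R(x)=\mathbb{R}_{\ge 0}(e_i+e_j)$. Taking $\mathrm{cone}(\cdot)$ of the union of the extreme rays then yields $K\subseteq\mathrm{cone}\{z:w(z)=2\}$, and combined with the easy inclusion this proves the claim.

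The one real step is the support-size count above: showing an extreme ray cannot touch three or more coordinates. It hinges entirely on balancing the $m-1$ independent tight constraints guaranteed by Lemma~\ref{le:rayslemma}\ref{le:first_part_rays} against the fact that the inequalities $y_{i_0}\le\tfrac{1}{2}\sum_i y_i$ can be simultaneously tight for at most one index unless $x$ is already a multiple of some $e_i+e_j$. A more elementary but messier alternative avoids Lemma~\ref{le:rayslemma} entirely: induct on $\abs{\supp(y)}$, and at each step subtract from the two largest coordinates a maximal multiple of the corresponding $e_i+e_j$, which either zeroes out a coordinate or makes some inequality $y_k\le\tfrac{1}{2}\sum y$ tight; in the latter case one more subtraction (pairing the now-dominant coordinate with another) reduces the support, after which one verifies all the inequalities are preserved and applies the inductive hypothesis. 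I would present the extreme-ray argument as the cleaner of the two.
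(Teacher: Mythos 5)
Your proposal is correct and follows essentially the same route as the paper: both write the left-hand side as $\{y:Ay\ge 0\}$ with the rank-$m$ matrix of the $2m$ constraints, invoke Lemma~\ref{le:rayslemma} to reduce to classifying the extreme rays via $m-1$ tight linearly independent constraints, and then case-split on whether at least two of the constraints $\sum_{i\ne i_0}y_i\ge y_{i_0}$ are tight (forcing support $\{i_0,i_1\}$ directly) or at most one is (forcing, by counting, at least $m-2$ coordinates to vanish). The paper phrases the second case as "at least $m-2$ Type-(II) equations" rather than your support-size count, but the dichotomy and the conclusion $x\in\mathbb{R}_{\ge 0}(e_i+e_j)$ are identical.
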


\begin{proof}[{\bf Proof of Lemma \ref{le:equalitylemma}}]
Let $K_{m} = \big\{ y \in ({\mathbb{R}_{\geq 0}})^{m} ~ | ~ \displaystyle\sum\limits_{i = 1,~ i \neq i_{0}}^{m} y_{i} \ge y_{i_{0}}, \forall i_{0} \in [m] \big\}$ and $X_{m} = cone \{ z \in \{0,1\}^{m} ~ | ~ w(z)=2\}$. Clearly, $X_{m} \subseteq K_{m}$. We now prove that $K_{m} \subseteq X_{m}$. Note that $K_{m}$ can be written in the following form:
\begin{align*}
K_{m} &= \big\{ y \in {\mathbb{R}}^{m} ~ | ~ y_{i} \geq 0 ~ \forall i \in [m] \text{ and } \displaystyle\sum\limits_{i = 1,~ i \neq i_{0}}^{m} y_{i} \ge y_{i_{0}}, \forall i_{0} \in [m] \big\}\\ 
&=\{y \in \mathbb{R}^{m} ~ | ~ Ay \geq 0 \} \text{ where }A \in \mathbb{R}^{2m \times m} \text{ has rank }m
\end{align*}
By part \ref{le:second_part_rays} of Lemma \ref{le:rayslemma}, we then have:
$K_{m} = cone(R)$ where $R = \underset{\text{extreme rays }R(y)\text{ of }K_{m}} \bigcup R(y)$. Therefore, by part \ref{le:first_part_rays} of Lemma \ref{le:rayslemma}, it is sufficient to show that if $y \in {\mathbb{R}}^{m}$ satisfies any $(m-1)$ equations of $K_{m}$ with equality, then $y$ should be an element of $cone \{ z \in \{0,1\}^{m} ~ | ~ w(z)=2\}$. Note that we have two types of equations:
\begin{enumerate}
\item[(I)] $\displaystyle\sum\limits_{i = 1,~ i \neq i_{0}}^{m} y_{i} - y_{i_{0}} = 0$ for some $i_{0} \in [m]$.
\item[(II)] $y_{i} = 0$ for some $i \in [m]$.
\end{enumerate}
Consider any $(m-1)$ equations of $K_{m}$, satisfied with equality. We distinguish two cases:\\ 
Case 1: At least $(m-2)$ of those equations are of Type (II). Without loss of generality, we can assume that $y_{i} = 0$ for all $i \in \{3,\dots,m\}$. Moreover, since $y \in K_{m}$, we have that $y_{1} - y_{2} \geq 0$ and $y_{2} - y_{1} \geq 0$, which implies that $y_{1} = y_{2}$. Therefore, we conclude that $y = y_{1} (1 ~ 1 ~ 0 ~ \dots ~ 0)^{T} \in X_{m}$.\\ 
Case 2: At most $(m-3)$ equations are of Type (II). Hence, at least $2$ equations are of Type (I). Without loss of generality, we can assume that $\displaystyle\sum\limits_{i = 1,~ i \neq 1}^{m} y_{i} = y_{1}$ and $\displaystyle\sum\limits_{i = 1,~ i \neq 2}^{m} y_{i} = y_{2}$. Adding up the last $2$ equations, we get $\displaystyle\sum\limits_{i = 3}^{m} y_{i} = 0$. Since $y \in K_{m}$, we have $y_{i} \geq 0$ for all $i \in \{3,\dots,m \}$. Therefore, we get $y_{i} = 0$ for all $i \in \{3,\dots,m \}$. Similarily to Case 1 above, this implies that $y \in X_{m}$.
\end{proof}

\begin{proof}[{\bf Proof of Theorem \ref{le:existencedualwitness}}]
The ``fundamental polytope'' $P$ considered by the LP decoder was introduced by \cite{koetter2003graph} and is defined by $P = \underset{j \in C}\bigcap conv(C_{j})$ where $C_{j}= \{ z \in \{0,1\}^{n}: \text{ $w(z|_{N(j)})$ is even}\}$ for any $j \in C$. For any error pattern $\eta \in \{0,1\}^{n}$, let $\widetilde{\eta} \in \{-1,1\}^{n}$ be given by $\widetilde{\eta}_{i} = (-1)^{\eta_{i}}$ for all $i \in [n]$. Also, for any $x,y \in \mathbb{R}^n$, let their inner product be $\langle x , y \rangle = \displaystyle\sum\limits_{i = 1}^{n} x_{i} y_{i}$. Then, under the all zeros assumption, there is LP decoding success for $\eta$ on $\zeta$ if and only if the zero vector is the unique optimal solution to the LP (\ref{le:relaxed_LP}), i.e. if and only if $\langle \widetilde{\eta},0 \rangle < \langle \widetilde{\eta},y \rangle$ for every non-zero $y \in P$, which is equivalent to $\widetilde{\eta} \in int(P^{*}) = int(\mathcal{K}^{*})$ where $\mathcal{K}= cone\{P\}$ is the ``fundamental cone'' and for any $S \subseteq \mathbb{R}^n$, the dual $S^{*}$ of $S$ is given by $S^{*}=\{z \in \mathbb{R}^{n} ~ | ~ \langle z,x \rangle \geq 0 ~ \forall x \in S\}$. By Lemmas \ref{le:intermediatelemma} and \ref{le:equalitylemma}, we have:
\begin{align*}
\mathcal{K} & = cone\big(\underset{j \in C} \bigcap conv(C_{j}) \big) = \underset{j \in C} \bigcap cone(C_{j}) = \underset{j \in C}  \bigcap cone \{ z \in \{0,1\}^{n} ~ | ~ w(z|_{N(j)}) \text{ is even}\}\\ 
&= \underset{j \in C} \bigcap cone \{ z \in \{0,1\}^{n} ~ | ~ w(z|_{N(j)})=2\} = \underset{j \in C} \bigcap \big\{ y \in ({\mathbb{R}_{\geq 0}})^{n} ~ | ~ \displaystyle\sum\limits_{i \in N(j) \setminus \{i_{0}\}} y_{i} \ge y_{i_{0}}, \forall i_{0} \in N(j) \big\}\\ 
&= \big\{ y \in ({\mathbb{R}_{\geq 0}})^{n} ~ | ~ \langle y,v_{i_{0},j} \rangle \ge 0 ~ \forall i_{0} \in N(j), ~ \forall j \in C \big\}
\end{align*}
where $v_{i_{0},j} \in \{-1,0,1\}^{n}$ is defined as follows: For all $i \in [n]$, \[
\big(v_{i_{0},j}\big)_{i} = \begin{dcases*}
        0  & if $i \notin N(j)$.\\ 
        -1 & if $i=i_{0}$.\\ 
	1 & if $i \in N(j) \setminus \{i_{0}\}$.
        \end{dcases*}
\]
Thus,
\begin{equation*}
\mathcal{K} = ({\mathbb{R}_{\geq 0}})^{n} \bigcap \underset{j \in C} \bigcap \big(cone\{ v_{i_{0},j} | i_{0} \in N(j) \}\big)^{*}= ({\mathbb{R}_{\geq 0}})^{n} \bigcap \underset{j \in C} \bigcap \big(D_{j}\big)^{*}
\end{equation*}
where for any $j \in C$, $D_{j} = cone\{ v_{i_{0},j} | i_{0} \in N(j) \}$. Note that if $L \subseteq \mathbb{R}^{n}$ is a cone, then its dual $L^{*}$ is also a cone. We will use below the following basic properties of dual cones:
\begin{enumerate}
\item[i)]\label{le:dual_cone_first_prop} If $L_{1},L_{2} \subseteq \mathbb{R}^{n}$ are cones, then $(L_{1}+L_{2})^{*} = L_{1}^{*} \cap L_{2}^{*}$.
\item[ii)]\label{le:dual_cone_second_prop} If $L \subseteq \mathbb{R}^{n}$ is a cone, then $(L^{*})^{*} = cl(L)$.
\end{enumerate}
Therefore, there is LP decoding success for $\eta$ on $\mathcal{K}$ if and only if $\widetilde{\eta} \in D$ where:
\begin{align*}
D &= int(\mathcal{K}^{*}) = int\bigg(\Big(({\mathbb{R}_{\geq 0}})^{n} \bigcap \underset{j \in C} \bigcap D_{j}^{*}\Big)^{*}\bigg) = int\bigg(\Big(\big(({\mathbb{R}_{\geq 0}})^{n}\big)^{*} \bigcap \underset{j \in C} \bigcap D_{j}^{*}\Big)^{*}\bigg) = int\bigg(\Big(\big(({\mathbb{R}_{\geq 0}})^{n} + \displaystyle\sum\limits_{j \in C} D_{j}\big)^{*}\Big)^{*}\bigg)
\end{align*}
and where the third equality follows from the fact that $({\mathbb{R}_{\geq 0}})^{n}$ is a self-dual cone and the last equality follows from property (i) above. Note that for any $j \in C, ~ D_{j}$ is a cone. Moreover, since $({\mathbb{R}_{\geq 0}})^{n}$ is a cone and the sum of any two cones is also a cone, it follows that $({\mathbb{R}_{\geq 0}})^{n} + \displaystyle\sum\limits_{j \in C} D_{j}$ is also a cone. Furthermore, by property (ii) above, we get that $D=int\bigg(cl\Big(({\mathbb{R}_{\geq 0}})^{n} + \displaystyle\sum\limits_{j \in C} D_{j}\Big)\bigg)$. Being a cone, $({\mathbb{R}_{\geq 0}})^{n} + \displaystyle\sum\limits_{j \in C} D_{j}$ is a convex set. For any convex set $S \subseteq \mathbb{R}^n$, we have that $int(cl(S)) = int(S)$ (See Lemma $5.28$ of \cite{aliprantis2006infinite}). Therefore,
\begin{align*}
D &= int\big(({\mathbb{R}_{\geq 0}})^{n} + \displaystyle\sum\limits_{j \in C} D_{j}\big)\\ 
&= ({\mathbb{R}_{> 0}})^{n} + \displaystyle\sum\limits_{j \in C} D_{j} \text{ (using Lemma \ref{le:topologyneededclaim} and the fact that $\displaystyle\sum\limits_{j \in C} D_{j}$ is a convex subset of $\mathbb{R}^{n}$)}\\ 
&=\{z \in \mathbb{R}^{n} ~ | ~ \exists y \in \displaystyle\sum\limits_{j \in C} D_{j} \text{ s.t. } z > y \}\\ 
&=\big\{z \in \mathbb{R}^{n} ~ | ~ \exists \{ \lambda_{i_{0},j}\}_{i_{0} \in N(j), j \in C} \text{ s.t. } \lambda_{i_{0},j} \geq 0 ~ \forall i_{0} \in N(j), \forall j \in C \text{ and } \displaystyle\sum\limits_{i_{0} \in N(j), j \in C} \lambda_{i_{0},j}v_{i_{0},j} < z \big\}\\ 
&=\big\{ \displaystyle\sum\limits_{i_{0} \in N(j), j \in C} \lambda_{i_{0},j}v_{i_{0},j} + u ~ | ~ \lambda_{i_{0},j} \geq 0 ~ \forall i_{0} \in N(j), \forall j \in C \text{ and } u \in ({\mathbb{R}_{> 0}})^{n}\big\}
\end{align*}
Thus, there is LP decoding success for $\eta$ on $\zeta$ if and only if there exist $\lambda_{i_{0},j} \geq 0$ for all $i_{0} \in N(j)$ and all $j \in C$ s.t. $\displaystyle\sum\limits_{i_{0} \in N(j), j \in C} \lambda_{i_{0},j}v_{i_{0},j} < \widetilde{\eta}$. Let $w(i,j) = \big(\displaystyle\sum\limits_{i_{0} \in N(j)} \lambda_{i_{0},j}v_{i_{0},j}\big)_{i} \text{ for all } i \in [n] \text{ and all } j \in C$. Since $(v_{i_{0},j})_{i} = 0$ whenever $i \notin N(j)$, we have that for every $i \in [n]$:
\begin{equation*}
\displaystyle\sum\limits_{j \in N(i)} w(i,j) = \displaystyle\sum\limits_{j \in N(i)} \big(\displaystyle\sum\limits_{i_{0} \in N(j)} \lambda_{i_{0},j}v_{i_{0},j}\big)_{i} = \displaystyle\sum\limits_{j \in C} \big(\displaystyle\sum\limits_{i_{0} \in N(j)} \lambda_{i_{0},j}v_{i_{0},j}\big)_{i} = \big(\displaystyle\sum\limits_{i_{0} \in N(j), j \in C} \lambda_{i_{0},j}v_{i_{0},j}\big)_{i} < \widetilde{\eta}_{i}
\end{equation*}
Moreover, for all $j \in C, ~ i_{1}, i_{2} \in N(j) \text{ s.t. } i_{1} \neq i_{2}$, we have
\begin{equation*}
w(i_{1},j) + w(i_{2},j) = \displaystyle\sum\limits_{i_{0} \in N(j)} \lambda_{i_{0},j} \Big( \big(v_{i_{0},j}\big)_{i_{1}} + \big(v_{i_{0},j}\big)_{i_{2}} \Big) \geq 0
\end{equation*}
since $\big(v_{i_{0},j}\big)_{i_{1}} + \big(v_{i_{0},j}\big)_{i_{2}} \geq 0$ because $i_{1} \neq i_{2} \in N(j)$.
We conclude that LP decoding success for $\eta$ on $\zeta$ is equivalent to the existence of a dual witness for $\eta$ on $\zeta$.
\end{proof}

\subsection{Proof of Lemmas \ref{le:finalupperbound} and \ref{le:upperboundcalc2}}\label{le:proof_unified}
The goal of this section is prove the following theorem which is used in the proofs of Lemmas \ref{le:finalupperbound} and \ref{le:upperboundcalc2}.
\begin{theorem}\label{le:unified_theorem}
Let $\lambda, \beta,m$ be positive integers with $\beta > d_{c}-1$ and $m \geq \lambda$. Consider the optimization problem:
\begin{equation}\label{le:max_equation_unified} v^{*} = \max_{ (T_{0}, \dots , T_{h}) \in W_h \atop h \in \mathbb{N}, h \geq 1} { f(T_{0}, \dots , T_{h}) }\end{equation}
where: 
\begin{equation*} f(T_{0}, \dots , T_{h}) = \displaystyle\sum\limits_{i=0}^{h} \frac{T_{i}}{(d_{c}-1)^{i}} \end{equation*}
and $W_h$ is the set of all tuples $(T_{0}, \dots , T_{h}) \in \mathbb{N}^{h+1}$ satisfying the following three equations:
\begin{equation}\label{le:sum_equation_unified}
\displaystyle\sum\limits_{i=0}^{h} T_{i} = m
\end{equation}
\begin{equation}\label{le:base_equation_unified}
T_{0} \le \lambda
\end{equation}
\begin{equation}\label{le:inductive_equation_unified}
T_{i+1} \le \beta T_{i}\text{ for all } i \in \{0,\dots,h-1\}
\end{equation}
Then,
\begin{equation*}
v^{*} \le \lambda \frac{\big(\frac{\beta}{d_{c}-1}\big)^2}{\frac{\beta}{d_{c}-1}-1} m^{\frac{\ln{\beta}-\ln(d_{c}-1)}{\ln{\beta}}}
\end{equation*}
\end{theorem}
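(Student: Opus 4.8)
The plan is to reduce the problem to an explicit ``greedy/saturating'' configuration and then optimize a one‑parameter family of elementary upper bounds. Write $\rho = \beta/(d_c-1)$; since $\beta$ and $d_c$ are integers with $\beta > d_c-1$ we in fact have $\beta\ge d_c$, hence $\rho>1$ and, what I will need at the very end, $\rho\ge \beta/(\beta-1)$. It is also convenient to record the identities $\beta^{\alpha}=\rho$ and $\beta^{1-\alpha}=d_c-1$, where $\alpha=\frac{\ln\beta-\ln(d_c-1)}{\ln\beta}$ is exactly the exponent in the statement; consequently $\rho^{k}=(\beta^{k})^{\alpha}$ and $(d_c-1)^{k}=(\beta^{k})^{1-\alpha}$ for every $k$.

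First I would record the pointwise bound $T_i\le \lambda\beta^{i}$, valid for every feasible tuple, which is immediate by induction from (\ref{le:base_equation_unified}) and (\ref{le:inductive_equation_unified}). Combining this with the budget constraint (\ref{le:sum_equation_unified}) and splitting the sum defining $f$ at an arbitrary index $k\ge 0$ gives
\begin{equation*}
f(T_{0},\dots,T_{h}) \;\le\; \lambda\sum_{i=0}^{k-1}\rho^{i} \;+\; \frac{1}{(d_c-1)^{k}}\sum_{i\ge k}T_{i} \;\le\; \frac{\lambda\rho^{k}}{\rho-1} \;+\; \frac{m}{(d_c-1)^{k}} .
\end{equation*}
Thus the theorem reduces to choosing $k$ well. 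I would take $k$ to be the largest integer with $\lambda\frac{\beta^{k}-1}{\beta-1}\le m$ (the index up to which the greedy solution $T_i=\lambda\beta^{i}$ still fits the budget; since $m\ge\lambda$ one checks $k\ge 1$). Maximality of $k$ yields the two estimates $\beta^{k}\le m\beta/\lambda$ and $\beta^{k}>m(\beta-1)/(\lambda\beta)$.

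Plugging these into the displayed bound and using the identities: the first term is $\frac{\lambda}{\rho-1}(\beta^{k})^{\alpha}\le\frac{\lambda}{\rho-1}(m\beta/\lambda)^{\alpha}=\frac{\lambda^{1-\alpha}\beta^{\alpha}}{\rho-1}m^{\alpha}\le \frac{\lambda\rho}{\rho-1}m^{\alpha}$ (using $\beta^{\alpha}=\rho$ and $\lambda^{1-\alpha}\le\lambda$ since $\lambda\ge1$), and the second term is $m\,(\beta^{k})^{-(1-\alpha)}< m\big(\tfrac{\lambda\beta}{m(\beta-1)}\big)^{1-\alpha}=\big(\tfrac{\lambda\beta}{\beta-1}\big)^{1-\alpha}m^{\alpha}\le \tfrac{\lambda\beta}{\beta-1}m^{\alpha}$. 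Adding, $f\le \lambda m^{\alpha}\big(\tfrac{\rho}{\rho-1}+\tfrac{\beta}{\beta-1}\big)$, and the elementary inequality $\tfrac{\beta}{\beta-1}\le\rho$ (equivalent to $\beta\ge d_c$) collapses the bracket to $\tfrac{\rho}{\rho-1}+\rho=\tfrac{\rho^{2}}{\rho-1}$, which is precisely the claimed constant $\frac{(\beta/(d_c-1))^{2}}{\beta/(d_c-1)-1}$.

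The routine parts are the induction $T_i\le\lambda\beta^{i}$ and the exponent bookkeeping with $\alpha$. The only place that genuinely needs care — the main obstacle — is making the constants fit: the splitting index $k$ must be chosen so that \emph{both} tail terms come out as $O(\lambda m^{\alpha})$ with compatible constants, which forces the specific choice of $k$ above and, crucially, uses the integrality observation $\beta\ge d_c$ (a weaker hypothesis such as real $\beta>d_c-1$ would not let one absorb $\tfrac{\beta}{\beta-1}$ into $\rho$). I would also quickly check the corner cases $k=1$ and $m=\lambda$, where $T_{k}$ may vanish, but the displayed bound is monotone enough in those regimes that they cause no difficulty.
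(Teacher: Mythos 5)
Your proposal is correct. The exponent bookkeeping ($\beta^{\alpha}=\rho$, $\beta^{1-\alpha}=d_c-1$), the two estimates $\beta^{k}\le m\beta/\lambda$ and $\beta^{k}>m(\beta-1)/(\lambda\beta)$ from the choice of $k$, and the final absorption $\tfrac{\rho}{\rho-1}+\tfrac{\beta}{\beta-1}\le\tfrac{\rho^2}{\rho-1}$ (using $\beta\ge d_c$ from integrality) all check out, and the split bound $f\le \lambda\sum_{i<k}\rho^i+(d_c-1)^{-k}\sum_{i\ge k}T_i$ is valid for every feasible tuple regardless of whether $h\ge k$.

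The route differs from the paper's in a useful way. The paper first exhibits the greedy tuple $T_i'=\lambda\beta^i$ for $i\le l$ with the leftover budget placed at index $l+1$, proves it is feasible (its Lemma \ref{le:firstinW}), and then proves via a two-case exchange argument that it is the \emph{unique} maximizer up to leading zeros (its Lemma \ref{le:uniquesolutionRHS}); only then does it bound the value of that specific tuple by summing the geometric series $\lambda\sum_{i\le l+1}\nu^i$. You bypass the optimality argument entirely: the only structural input you need is the pointwise bound $T_i\le\lambda\beta^i$ (the paper's Lemma \ref{le:upperinductivebound}), after which splitting the objective at the greedy saturation index $k$ and charging the tail to the budget constraint $\sum_i T_i=m$ gives the same constant $\lambda\rho^2/(\rho-1)$ for \emph{every} feasible tuple at once. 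Your argument is shorter and more robust (it would survive relaxations of the feasible set that destroy uniqueness of the maximizer), at the cost of a slightly lossier intermediate two-term bound; the paper's argument yields the extra information that the greedy configuration is the exact optimizer, which is not needed for the theorem but explains where the bound comes from. Your observation that integrality of $\beta$ is genuinely used (to get $\beta\ge d_c$ and hence $\beta/(\beta-1)\le\rho$) is accurate; the paper's proof uses $m\ge\lambda$ and $\lambda\ge 1$ at the analogous step.
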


\noindent We will first prove some lemmas which will lead to Lemma \ref{le:unified_theorem}.
\begin{definition}\label{le:firstbeta}
Let $l = \lfloor \log_{\beta}(\frac{m(\beta-1)}{\lambda}+1)\rfloor -1$.
\end{definition}
\noindent Note that $l \geq 0$ since $m \geq \lambda$.
\begin{lemma}\label{le:upperinductivebound}
Let \ensuremath{(T_{0}, \dots , T_{h}) \in W_h}. Then, $T_{i} \le \lambda \beta^{i}$ for all $i \in \{0,\dots,h\}$.
\end{lemma}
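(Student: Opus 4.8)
The plan is to prove Lemma~\ref{le:upperinductivebound} by a direct induction on the index $i \in \{0,\dots,h\}$, using only the two structural constraints (\ref{le:base_equation_unified}) and (\ref{le:inductive_equation_unified}) in the definition of $W_h$. Note that the cardinality constraint (\ref{le:sum_equation_unified}) is not needed for this bound; it will only enter later, when this lemma is combined with the objective $f$ to bound $v^{*}$.

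First I would handle the base case $i=0$: equation (\ref{le:base_equation_unified}) gives $T_{0} \le \lambda = \lambda\beta^{0}$ immediately. For the inductive step, suppose $T_{i} \le \lambda\beta^{i}$ for some $i \in \{0,\dots,h-1\}$. Since $\beta$ is a positive integer (a hypothesis of Theorem~\ref{le:unified_theorem}), multiplication by $\beta$ preserves the inequality, so combining with (\ref{le:inductive_equation_unified}) yields
\[
T_{i+1} \le \beta\, T_{i} \le \beta \cdot \lambda\beta^{i} = \lambda\beta^{i+1},
\]
which closes the induction and establishes $T_{i} \le \lambda\beta^{i}$ for every $i \in \{0,\dots,h\}$.

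There is essentially no obstacle here: the statement is a one-line consequence of the geometric recursion $T_{i+1} \le \beta T_{i}$ together with the initial bound $T_{0} \le \lambda$. The only point requiring (minimal) care is that $\beta$ must be nonnegative for the inductive inequality to be preserved under scaling, which is guaranteed since $\beta$ is assumed to be a positive integer.
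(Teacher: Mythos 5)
Your proof is correct and is essentially the paper's own argument: the paper simply notes that the bound ``follows from equations (\ref{le:base_equation_unified}) and (\ref{le:inductive_equation_unified}),'' i.e.\ exactly the induction you spell out. No issues.
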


\begin{proof}[\bf{Proof of Lemma \ref{le:upperinductivebound}}]
Follows from equations (\ref{le:base_equation_unified}) and (\ref{le:inductive_equation_unified}).
\comments{
By induction on $i$:\\ 
Base case: $i=1$:
$T_{1} \le (d_{c}-1)$ by the definition of $W$.\\ 
Inductive step: We need to show that if the proposition holds for some $1 \le i \le {h-1}$, then it holds for $i+1$.\\ 
Assume that $T_{i} \le (d_{c}-1)\beta^{i-1}$. Then:
\begin{align*} T_{i+1} &\le (d_{c}-1)(d_{v}-1)T_{i} \text{ (by the definition of $W$)}\\ 
&\le (d_{c}-1)(d_{v}-1) (d_{c}-1)\beta^{i-1} \text{ (by the induction assumption)}\\ 
&= (d_{c}-1)\beta^{i}
\end{align*}
}
\end{proof}

\begin{lemma}\label{le:firstinW}
Let
\begin{align*}
&T_{i}'=\lambda \beta^{i} \text{ for all } i \in \{0,\dots,l\}\\ 
&T_{l+1}'= m - \lambda\frac{(\beta^{l+1} -1)}{(\beta - 1)}
\end{align*}
Then, $(T_{0}', \dots , T_{l+1}') \in W_{l+1}$.
\end{lemma}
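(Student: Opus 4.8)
The plan is to verify directly that the tuple $(T_0',\dots,T_{l+1}')$ lies in $W_{l+1}$, i.e.\ that it satisfies the three defining conditions (\ref{le:sum_equation_unified}), (\ref{le:base_equation_unified}), (\ref{le:inductive_equation_unified}), and in addition that each $T_i'$ is a non-negative integer so that the tuple really belongs to $\mathbb{N}^{l+2}$. First note that $W_{l+1}$ makes sense at all: since $m\ge\lambda$ we have $\log_\beta(\tfrac{m(\beta-1)}{\lambda}+1)\ge 1$, so $l\ge 0$ and $l+1\ge 1$. Everything else is bookkeeping with a finite geometric series together with the two inequalities encoded by the floor in Definition \ref{le:firstbeta}.

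For condition (\ref{le:sum_equation_unified}): summing the geometric series gives $\sum_{i=0}^{l}T_i'=\lambda\sum_{i=0}^{l}\beta^i=\lambda\frac{\beta^{l+1}-1}{\beta-1}$, so adding $T_{l+1}'$ yields exactly $m$. In particular $T_{l+1}'=m-\lambda\sum_{i=0}^{l}\beta^i$ is an integer (all the other $T_i'=\lambda\beta^i$ obviously are), and to see $T_{l+1}'\ge 0$ I would use the ``lower'' bound from the floor: $l+1=\lfloor\log_\beta(\tfrac{m(\beta-1)}{\lambda}+1)\rfloor\le\log_\beta(\tfrac{m(\beta-1)}{\lambda}+1)$, hence $\beta^{l+1}-1\le\tfrac{m(\beta-1)}{\lambda}$, i.e.\ $\lambda\frac{\beta^{l+1}-1}{\beta-1}\le m$. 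Condition (\ref{le:base_equation_unified}) is immediate: $T_0'=\lambda\beta^0=\lambda\le\lambda$.

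For condition (\ref{le:inductive_equation_unified}): when $0\le i\le l-1$ we have $T_{i+1}'=\lambda\beta^{i+1}=\beta\cdot\lambda\beta^{i}=\beta T_i'$, so the inequality holds with equality. The only nontrivial instance is $i=l$, where I must check $T_{l+1}'\le\beta T_l'=\lambda\beta^{l+1}$. The key (purely arithmetic) identity is
\[
\lambda\beta^{l+1}+\lambda\frac{\beta^{l+1}-1}{\beta-1}
=\lambda\frac{\beta^{l+1}(\beta-1)+\beta^{l+1}-1}{\beta-1}
=\lambda\frac{\beta^{l+2}-1}{\beta-1},
\]
so the required inequality $m-\lambda\frac{\beta^{l+1}-1}{\beta-1}\le\lambda\beta^{l+1}$ is equivalent to $m\le\lambda\frac{\beta^{l+2}-1}{\beta-1}$. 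This last bound follows from the ``upper'' side of the floor: $l+1=\lfloor\log_\beta(\tfrac{m(\beta-1)}{\lambda}+1)\rfloor>\log_\beta(\tfrac{m(\beta-1)}{\lambda}+1)-1$, so $l+2>\log_\beta(\tfrac{m(\beta-1)}{\lambda}+1)$, hence $\beta^{l+2}>\tfrac{m(\beta-1)}{\lambda}+1$, i.e.\ $\lambda\frac{\beta^{l+2}-1}{\beta-1}>m$.

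This completes the verification, so $(T_0',\dots,T_{l+1}')\in W_{l+1}$. There is no real obstacle here: the entire content is the geometric-sum computation plus unwinding the two inequalities packaged in the definition of $l$. The one place that deserves a moment's care is matching the index-$l$ instance of the recursive constraint to the correct side of the floor, which is precisely the role of the identity $\lambda\beta^{l+1}+\lambda\frac{\beta^{l+1}-1}{\beta-1}=\lambda\frac{\beta^{l+2}-1}{\beta-1}$ displayed above.
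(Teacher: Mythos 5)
Your proof is correct and follows essentially the same route as the paper: the geometric-series computation for the sum constraint, the lower side of the floor for $T_{l+1}'\ge 0$, and the upper side of the floor for the index-$l$ instance of the recursive constraint. The only difference is cosmetic — the paper establishes $T_{l+1}'\le\beta T_l'$ by contradiction, whereas you do it directly, but both rest on the same inequality $m<\lambda\frac{\beta^{l+2}-1}{\beta-1}$.
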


\begin{proof}[\bf{Proof of Lemma \ref{le:firstinW}}]
First, note that \begin{math} (T_{0}', \dots , T_{l+1}') \in \mathbb{N}^{l+2} \end{math} since $T_{l+1}' \geq 0$ by Definition \ref{le:firstbeta}. Moreover,
\begin{align*} \displaystyle\sum\limits_{i=0}^{l+1} T_{i}' = \displaystyle\sum\limits_{i=0}^l \lambda \beta^{i} + T_{l+1}'
= \lambda \frac{(\beta^{l+1} -1)}{(\beta - 1)} + T_{l+1}'
= m
\end{align*}
We have that $T_{0}' \le \lambda$ and for every $i \in \{0,\dots,l-1\}$, $T_{i+1}' \le \beta T_{i}'$. We still need to show that $T_{l+1}' \le \beta T_{l}'$. We proceed by contradiction. Assume that $T_{l+1}' > \beta T_{l}'$. Then, $T_{l+1}' > \lambda \beta^{l+1}$. Thus,
\begin{equation*}
m = \displaystyle\sum\limits_{i=0}^{l+1} T_{i}' > \displaystyle\sum\limits_{i=0}^{l+1} \lambda \beta^{i} = \lambda \frac{(\beta^{l+2} -1)}{(\beta - 1)} > \lambda \frac{( \frac{m(\beta-1)}{\lambda}+1) -1}{(\beta - 1)} = m
\end{equation*}
since $l+2 = \lfloor \log_{\beta}(\frac{m(\beta-1)}{\lambda}+1) \rfloor + 1 > \log_{\beta}(\frac{m(\beta-1)}{\lambda}+1)$.
\end{proof}

\begin{lemma}\label{le:uniquesolutionRHS}
\ensuremath{(T_{0}', \dots , T_{l+1}')} is the unique (up to leading zeros) element that achieves the maximum in Equation (\ref{le:max_equation_unified}).
\end{lemma}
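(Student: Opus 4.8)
The plan is to prove this by a prefix‑sum domination argument. Write $a_i = (d_{c}-1)^{-i}$, so that $f(T_0,\dots,T_h) = \sum_{i=0}^{h} a_i T_i$, and record that, since $d_{c}>2$, the sequence $(a_i)$ is strictly decreasing, with $a_i - a_{i+1} = \frac{d_{c}-2}{(d_{c}-1)^{i+1}} > 0$. The first (routine) observation is that appending a zero coordinate to a feasible tuple keeps it feasible — the new constraint $T_{h+1}\le \beta T_h$ holds trivially — and leaves $f$ unchanged; so we may assume every competing maximizer $(T_0,\dots,T_h)$ and the candidate $(T_0',\dots,T_{l+1}')$ have been padded to a common length $h \ge l+1$. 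Writing $S_k(T) := \sum_{i=0}^{k} T_i$, we then have $S_h(T) = S_h(T') = m$.

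The key step is the claim that $S_k(T) \le S_k(T')$ for every $k$. By Lemma \ref{le:upperinductivebound}, any feasible $T$ satisfies $T_i \le \lambda\beta^i$; hence for $0 \le k \le l$ we get $S_k(T) \le \sum_{i=0}^{k}\lambda\beta^i = \lambda\frac{\beta^{k+1}-1}{\beta-1} = S_k(T')$ (the last equality because $T_i' = \lambda\beta^i$ for $i\le l$), while for $k \ge l+1$ we have $S_k(T) \le m = S_k(T')$, using that $S_k(T')$ equals $m$ past index $l+1$ by the definition of $T_{l+1}'$ in Lemma \ref{le:firstinW}. Now apply Abel summation: $f(T) = \sum_{i=0}^{h-1}(a_i-a_{i+1})S_i(T) + a_h S_h(T) = \sum_{i=0}^{h-1}(a_i-a_{i+1})S_i(T) + m\,a_h$, and likewise for $T'$; subtracting gives $f(T') - f(T) = \sum_{i=0}^{h-1}(a_i-a_{i+1})\,(S_i(T')-S_i(T)) \ge 0$, since every factor is nonnegative. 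This shows $T'$ attains the maximum (feasibility of $T'$ being Lemma \ref{le:firstinW}).

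For uniqueness, suppose $f(T') = f(T)$. Then $(a_i-a_{i+1})(S_i(T')-S_i(T)) = 0$ for each $i \in \{0,\dots,h-1\}$; since $a_i-a_{i+1}>0$ always, this forces $S_i(T) = S_i(T')$ for all $i \le h-1$, and together with $S_h(T)=S_h(T')=m$ we get $S_i(T)=S_i(T')$ for all $i$, hence $T_i = S_i(T)-S_{i-1}(T) = S_i(T')-S_{i-1}(T') = T_i'$ for all $i$. Thus any maximizer agrees with $(T_0',\dots,T_{l+1}')$ after discarding its (necessarily zero) coordinates of index exceeding $l+1$, which is exactly uniqueness up to leading zeros.

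I do not expect a serious obstacle. A naive coordinate‑swap exchange argument — shift mass from a high index down to a low one — is delicate, because lowering a coordinate $T_j$ can break the downstream constraint $T_{j+1}\le\beta T_j$; the reformulation in terms of prefix sums sidesteps this entirely, and the inequality $S_k(T)\le S_k(T')$ is immediate from Lemma \ref{le:upperinductivebound}, so it is really the only non‑mechanical idea. The remaining care is just the bookkeeping for tuples of different lengths and the degenerate case $m = \lambda\frac{\beta^{l+1}-1}{\beta-1}$, in which $T_{l+1}'=0$ — precisely what the ``up to leading zeros'' clause absorbs.
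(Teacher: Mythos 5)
Your proof is correct, and it takes a genuinely different (and cleaner) route than the paper's. The paper proves the lemma by a two-case comparison against $(T_0',\dots,T_{l+1}')$: it splits the sum defining $f$ at indices $l$ and $l+1$, pulls the low-index block down to the common factor $(d_c-1)^{-l}$ and the high-index block down to $(d_c-1)^{-(l+1)}$, and then uses the budget constraint $\sum_i T_i = m$ to show the resulting deficit term $-(d_c-2)\,(\,\cdot\,)/(d_c-1)^{l+1}$ is strictly negative in each case. Your argument replaces this with a full summation by parts: the prefix-sum domination $S_k(T)\le S_k(T')$ (which for $k\le l$ is exactly Lemma \ref{le:upperinductivebound} and for $k\ge l+1$ is just $S_k(T)\le m$) combined with the strict decrease of the weights $a_i=(d_c-1)^{-i}$ gives both optimality and, via the equality case, uniqueness in one stroke. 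The two proofs rest on the same key fact (Lemma \ref{le:upperinductivebound}), but yours avoids the case split entirely and makes the ``greedy prefix majorization'' structure explicit; the paper's version is essentially a two-block instance of the same Abel-summation idea carried out by hand. Your handling of padding, of the degenerate case $T_{l+1}'=0$, and of the recovery $T_i=S_i-S_{i-1}$ in the uniqueness step are all in order.
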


\begin{proof}[\bf{Proof of Lemma \ref{le:uniquesolutionRHS}}]
By Lemma \ref{le:firstinW}, \ensuremath{(T_{0}', \dots , T_{l+1}') \in W_{l+1}}. Let \ensuremath{(T_{0}, \dots , T_{h})\in W_{h}} such that $(T_{0}, \dots , T_{h})$ and $(T_{0}', \dots , T_{h}')$ are not equal up to leading zeros and without loss of generality assume that $h \geq l+1$ by extending $T$ with zeros if needed. In order to show that \ensuremath{ f(T_{0}, \dots , T_{h}) < f(T_{0}', \dots , T_{h}') }, we distinguish two cases:\\ 
Case 1: $(T_{0}, \dots , T_{l}) \neq (T_{0}', \dots , T_{l}')$. By Lemma \ref{le:upperinductivebound}, there exists \ensuremath{k_{1} \in \{0,\dots,l\} \text{ such that } T_{k_{1}} < \lambda \beta^{k_{1}}}. Therefore, \ensuremath{ \displaystyle\sum\limits_{i=0}^l T_{i}' - \displaystyle\sum\limits_{i=0}^l T_{i} > 0}. Note that:
\begin{align*}
f(T_{0}, \dots , T_{h}) - f(T_{0}', \dots , T_{l+1}') &= \displaystyle\sum\limits_{i=0}^l \frac{T_{i} - T_{i}'}{(d_{c}-1)^i} + \frac{T_{l+1} - T_{l+1}'}{(d_{c}-1)^{l+1}} + \displaystyle\sum\limits_{i=l+2}^h \frac{T_{i}}{(d_{c}-1)^i}\\ 
&\le \frac{1}{(d_{c}-1)^l} \displaystyle\sum\limits_{i=0}^l (T_{i} - T_{i}') + \frac{T_{l+1} - T_{l+1}'}{(d_{c}-1)^{l+1}} + \frac{1}{(d_{c}-1)^{l+1}}\displaystyle\sum\limits_{i=l+2}^h T_{i}\\ 
&= \frac{1}{(d_{c}-1)^l} \displaystyle\sum\limits_{i=0}^l (T_{i} - T_{i}') + \frac{1}{(d_{c}-1)^{l+1}}(\displaystyle\sum\limits_{i=l+1}^h T_{i}-T_{l+1}')\\ 
&= \frac{1}{(d_{c}-1)^l} \displaystyle\sum\limits_{i=0}^l (T_{i} - T_{i}') + \frac{1}{(d_{c}-1)^{l+1}}\displaystyle\sum\limits_{i=0}^l (T_{i}'-T_{i})
\end{align*}
Consequently, \begin{align*} f(T_{0}, \dots , T_{h}) &\le f(T_{0}', \dots , T_{l+1}') - \frac{( \displaystyle\sum\limits_{i=0}^l T_{i}' - \displaystyle\sum\limits_{i=0}^l T_{i} )}{(d_{c}-1)^l} + \frac{( \displaystyle\sum\limits_{i=0}^l T_{i}' - \displaystyle\sum\limits_{i=0}^l T_{i} )}{(d_{c}-1)^{l+1}}\\ 
&= f(T_{0}', \dots , T_{l+1}') - (d_{c}-2)\frac{( \displaystyle\sum\limits_{i=0}^l T_{i}' - \displaystyle\sum\limits_{i=0}^l T_{i} )}{(d_{c}-1)^{l+1}}\\ 
&< f(T_{0}', \dots , T_{l+1}') \end{align*}\\ 
Case 2: $(T_{0}, \dots , T_{l}) = (T_{0}', \dots , T_{l}')$. Then, $T_{l+1} \neq T_{l+1}'$. Since $T_{l+1}' = \displaystyle\sum\limits_{i=l+1}^{h} T_{i}$, we should have $T_{l+1}' - T_{l+1} > 0$. We have that
\begin{align*}
f(T_{0}, \dots , T_{h}) - f(T_{0}', \dots , T_{l+1}') &= \frac{T_{l+1}-T_{l+1}'}{(d_{c}-1)^{l+1}} + \displaystyle\sum\limits_{i=l+2}^{h} \frac{T_{i}}{(d_{c}-1)^i}\\ 
&\le \frac{T_{l+1}-T_{l+1}'}{(d_{c}-1)^{l+1}} + \frac{1}{(d_{c}-1)^{l+2}}\displaystyle\sum\limits_{i=l+2}^{h} T_{i}\\ 
&= \frac{T_{l+1}-T_{l+1}'}{(d_{c}-1)^{l+1}} + \frac{1}{(d_{c}-1)^{l+2}}\displaystyle\sum\limits_{i=0}^{l+1} (T_{i}'-T_{i})\\ 
&\le \frac{T_{l+1}-T_{l+1}'}{(d_{c}-1)^{l+1}} + \frac{(T_{l+1}'-T_{l+1})}{(d_{c}-1)^{l+2}}
\end{align*}
Consequently,
\begin{align*} f(T_{0}, \dots , T_{h}) &\le f(T_{0}', \dots , T_{l+1}') - \frac{( T_{l+1}' - T_{l+1} )}{(d_{c}-1)^{l+1}} + \frac{( T_{l+1}' - T_{l+1} )}{(d_{c}-1)^{l+2}}\\ 
&= f(T_{0}', \dots , T_{l+1}') - (d_{c}-2)\frac{( T_{l+1}' - T_{l+1} )}{(d_{c}-1)^{l+2}}\\ 
&< f(T_{0}', \dots , T_{l+1}') \end{align*}
\end{proof}

\begin{proof}[\bf{Proof of Lemma \ref{le:unified_theorem}}]
Let $\nu = \beta/(d_{c}-1)$. By Lemmas \ref{le:uniquesolutionRHS} and \ref{le:upperinductivebound}, we have that
\begin{align*} v^{*} \le \displaystyle\sum\limits_{i=0}^{l+1} \frac{T_{i}'}{(d_{c}-1)^i} &\le \displaystyle\sum\limits_{i=0}^{l+1} \lambda \frac{\beta^{i}}{(d_{c}-1)^i} = \lambda \displaystyle\sum\limits_{i=0}^{l+1} \nu^{i} = \lambda \frac{\nu^{l+2}-1}{\nu-1} < \lambda \frac{\nu^{l+2}}{\nu-1}\\ 
&\le \lambda \frac{\nu^{\log_{\beta}(\frac{m(\beta-1)}{\lambda}+1)+1}}{\nu-1} \le \lambda \frac{\nu^{2}}{\nu-1}\nu^{\log_{\beta}{m}}\\ 
&\le \lambda \frac{\nu^{2}}{\nu-1} m^{\frac{\ln{\nu}}{\ln{\beta}}} \end{align*}
\end{proof}

\subsection{Proof of Theorem \ref{le:asymptotictightness}}\label{le:app_asymptotic_tightness}
The goal of this section is to prove Theorem \ref{le:asymptotictightness} which is restated below.

\begin{reptheorem}{le:asymptotictightness} (Asymptotic tightness of Theorem \ref{le:maxweight} for $(d_{v},d_{c})$-regular LDPC codes)\\ 
There exists an infinite family of $(d_{v},d_{c})$-regular Tanner graphs $\{(V_{n},C_{n},E_{n})\}_n$, an infinite family of error patterns $\{\gamma_{n}\}_n$ and a positive constant $c$ s.t. there exists a hyperflow for $\gamma_{n}$ on $(V_{n},C_{n},E_{n})$ and any WDAG $(V_{n},C_{n},E_{n},w,\gamma_{n})$ corresponding to a hyperflow for $\gamma_{n}$ on $(V_{n},C_{n},E_{n})$ must have
\begin{equation*}
\underset{e \in E_{n}}{\operatorname{max}}{ |w(e)| } \geq c n^{\frac{\ln(d_{v}-1)}{\ln(d_{v}-1)+\ln(d_{c}-1)}}
\end{equation*}
\end{reptheorem}

We now prove some lemmas that lead to the proof of Theorem \ref{le:asymptotictightness}.
\begin{definition}\label{le:tight_construction_def} (Construction of $\{(V_{n},C_{n},E_{n})\}_n$)\\ 
Let $\beta = (d_{v}-1)(d_{c}-1)$. The Tanner graph $\{(V_{n},C_{n},E_{n})\}_n$ is constructed by connecting copies of the following two basic blocks:
\begin{enumerate}
\item The ``$A$ block'' $A_x$ with parameter the non-negative integer $x$. $A_x$ is an undirected complete tree rooted at a $(d_{v}-1)$-regular variable node. The internal nodes of $A_x$ other than the root are either $d_{c}$-regular check nodes or $d_{v}$-regular variable nodes. The leaves of $A_{x}$ are all $1$-regular variable nodes of depth $x$.\footnote{The depth of a variable node $v$ is the number of check nodes on the unique path from the root to $v$.} Thus, $A_x$ has $\beta^x$ leaves. An example $A$ block is given in Figure \ref{le:A_block_figure}.
\item The ``$B$ block'' $B_y$ with parameter the non-negative integer $y$. $B_y$ is an undirected tree rooted at a $(d_{v}-1)$-regular variable node. The internal nodes of $B_y$ other than the root are either $d_v$-regular variable nodes or $2$-regular check nodes. The leaves of $B_y$ are $1$-regular variable nodes. The nodes of $B_y$ are divided into $y+1$ layers indexed from $y$ to $0$. Layer $y$ consists of the root and the $(d_{v}-1)$ check nodes that are connected to the root. Each check node in layer $i$ is connected to a single variable node in layer $i-1$ for all $i=y,y-1,\dots,1$. Each variable node in layer $i$ is connected to $d_{v}-1$ check nodes in the same layer for all $i=y,y-1,\dots,1$. Thus, layer $0$ consists of $(d_{v}-1)^y$ leaves which are all $1$-regular variable nodes. An example $B$ block is given in Figure \ref{le:B_block_figure}.
\end{enumerate}
Let $\gamma = \frac{\ln(d_{v}-1)}{\ln(d_{v}-1)+\ln(d_{c}-1)}$. For every non-negative integer $n$, let $y_n = \lfloor \log_{(d_{v}-1)}{n^{\gamma}}\rfloor $ and $b_n = (d_{v}-1)^{y_n} = \Theta(n^{\gamma})$. The Tanner graph $\{(V_{n},C_{n},E_{n})\}_n$ is constructed using a root check node, one $B$ block, many $A$ blocks and some auxiliary variable and check nodes as follows:
\begin{enumerate}
\item Start with a check node $c_{0}$.
\item Connect $c_{0}$ to the roots of $d_{c}-1$ $A_{y_{n}+1}$ blocks and to the root of one $B_{y_n}$ block. Note that $B_{y_n}$ has $b_n$ leaves.
\item For every $i=y_{n},y_{n-1},\dots,1$, connect each check node in layer $i$ of $B_{y_n}$ to the roots of $(d_{c}-2)$ $A_{i}$ blocks. Note that there are $(d_v-1)^{y_n-i+1}$ check nodes in layer $i$.
\item\label{le:connecting_leaves} Let $T_n$ be the tree constructed so far and $l_n$ be its number of leaves. Note that all the leaves of $T_n$ are $1$-regular variable nodes. Complete $T_n$ into a $(d_v,d_c)$-regular graph by adding $O(l_n)$ $d_c$-regular new check nodes and (if needed) $O(l_n)$ $d_{v}$-regular new variable nodes in such a way that each new check is either connected to zero or to at least two leaves of the $B$ block.\footnote{Note that if $(d_v-1)l_n$ is divisible by $d_c$, we don't need any extra variable nodes. In the worst case, we can add $d_c$ copies of $T_n$ so that $(d_v-1) d_{c} l_{n}$ is divisible by $d_c$.}
\end{enumerate}
We call the check and variable nodes added in step \ref{le:connecting_leaves} the ``connecting'' check and variable nodes respectively.
\end{definition}

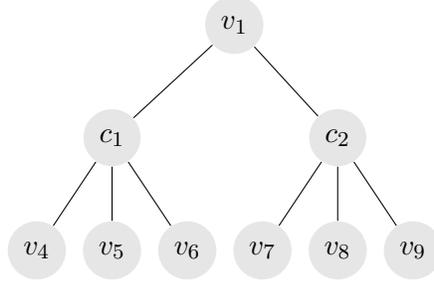
\begin{figure}[!h]

\centering

\begin{tikzpicture}
  [scale=.25,auto=left,every node/.style={circle,fill=gray!20}]
  \node (n2) at (11.5,16)  {$v_{1}$};
  \node (n5) at (5,10) {$c_{1}$};
  \node (n6) at (17,10)  {$c_{2}$};
  \node (n11) at (1,4)  {$v_{4}$};
  \node (n12) at (5,4)  {$v_{5}$};
  \node (n13) at (9,4)  {$v_{6}$};
  \node (n14) at (13,4)  {$v_{7}$};
  \node (n15) at (17,4)  {$v_{8}$};
  \node (n16) at (21,4)  {$v_{9}$};

  \foreach \from/\to in {n5/n2,n6/n2,n11/n5,n12/n5,n13/n5,n14/n6,n15/n6,n16/n6}
    \draw (\from) -- (\to);

\end{tikzpicture}

\caption{Example of an $A$ block with parameter $x=1$ where $d_{v}=3$ and $d_{c}=4$}\label{le:A_block_figure}

\end{figure}

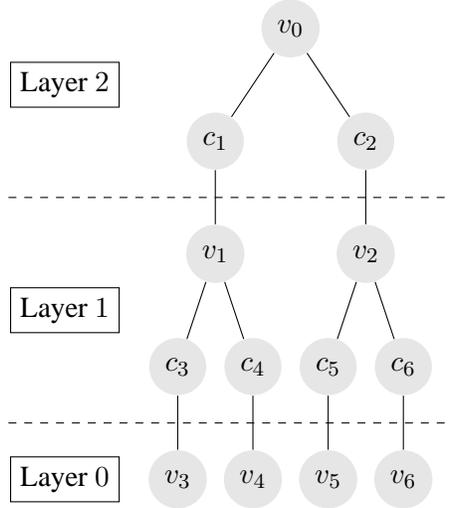
\begin{figure}[!h]

\centering

\begin{tikzpicture}
  [scale=.25,auto=left,every node/.style={circle,fill=gray!20}]
  \node (n1) at (10,28) {$v_{0}$};
  \node (n2) at (6,22)  {$c_{1}$};
  \node (n3) at (14,22)  {$c_{2}$};
  \node (n4) at (6,16)  {$v_{1}$};
  \node (n5) at (14,16) {$v_{2}$};
  \node (n6) at (4,10)  {$c_{3}$};
  \node (n7) at (8,10)  {$c_{4}$};
  \node (n8) at (12,10)  {$c_{5}$};
  \node (n9) at (16,10)  {$c_{6}$};
  \node (n10) at (4,4)  {$v_{3}$};
  \node (n11) at (8,4)  {$v_{4}$};
  \node (n12) at (12,4)  {$v_{5}$};
  \node (n13) at (16,4)  {$v_{6}$};
  \node[rectangle,draw,fill=white!20] at (-2,25) {Layer $2$};
  \node[rectangle,draw,fill=white!20] at (-2,13) {Layer $1$};
  \node[rectangle,draw,fill=white!20] at (-2,4) {Layer $0$};

  \draw[dashed] (-5,19) -- (19,19);
  \draw[dashed] (-5,7) -- (19,7);

  \foreach \from/\to in {n2/n1,n3/n1,n4/n2,n5/n3,n6/n4,n7/n4,n8/n5,n9/n5,n10/n6,n11/n7,n12/n8,n13/n9}
    \draw (\from) -- (\to);

\end{tikzpicture}

\caption{Example of a $B$ block with parameter $y=2$ where $d_{v}=3$}\label{le:B_block_figure}

\end{figure}

\begin{definition}\label{le:tight_llr_def} (Construction of $\{\gamma_{n}\}_n$)\\ 
Let $\{(V_{n},C_{n},E_{n})\}_n$ be the Tanner graph given in Definition \ref{le:tight_construction_def}. The error pattern $\gamma_{n}$ is defined by:
\begin{enumerate}
\item For every variable node $v$ in an $A$ block, $\gamma_{n}(v) = 1$.
\item For every variable node $v$ in the $B$ block, $\gamma_{n}(v) = -1$.
\item For every connecting variable node $v$, $\gamma_{n}(v) = 1$.
\end{enumerate}
\end{definition}

\begin{lemma}\label{le:tight_parameters_code} (Size of the code)\\ 
For any positive integer $n$, the Tanner graph $\{(V_{n},C_{n},E_{n})\}_n$ given in Definition \ref{le:tight_construction_def} is a $(d_{v},d_{c})$-regular code with $\Theta(n)$ variable nodes.
\end{lemma}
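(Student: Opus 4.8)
The plan is to add up the variable nodes contributed by each structural piece of the construction of Definition~\ref{le:tight_construction_def}, argue that the total is $\Theta(n)$, and then verify $(d_{v},d_{c})$-regularity node by node. First I would record the sizes of the two basic blocks. Writing $\beta=(d_{v}-1)(d_{c}-1)$ as in Definition~\ref{le:tight_construction_def}, the variable nodes of $A_{x}$ lying at check-distance $k$ from the root number $\beta^{k}$, so $A_{x}$ has $\sum_{k=0}^{x}\beta^{k}=\Theta(\beta^{x})$ variable nodes, $\Theta(\beta^{x})$ check nodes, and exactly $\beta^{x}$ leaves; likewise layer $i$ of $B_{y}$ has $(d_{v}-1)^{y-i}$ variable nodes, so $B_{y}$ has $\sum_{i=0}^{y}(d_{v}-1)^{y-i}=\Theta((d_{v}-1)^{y})$ variable nodes, $\Theta((d_{v}-1)^{y})$ check nodes, and $(d_{v}-1)^{y}$ leaves.

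Next I would sum the pieces of the tree $T_{n}$. The $d_{c}-1$ copies of $A_{y_{n}+1}$ contribute $\Theta(\beta^{y_{n}})$ variable nodes; the single $B_{y_{n}}$ contributes $\Theta((d_{v}-1)^{y_{n}})$; and since layer $i$ of $B_{y_{n}}$ has $(d_{v}-1)^{y_{n}-i+1}$ check nodes, each carrying $d_{c}-2$ copies of $A_{i}$, the $A$-blocks added in step 3 contribute $\Theta\!\big(\sum_{i=1}^{y_{n}}(d_{v}-1)^{y_{n}-i+1}\beta^{i}\big)$ variable nodes, which by $\beta=(d_{v}-1)(d_{c}-1)$ equals $\Theta\!\big((d_{v}-1)^{y_{n}+1}\sum_{i=1}^{y_{n}}(d_{c}-1)^{i}\big)=\Theta((d_{v}-1)^{y_{n}}(d_{c}-1)^{y_{n}})=\Theta(\beta^{y_{n}})$. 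Hence the number of variable nodes of $T_{n}$, and also its number of leaves $l_{n}$, is $\Theta(\beta^{y_{n}})$.

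The key arithmetic step is to turn $\Theta(\beta^{y_{n}})$ into $\Theta(n)$. Since $y_{n}=\lfloor\log_{d_{v}-1}n^{\gamma}\rfloor$ with $\gamma=\frac{\ln(d_{v}-1)}{\ln(d_{v}-1)+\ln(d_{c}-1)}$, the floor costs only a constant factor, so $(d_{v}-1)^{y_{n}}=\Theta(n^{\gamma})$ and $(d_{c}-1)^{y_{n}}=\Theta\!\big(n^{\gamma\ln(d_{c}-1)/\ln(d_{v}-1)}\big)$. Plugging in the value of $\gamma$ gives $\gamma\ln(d_{c}-1)/\ln(d_{v}-1)=1-\gamma$, whence $\beta^{y_{n}}=(d_{v}-1)^{y_{n}}(d_{c}-1)^{y_{n}}=\Theta(n^{\gamma}\cdot n^{1-\gamma})=\Theta(n)$. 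Thus $T_{n}$ has $\Theta(n)$ variable nodes and $l_{n}=\Theta(n)$ leaves, and the completion in step~\ref{le:connecting_leaves} adds $O(l_{n})=O(n)$ connecting check nodes and $O(l_{n})=O(n)$ connecting variable nodes (using the at most $d_{c}$ copies of $T_{n}$ permitted by the footnote to fix divisibility), so the whole graph still has $\Theta(n)$ variable nodes.

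Finally I would verify regularity node by node: block-internal variable nodes have degree $d_{v}$; each block root has block-internal degree $d_{v}-1$ plus one edge to its parent; every original leaf is $1$-regular and gains exactly $d_{v}-1$ edges in step~\ref{le:connecting_leaves}; $c_{0}$ has degree $d_{c}$; the internal checks of the $A$-blocks and all connecting checks are $d_{c}$-regular by construction; and---the point easiest to overlook---each check node of $B_{y_{n}}$ has block-internal degree $2$ and acquires the remaining $d_{c}-2$ edges in step 3 by being connected to roots of $A_{i}$-blocks, for total degree $d_{c}$. The main obstacle is not the regularity bookkeeping but making sure none of the genuinely sublinear pieces (the $B$-block, with only $\Theta(n^{\gamma})$ nodes, and the $O(l_{n})$ connecting nodes) dominates, and keeping the floor in $y_{n}$ and the constant blow-up from the divisibility fix from spoiling the $\Theta(n)$ count; the exponent identity $\gamma\ln(d_{c}-1)/\ln(d_{v}-1)=1-\gamma$ is exactly what forces the dominant term $\beta^{y_{n}}$ to land at $\Theta(n)$.
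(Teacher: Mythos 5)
Your proposal is correct and follows essentially the same route as the paper: both reduce the count to the dominant term $\beta^{y_n}=(d_v-1)^{y_n}(d_c-1)^{y_n}$ coming from the $A$-blocks attached to the $B$-block, use the exponent identity hidden in the choice of $\gamma$ to get $\beta^{y_n}=\Theta(n)$, and absorb the $B$-block ($\Theta(n^{\gamma})=o(n)$ nodes) and the $O(l_n)$ connecting nodes. Your version is merely more explicit than the paper's (which counts leaves rather than variable nodes and omits the node-by-node regularity check), so no further comment is needed.
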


\begin{proof}[\bf{Proof of Lemma \ref{le:tight_parameters_code}}]
It is enough to show that the number $l_n$ of leaves of $T_n$ is $O(n)$. The number of leaves of block $B_{y_n}$ is $b_n = \Theta(n^{\gamma})$. The number of leaves of block $A_y$ is $(d_{v}-1)^y$. Thus, the number of leaves in all the $A$-blocks is
\begin{align*}
a_n &= (d_c-1)(d_v-1)^{y_n+1} + (d_c-2)\displaystyle\sum\limits_{i=1}^{y_n}(d_{v}-1)^{y_n-i+1} \beta^i\\ 
&= O((d_{v}-1)^{y_n}) + O((d_{v}-1)^{y_n}\displaystyle\sum\limits_{i=1}^{y_n}(d_{c}-1)^i)\\ 
&= O(b_n+\beta^{y_n})
\end{align*}
because $(d_v-1)^{y_n} = b_n$ and $\displaystyle\sum\limits_{i=1}^{y_n}(d_{v}-1)^i = O((d_{c}-1)^{y_n})$. Since $\beta^{y_n} = \Theta(n)$ and $b_n = o(n)$, we get that $l_n = b_n + a_n = \Theta(n)$.
\end{proof}

\begin{lemma}\label{le:tight_existence_hyperflow} (Existence of a hyperflow for $\{\gamma_{n}\}_n$ on $\{(V_{n},C_{n},E_{n})\}_n$)\\ 
Let $\{(V_{n},C_{n},E_{n})\}_n$ be the Tanner graph given in Definition \ref{le:tight_construction_def} and let $\gamma_{n}$ be the error pattern given in Definition \ref{le:tight_llr_def}. Then, for every positive integer $n$, there exists a hyperflow for $\gamma_{n}$ on $(V_{n},C_{n},E_{n})$.
\end{lemma}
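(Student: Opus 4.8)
The plan is to exhibit a hyperflow $w$ for $\gamma_n$ on $(V_n,C_n,E_n)$ explicitly. Apart from the ``connecting'' check nodes created in step~\ref{le:connecting_leaves}, the graph $(V_n,C_n,E_n)$ is a tree, and the idea is to route all the flow \emph{upward} inside every $A$-block (from its $1$-regular leaves, which have $\gamma_n=1$, toward its $(d_v-1)$-regular root) and \emph{downward} inside the single $B$-block (from its root $r$ toward its $1$-regular leaves, which have $\gamma_n=-1$), the two streams meeting at the global root check $c_0$. Every connecting check $c$, together with all its incident edges and all edges incident to connecting variable nodes, gets weight $0$; then $P_c=0$ makes~(\ref{le:hyperflow_check_equation}) hold at such $c$, and the connecting variables and all the leaves satisfy~(\ref{le:dw_var_equation}) trivially, since their only possibly-nonzero incident edge is a single up- or down-edge --- provided the flow on that edge stays within the bound imposed by $\gamma_n$, which is the one substantive point, addressed below.

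For the weights, write $y=y_n$ and set $W^-=\tfrac{(d_v-1)^{y+1}-1}{d_v-2}$ and $W^+=\tfrac{(d_v-1)^{y+2}-1}{d_v-2}$, so that $W^+-W^-=(d_v-1)^{y+1}>0$; fix any real $W\in(W^-,W^+)$ and a sufficiently small $\epsilon>0$. Put $w(r,c_0)=-W$ and $w(a,c_0)=W$ for the $d_c-1$ roots $a$ of the $A_{y+1}$-blocks at $c_0$. In the $B$-block, process layers $i=y,y-1,\dots,1$: writing $Q_i$ for the (uniform) inflow reaching each layer-$i$ variable, with $Q_y:=W$, set $R_i:=\tfrac{Q_i-1-\epsilon}{d_v-1}$ and, for each layer-$i$ variable $v$, each of its $d_v-1$ layer-$i$ checks $c$, the layer-$(i-1)$ variable $v'$ hanging below $c$, and the $d_c-2$ roots $a$ of the $A_i$-blocks attached to $c$, set $w(v,c)=w(a,c)=R_i$ and $w(v',c)=-R_i$; then set $Q_{i-1}:=R_i$. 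In an $A_x$-block whose root must send flow $P$ to the outside ($P=W$ for the $A_{y+1}$'s, $P=R_i$ for the $A_i$'s attached to layer-$i$ $B$-checks), put $U_0:=P$ and $U_{j+1}:=\tfrac{U_j-1+\epsilon}{d_v-1}$ for $j=0,\dots,x-1$, orienting the internal checks so that every depth-$j$ variable emits $U_j$ along its up-edge and receives $U_{j+1}$ from each of its $d_v-1$ children checks (each of which in turn receives $U_{j+1}$ from each of its $d_c-1$ children).

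It remains to verify~(\ref{le:dw_var_equation}) and~(\ref{le:hyperflow_check_equation}). The check equation is immediate at $c_0$, at every $B$-layer check and at every internal $A$-block check, since exactly one incident edge is negative and the others all equal its negative --- this also needs the intermediate weights $R_i,U_j$ to be positive, which follows from $W>W^-$ once $\epsilon$ is small. The variable equation holds with slack exactly $\epsilon$ at every internal $A$- and $B$-variable and at every $A$-block root (for instance $(d_v-1)U_{j+1}=U_j-1+\epsilon$, hence $U_j<(d_v-1)U_{j+1}+1$), and $(d_v-1)R_y=W-1-\epsilon<W-1$ settles $r$. For the leaves, an $A$-leaf requires $U_x<1$ and a $B$-leaf requires $Q_0>1$; unrolling the two geometric recursions (using $\tfrac{(d_v-1)^k-1}{d_v-2}=(d_v-1)^{k-1}+\tfrac{(d_v-1)^{k-1}-1}{d_v-2}$) gives, up to $O(\epsilon)$,
\begin{equation*}
Q_0=\frac{W-\frac{(d_v-1)^{y}-1}{d_v-2}}{(d_v-1)^{y}},\qquad U_x=\frac{P-\frac{(d_v-1)^{x}-1}{d_v-2}}{(d_v-1)^{x}}=\frac{W-W^-}{(d_v-1)^{y+1}},
\end{equation*}
the last equality holding for \emph{every} $A$-block after substituting the value $P=R_i$ obtained by unrolling the $B$-recursion. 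Thus all the $A$-leaf conditions collapse to $W<W^+$ (and the needed positivity $U_x>0$ to $W>W^-$), while the $B$-leaf condition is $Q_0>1$, i.e.\ $W>W^-$; both hold because $W\in(W^-,W^+)$, for $\epsilon$ small enough.

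The one delicate point, and the main obstacle, is the bookkeeping in this last step: a priori each $A_i$-block hangs at a different depth of the $B$-block and is fed a different flow $R_i$, so one might fear a separate leaf condition per block; the crux is the telescoping cancellation that makes $U_x$ equal to the single quantity $\tfrac{W-W^-}{(d_v-1)^{y+1}}$ for all of them. Once that is in hand, the single parameter $\epsilon$ disposes of all the strict inequalities at once and the nonemptiness of the window $(W^-,W^+)$ finishes the proof.
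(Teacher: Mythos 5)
Your construction is correct and follows essentially the same route as the paper's proof: flow oriented up each $A$ block and down the $B$ block, zero weights on all connecting nodes, and an $\epsilon$ slack at every variable node. The only real difference is the parametrization — the paper fixes the leaf emissions at $1-\epsilon$ and the $B$-root injection at $w_{y_n}$, so its $B$-block checks satisfy only the dual-witness inequality $r_{i+1}\geq w_i$ and a hyperflow is then extracted via Theorem \ref{le:existenceAWDAG}, whereas you propagate a single top parameter $W\in(W^-,W^+)$ downward so that (\ref{le:hyperflow_check_equation}) holds with exact equality at every check; your telescoping identity $U_x=(W-W^-)/(d_v-1)^{y+1}$ for all $A$ blocks checks out.
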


\begin{proof}[\bf{Proof of Lemma \ref{le:tight_existence_hyperflow}}]
Let $\epsilon > 0$. We will further specify $\epsilon$ at the end of the proof. Consider the following assignment of weigths to edges of $E_{n}$:
\begin{enumerate}
\item In every $A$ block, the edges are directed toward the root of the block. The edges outgoing from the leaves have weight $1-\epsilon$. For every check node, the weight of the outgoing edge is equal to the common weight of its incoming edges. For each variable node, the sum of the weights of the outgoing edges is equal to the sum of the weights of the incoming edges plus $1-\epsilon$. Thus, the weight of the edge outgoing from the root of the $A_x$ block is
\begin{equation*}
r_x = (1-\epsilon)\displaystyle\sum\limits_{t=0}^{x}(d_{v}-1)^{t} = (1-\epsilon)\frac{(d_{v}-1)^{x+1}-1}{d_{v}-2}
\end{equation*}
\item In the $B$ block, the edges are directed toward the leaves. The edge connecting $c_0$ to the root of block $B$ has weight $w_{y_n}$ where for any $i \in \{0,\dots,y_n\}$:
\begin{equation*}
w_i \defeq (1+\epsilon) \displaystyle\sum\limits_{j=0}^{i}(d_{v}-1)^{j} = (1+\epsilon)\frac{(d_{v}-1)^{i+1}-1}{d_{v}-2}
\end{equation*}
For every internal variable node $v$, the weight of each outgoing edge from $v$ is $\frac{z-(1+\epsilon)}{d_{v}-1}$ where $z$ is the weight of the edge incoming to $v$. For every internal check node $c$, the weight of the edge outgoing from $c$ is equal to the weight of the edge incoming to $c$. By induction on the layer index $i=y_n,y_{n-1},\dots,0$, for every variable node $v$ in layer $i$, the weight of its incoming edge is $w_i$ and (if $v$ is not a leaf) the weight of each of its outgoing edges is $w_{i-1}$ (since $w_i$ satisfies the recurrence $w_{i-1} = \frac{w_i-(1+\epsilon)}{d_{v}-1}$ for all $i = y_n,y_{n-1},\dots,1$). 
\item All edges adjacent to connecting check or variable nodes have weight zero.
\end{enumerate}
By construction, the weights satisfy the dual witness equations (\ref{le:dw_var_equation}) and (\ref{le:dw_check_equation}) for all check and variable nodes in $A$ blocks, all internal variable nodes in the $B$ block and all the connecting check and variable nodes. To guarantee that equations (\ref{le:dw_var_equation}) and (\ref{le:dw_check_equation}) hold for the root check node $c_0$, we need that $r_{y_{n+1}} \geq w_{y_n}$. To guarantee them for the internal check nodes of the $B$ block, we need that $r_{i+1} \geq w_{i}$ for all $i = y_{n}-1,\dots,1$. To guarantee them for the leaves of the $B$ block, we need that $w_{0}-1 > 0$, which holds since $w_{0} = 1+\epsilon$. Thus, for every $i=y_{n},y_{n-1},\dots,1$, we need that $r_{i+1} \geq w_i$, i.e.
\begin{equation*}
(1-\epsilon)\frac{(d_{v}-1)^{i+2}-1}{d_{v}-2} \geq (1+\epsilon)\frac{(d_{v}-1)^{i+1}-1}{d_{v}-2}
\end{equation*}
which can be guaranteed by letting $0<\epsilon<1-\frac{2}{d_v}$.
\end{proof}

\begin{lemma}\label{le:lower_bound_any_hyperflow} (Lower bound for any hyperflow for $\{\gamma_{n}\}_n$ on $\{(V_{n},C_{n},E_{n})\}_n$)\\ 
For any positive integer $n$, any WDAG $(V_{n},C_{n},E_{n},w,\gamma_{n})$ corresponding to a hyperflow for $\gamma_{n}$ on $(V_{n},C_{n},E_{n})$ must have
\begin{equation*}
\underset{e \in E_{n}}{\operatorname{max}}{ |w(e)| } \geq c n^{\frac{\ln(d_{v}-1)}{\ln(d_{v}-1)+\ln(d_{c}-1)}}
\end{equation*}
for some constant $c>0$.
\end{lemma}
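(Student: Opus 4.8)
The plan is to exhibit, in an arbitrary hyperflow $w$ for $\gamma_n$, a single edge whose weight is $\Omega(n^{\gamma})$, the natural candidate being the edge joining the root check $c_0$ to the root variable $v^B_0$ of the $B$ block (or, after the tree transformation below, an edge feeding the sink).

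First I would apply Theorem~\ref{le:transformation} to the WDAG $(V_n,C_n,E_n,w,\gamma_n)$ restricted to the ancestors of $v^B_0$, obtaining a root-oriented tree $T$ with root a replicate of $v^B_0$; by parts~\ref{le:add_up_item}, \ref{le:sign_conservation}, \ref{le:bijection_item} and \ref{le:same_weight_sink} of that theorem the sign pattern of $\gamma_n$ is preserved (so replicates of $B$-variables still carry negative values summing to $-1$ per original node), the hyperflow equations still hold, the $B$ block unfolds into an honest rooted sub-tree of $T$, and the weight of the edge feeding $v^B_0$ from $c_0$ is unchanged. It thus suffices to lower bound this edge weight. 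The point of the reduction is that on a tree the flow accounting becomes a clean downward recursion.

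Now I would argue that the $B$-tree forces this downward flow to be large. Every replicate of a $B$-variable has $\gamma'<0$ and so, by Equation~\ref{le:dw_var_equation}, needs strictly positive net in-flow; by Equation~\ref{le:hyperflow_check_equation} each internal $B$-check has at most one out-going edge, and the pairwise relation implicit in \ref{le:hyperflow_check_equation} forces the $B$-variable neighbour it does \emph{not} feed to push at least as much flow into the check as the check delivers. Hence every unit of flow arriving at a $B$-variable through a $B$-internal check is "charged" to the $B$-variable on the other side of that check, and the only nodes able to absorb such charges without relaying them are the roots of the $A_i$ blocks attached to that check — whose total throughput is capped at $\Theta((d_v-1)^i)$ per block by exactly the calculation behind Theorem~\ref{le:unified_theorem}. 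Compounding these charges down the $y_n+1$ layers of the $B$-tree, whose layer $0$ already has $(d_v-1)^{y_n}=\Theta(n^{\gamma})$ leaves, yields $|w(v^B_0,c_0)| \ge c\,(d_v-1)^{y_n}$ for a constant $c>0$ depending only on $d_v$. Since $(d_v-1)^{y_n}=\Theta(n^{\gamma})$ with $\gamma = \tfrac{\ln(d_v-1)}{\ln(d_v-1)+\ln(d_c-1)}$ and $n$ is, up to constants, the number of variable nodes by Lemma~\ref{le:tight_parameters_code}, this gives $\max_{e\in E_n}|w(e)| \ge c\,n^{\gamma}$.

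The main obstacle is the charging argument of the previous paragraph: one must set it up so that (a) every unit delivered to any $B$-variable is traced either back to the top edge $(c_0,v^B_0)$ or to an $A_i$ block; (b) the $A_i$ blocks' bounded throughput genuinely prevents the demand from being diffused — in particular the fact that routing an $A_i$ block's output up to help layer $i$ only forces the $B$-variable above it to over-produce, so the deficit cannot be shed; and (c) the factor $(d_v-1)^{y_n}$ emerges by iterating the "pass at least the previous layer's flow minus one unit, split $(d_v-1)$-fold" recursion down the $B$-tree. A purely cut-based count of the flow crossing the boundary of $V(B)$ only yields $\max_e|w(e)|=\Omega(1)$, so it is essential to use acyclicity together with the $A$-block throughput bound rather than a cut; getting this bookkeeping right is the technical heart of the lower bound.
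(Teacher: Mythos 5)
Your high-level plan --- force a flow of more than $b_n=\Theta(n^{\gamma})$ onto the single edge joining $c_0$ to the root of the $B$ block --- is the paper's, but three of your steps do not work as written. First, the reduction via Theorem \ref{le:transformation} is oriented the wrong way: that theorem (applied to the construction of Definition \ref{le:Gmax}) collects the \emph{ancestors} of a sink, i.e.\ the nodes whose flow converges into it, whereas here the flow leaves $c_0$, passes through $v_0^B$ and fans out \emph{down} the $B$ block to its leaves. The ancestors of $v_0^B$ therefore contain $c_0$ and the $A_{y_n+1}$ blocks but none of the rest of the $B$ block, so the tree $T$ you build does not contain the structure your charging argument needs; the claim that ``the $B$ block unfolds into an honest rooted sub-tree of $T$'' is false. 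The transformation is also unnecessary, since the $B$ block is already a tree by construction and the paper argues directly on the WDAG. Second, the $A_i$-block ``throughput cap'' via Theorem \ref{le:unified_theorem} is a red herring. By (\ref{le:hyperflow_check_equation}), if a layer-$i$ check delivers $P_c$ to its layer-$(i-1)$ child, then \emph{every} other neighbour --- the layer-$i$ parent variable and each attached $A_i$ root alike --- must push exactly $P_c$ into that check. The $A$ blocks therefore cannot absorb or shed any of the demand no matter how much they can carry, and no quantitative bound on them is needed; your own sentence about the ``$B$-variable neighbour it does not feed'' already contains the entire mechanism, and grafting the throughput bound onto it suggests the mechanism has not been pinned down.

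Third, and most importantly, the base case of the induction is missing. A leaf of the $B$ block has $\gamma_n=-1$, and its neighbours are its parent check together with connecting checks from step \ref{le:connecting_leaves} of Definition \ref{le:tight_construction_def}; a priori a leaf could be fed entirely through a connecting check, in which case no demand would propagate upward and the bound would fail. The construction rules this out by insisting that every connecting check adjacent to a $B$-leaf is adjacent to at least \emph{two} $B$-leaves: by (\ref{le:hyperflow_check_equation}) such a check can feed at most one of them while extracting the same amount from another, so the aggregate net flow from connecting checks into the set of leaves is at most zero. Summing (\ref{le:dw_var_equation}) over the $b_n$ leaves then forces more than $b_n$ of net flow to enter the leaves from their parent checks, and the induction up the $y_n$ layers (using the second point above at each check and $\gamma_n=-1$ at each internal $B$-variable) concentrates more than $b_n$ on the edge from $c_0$ into the root of the $B$ block. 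You must state and use this property of the connecting checks explicitly; without it the argument does not get off the ground.
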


\begin{proof}[\bf{Proof of Lemma \ref{le:lower_bound_any_hyperflow}}]
Let $(V_{n},C_{n},E_{n},w,\gamma_{n})$ be a WDAG corresponding to a hyperflow for $\gamma_{n}$ on $(V_{n},C_{n},E_{n})$. Since $\gamma_n(v) = -1$ for every leaf $v$ of the $B$ block (which has $b_n$ leaves) and since each connecting check node adjacent to a leaf of the $B$ block is connected to at least two leaves of the $B$ block, there should be a flow of total value larger than $b_n$ from the non-leaf and non-connecting nodes of the $B$ block to its leaves. Applying the same argument inductively and using the fact that for every variable node $v$ of the $B$ block $\gamma_n(v)=-1$, we get that all the edges of the $B$ block should be oriented toward its leaves and that there should be a flow of value larger than $b_n$ entering the root of the $B$ block. Thus, the edge connecting $c_{0}$ to the root of the $B$ block should be oriented toward the $B$ block and should have value larger than $b_n = \Theta(n^{\frac{\ln(d_{v}-1)}{\ln(d_{v}-1)+\ln(d_{c}-1)}})$.
\end{proof}

\begin{proof}[\bf{Proof of Theorem \ref{le:asymptotictightness}}]
Follows from Lemmas \ref{le:tight_parameters_code}, \ref{le:tight_existence_hyperflow} and \ref{le:lower_bound_any_hyperflow}.
\end{proof}

\bibliographystyle{alpha}
\bibliography{LP_scc_1.bib}

\newcommand{\etalchar}[1]{$^{#1}$}
\begin{thebibliography}{DDKW08}

\bibitem[AB06]{aliprantis2006infinite}
C.D. Aliprantis and K.C. Border.
\newblock {\em Infinite dimensional analysis: a hitchhiker's guide}.
\newblock Springer, 2006.

\bibitem[BMVT78]{berlekamp1978inherent}
E.~Berlekamp, R.~McEliece, and H.~Van~Tilborg.
\newblock On the inherent intractability of certain coding problems (corresp.).
\newblock {\em Information Theory, IEEE Transactions on}, 24(3):384--386, 1978.

\bibitem[Bur11]{Davbur}
David Burshtein.
\newblock Private communication, 2011.

\bibitem[DDKW08]{daskalakis2008probabilistic}
C.~Daskalakis, A.G. Dimakis, R.M. Karp, and M.J. Wainwright.
\newblock Probabilistic analysis of linear programming decoding.
\newblock {\em Information Theory, IEEE Transactions on}, 54(8):3565--3578,
  2008.

\bibitem[DJM12]{donoho2012information}
D.L. Donoho, A.~Javanmard, and A.~Montanari.
\newblock Information-theoretically optimal compressed sensing via spatial
  coupling and approximate message passing.
\newblock In {\em Information Theory Proceedings (ISIT), 2012 IEEE
  International Symposium on}, pages 1231--1235. IEEE, 2012.

\bibitem[FMS{\etalchar{+}}07]{feldman2007lp}
J.~Feldman, T.~Malkin, R.A. Servedio, C.~Stein, and M.J. Wainwright.
\newblock Lp decoding corrects a constant fraction of errors.
\newblock {\em Information Theory, IEEE Transactions on}, 53(1):82--89, 2007.

\bibitem[FWK05]{feldman2005using}
J.~Feldman, M.J. Wainwright, and D.R. Karger.
\newblock Using linear programming to decode binary linear codes.
\newblock {\em Information Theory, IEEE Transactions on}, 51(3):954--972, 2005.

\bibitem[Gal62]{gallager1962low}
R.~Gallager.
\newblock Low-density parity-check codes.
\newblock {\em Information Theory, IRE Transactions on}, 8(1):21--28, 1962.

\bibitem[HE12]{halabi2012linear}
N.~Halabi and G.~Even.
\newblock Linear-programming decoding of tanner codes with local-optimality
  certificates.
\newblock In {\em Information Theory Proceedings (ISIT), 2012 IEEE
  International Symposium on}, pages 2686--2690. IEEE, 2012.

\bibitem[JFZ99]{jimenez1999time}
A.~Jimenez~Felstrom and K.S. Zigangirov.
\newblock Time-varying periodic convolutional codes with low-density
  parity-check matrix.
\newblock {\em Information Theory, IEEE Transactions on}, 45(6):2181--2191,
  1999.

\bibitem[KMS{\etalchar{+}}12]{krzakala2012statistical}
F.~Krzakala, M.~M{\'e}zard, F.~Sausset, YF~Sun, and L.~Zdeborov{\'a}.
\newblock Statistical-physics-based reconstruction in compressed sensing.
\newblock {\em Physical Review X}, 2(2):021005, 2012.

\bibitem[KRU11]{kudekar2011threshold}
S.~Kudekar, T.J. Richardson, and R.L. Urbanke.
\newblock Threshold saturation via spatial coupling: Why convolutional ldpc
  ensembles perform so well over the bec.
\newblock {\em Information Theory, IEEE Transactions on}, 57(2):803--834, 2011.

\bibitem[KRU12]{kudekar2012spatially}
S.~Kudekar, T.~Richardson, and R.~Urbanke.
\newblock Spatially coupled ensembles universally achieve capacity under belief
  propagation.
\newblock In {\em Information Theory Proceedings (ISIT), 2012 IEEE
  International Symposium on}, pages 453--457. IEEE, 2012.

\bibitem[KV03]{koetter2003graph}
R.~Koetter and P.O. Vontobel.
\newblock Graph-covers and iterative decoding of finite length codes.
\newblock In {\em Proceedings of the IEEE International Symposium on Turbo
  Codes and Applications}, pages 75--82. Citeseer, 2003.

\bibitem[OU11]{olmos2011scaling}
P.M. Olmos and R.~Urbanke.
\newblock Scaling behavior of convolutional ldpc ensembles over the bec.
\newblock In {\em Information Theory Proceedings (ISIT), 2011 IEEE
  International Symposium on}, pages 1816--1820. IEEE, 2011.

\bibitem[Sch98]{schrijver1998theory}
A.~Schrijver.
\newblock {\em Theory of linear and integer programming}.
\newblock Wiley, 1998.

\bibitem[TZF07]{tavares2007tail}
M.~Tavares, K.S. Zigangirov, and G.P. Fettweis.
\newblock Tail-biting ldpc convolutional codes.
\newblock In {\em Information Theory, 2007. ISIT 2007. IEEE International
  Symposium on}, pages 2341--2345. IEEE, 2007.

\end{thebibliography}

\end{document}